\documentclass[11pt]{article}

\usepackage{graphicx}
\usepackage{changes}
\usepackage[margin=1in]{geometry}
\usepackage[T1]{fontenc}
\usepackage[dvipsnames]{xcolor}
\usepackage[english]{babel}
\usepackage{amsmath,amsfonts,amssymb,amsthm}
\usepackage{mathtools}
\usepackage{authblk}
\usepackage[hidelinks]{hyperref}
\usepackage{accents}
\usepackage{pdflscape}
\usepackage{listings}
\usepackage{chngcntr}
\usepackage{authblk}
\usepackage{bm}
\hypersetup{
    colorlinks=false,
    linkcolor=red,
    filecolor=magenta,
    citecolor=blue,
    urlcolor=blue,
}
\usepackage{natbib}
\usepackage{tikz}
\usetikzlibrary{calc}
\usepackage{pgfplots}
\usepackage[most]{tcolorbox}
\usepackage[toc,page]{appendix}
\usepackage{minitoc}

\pgfplotsset{compat=1.18}

\date{}
\author{Herbert P. Susmann*}
\author{Alec McClean}
\author{Iv\'{a}n D\'{i}az}

\affil[]{\small Division of Biostatistics, Department of Population
  Health, New York University Grossman School of Medicine, New York, USA\\
  *Corresponding author: \href{susmah01@nyu.edu}{susmah01@nyu.edu}
  }

\bibliographystyle{apalike}

\providecommand{\keywords}[1]{\textbf{\textit{Keywords: }} #1}

\mathtoolsset{showonlyrefs}

\usepackage[skip=\medskipamount]{parskip}

\newtheorem{lemma}{Lemma}
\newtheorem{proposition}{Proposition}

\newtheorem{assumption}{Assumption}
\newtheorem{theorem}{Theorem}
\newtheorem{property}{Property}
\newtheorem{algorithm}{Algorithm}

\theoremstyle{remark}
\newtheorem{remark}{Remark}


\def\exampletext{Example \thetcbcounter} 

\NewDocumentEnvironment{example}{ O{} }
{
\colorlet{colexam}{red!55!black} 
\newtcolorbox[use counter=example]{examplebox}{%
    empty,
    title={\exampletext: #1},
    attach boxed title to top left,
       minipage boxed title,
    boxed title style={empty,size=minimal,toprule=0pt,top=4pt,left=3mm,overlay={}},
    coltitle=colexam,fonttitle=\bfseries,
    before=\par\medskip\noindent,parbox=false,boxsep=0pt,left=3mm,right=0mm,top=2pt,breakable,pad at break=0mm,
       before upper=\csname @totalleftmargin\endcsname0pt, 
    overlay unbroken={\draw[colexam,line width=.5pt] ([xshift=-0pt]title.north west) -- ([xshift=-0pt]frame.south west); },
    overlay first={\draw[colexam,line width=.5pt] ([xshift=-0pt]title.north west) -- ([xshift=-0pt]frame.south west); },
    overlay middle={\draw[colexam,line width=.5pt] ([xshift=-0pt]frame.north west) -- ([xshift=-0pt]frame.south west); },
    overlay last={\draw[colexam,line width=.5pt] ([xshift=-0pt]frame.north west) -- ([xshift=-0pt]frame.south west); },%
    }
\begin{examplebox}}
{\end{examplebox}\endlist}

\DeclareMathOperator*{\argmin}{\arg\!\min}
\DeclareMathOperator*{\argmax}{\arg\!\max}
\DeclareMathOperator*{\var}{var}

\newcommand{\E}{\mathbb{E}}
\newcommand{\bbP}{\mathbb{P}}

\newcommand{\Var}{\mathsf{Var}}

\newcommand{\I}{\mathbb{I}} 

\newcommand{\ate}{\psi}
\newcommand{\smoothATE}{\ate_{s}}
\newcommand{\smoothL}{L_s}

\newcommand{\smoothU}{U_s}

\newcommand{\lcb}{\left\{}
\newcommand{\rcb}{\right\}}

\newcommand{\model}{\mathcal{P}}

\newcommand{\pscore}{\pi}
\newcommand{\omodel}{\mu}

\newcommand{\smooth}{s}

\newcommand{\condindep}{\mathrel{\text{\scalebox{1.07}{$\perp\mkern-10mu\perp$}}}}

\newcommand{\eif}{\varphi}

\title{Non-overlap Average Treatment Effect Bounds}
\date{} 

\begin{document}

\maketitle

\begin{abstract}
    \normalsize
    The average treatment effect (ATE), the mean difference in potential outcomes under treatment and control, is a canonical causal effect. \textit{Overlap}, which says that all subjects have non-zero probability of either treatment status, is necessary to identify and estimate the ATE. When overlap fails, the standard solution is to change the estimand, and target a trimmed effect in a subpopulation satisfying overlap. When the outcome is bounded, we demonstrate that this compromise is unnecessary. We derive \textit{non-overlap bounds}: partial identification bounds on the ATE that do not require overlap. The bounds have width proportional to the size of the non-overlap subpopulation, making them informative in common scenarios when overlap violations are limited. Since the bounds are non-smooth functionals, we derive smooth approximations amenable to semiparametric efficiency theory and propose a Targeted Minimum Loss-Based estimator that is $\sqrt{n}$-consistent and asymptotically normal under nonparametric conditions. A multiplier bootstrap procedure yields uniformly valid confidence sets across all non-overlap subpopulation sizes and smoothing parameters, allowing researchers to report the tightest valid interval. Formally, we compare non-overlap confidence intervals to confidence intervals based on point estimation across multiple overlap regimes. We illustrate the method via simulation studies and real-world data applications.
\end{abstract}

\keywords{causal inference, average treatment effect, positivity violations, targeted minimum loss-based estimation}

\doparttoc 
\faketableofcontents 

\renewcommand \thepart{}
\renewcommand \partname{}

\section{Introduction}
A paradigmatic target of counterfactual causal inference is the \textit{average treatment effect} (ATE), defined as the population difference in potential outcomes under assignment to treatment vs control \citep{rosenbaum1983propensity, Robins86}. Identifying the ATE requires three foundational assumptions. The first assumption, \textit{consistency} (also known as the \textit{stable unit treatment value assumption}, or \textit{SUTVA}) requires that each unit's observed outcome is equal to their potential outcome for their observed treatment status. The second assumption, \textit{conditional exchangeability} (also known as \textit{no unmeasured confounding} and \textit{selection on observables}), requires that all common causes of the treatment and outcome are measured. The third assumption, \textit{overlap} (also known as \textit{positivity} or \textit{common support}), requires that all units have a propensity score (the probability of receiving treatment given covariates) that is neither zero nor one. Structural violations of overlap occur when there exists a subpopulation with propensity score that is zero or one, preventing identification of the ATE. In other cases, units may have a small yet non-zero probability of receiving either treatment status. In finite samples, such practical violations of the overlap assumption manifest as strata of covariates with few or zero observed treated (or control) units \citep{zhu2021positivity}. When overlap holds structurally, but is not satisfied in practice, then the ATE is identified, yet statistical estimators may exhibit poor finite-sample performance  \citep{petersen2012positivity}.

A common strategy when faced with overlap violations (either structural or practical) is to abandon the ATE in favor of an alternative estimand that either does not require overlap or requires a weaker form of overlap \citep{petersen2012positivity}. Moving to effects defined by alternative counterfactual interventions is one way to avoid the requirement of overlap; for example, effects defined in terms of incremental propensity score interventions \citep{kennedy2019nonparametric}. Alternatively, the target population can be changed to a subpopulation in which overlap is satisfied, referred to  as the \textit{overlap} or \textit{equipoise} population \citep{greifer2023choosingcausalestimandpropensity}. Methods for estimating the treatment effect within the overlap subpopulation include propensity score trimming \citep{crump2009overlap, mcclean2025propensity} and cardinality matching \citep{visconti2018cardinality}. Overlap balancing weights \citep{li2018balancing} target the overlap population by smoothly down-weighting observations which do not satisfy overlap. The shortcoming of pivoting to an alternative estimand is that the original object of inquiry, the ATE, is abandoned \citep{rizk2025overlap}. Furthermore, popular alternative estimands are typically less interpretable than the ATE. For example, propensity score trimming targets the overlap subpopulation, which can be difficult to interpret particularly in high dimensions; indeed, methods for characterizing the overlap subpopulation is itself an active area of research   \citep{traskin2011trees,karavani2019positivitydetection, wolf2021positivitydetection}.

In this work, we propose a novel approach for identification and estimation of partial identification bounds for the ATE that does not require overlap and does not require a priori knowledge of the overlap population. Our approach makes three contributions: we derive partial identification bounds that do not require overlap, develop $\sqrt{n}$-rate estimators with uniform inference guarantees, and compare our approach theoretically and via simulations to traditional methods based on point estimators of the ATE.

\paragraph{Contribution 1: Identification.} We derive \textit{non-overlap bounds} for the ATE, identified only assuming consistency, conditional exchangeability and outcome boundedness, \emph{but not overlap}. The bounds combine a point estimate of the treatment effect within an overlap subpopulation with worst-case bounds on the complementary non-overlap subpopulation, defined in terms of a propensity score threshold. The bounds are informative whenever the non-overlap subpopulation is small relative to the treatment effect in the overlap subpopulation.

\paragraph{Contribution 2: Estimation and inference.} The non-overlap bounds are non-smooth functionals of the propensity score. We derive smooth approximations of the bounds that are amenable to semiparametric efficiency theory and propose a Targeted Minimum Loss-based Estimator (TMLE) that is $\sqrt{n}$-consistent and asymptotically normal under nonparametric conditions. We develop a multiplier bootstrap approach for constructing uniform confidence sets across a grid of propensity score thresholds, allowing valid inference based on the narrowest interval achieved over the grid. The narrowest interval width decomposes into an identification gap, smooth approximation error, and estimation uncertainty, revealing a tradeoff that, when balanced optimally, can yield intervals tighter than traditional point estimator confidence intervals under realistic overlap regimes. 

\paragraph{Contribution 3: Finite-sample performance.} In simulations designed to exhibit practical overlap violations, non-overlap bounds achieve higher power, shorter uncertainty widths, and better empirical coverage than traditional point estimator confidence intervals. We illustrate the method through an analysis of the six observational datasets, including the causal effect of right heart catheterization on mortality \citep{murphy1990support, connors1996rhc}. 

\paragraph{Prior Work}
One approach to addressing non-identifiability emphasizes the derivation of \textit{partial identification bounds} that bracket the counterfactual parameter of interest under a weaker set of assumptions \citep{richardson2014nonparametric}. For the ATE, \cite{manski1990nonparametricbounds} and \cite{Robins89} independently derived bounds (sometimes referred to as \textit{Manski bounds}) for bounded outcomes that require neither conditional exchangeability nor overlap. However, their bounds are non-informative in the sense that they always include zero, never ruling out the possibility of a null treatment effect. Sharper bounds were subsequently derived under alternative conditions, such as monotonicity in treatment or response \citep{manski1997monotone, manski2000monotonic}. Focusing on the conditional exchangeability assumption, there has been significant interest in the development of bounds on the ATE in the presence of unmeasured confounding under various sensitivity models \citep{cornfield1959cancer, schlesselman1978confounding, rosenbaum1983propensity, lin1998confounding, nabi2024sensitivity, vanderweele2011sensitivity, diaz2013assessing, bonvini2022sensitivity}. For non-overlap, there are far fewer proposals. \cite{lee2021bounding} propose bounds on the ATE based on an assumption that also relaxes consistency. Under the standard consistency assumption, their bounds reduce to non-informative Manski bounds. Closer to our work, \citealt{lu2025addressingpositivityviolationsextending} propose a sensitivity analysis for transported causal effects from a source to a target population. Their approach is conceptually similar to ours, in that they decompose the target population into overlap and non-overlap subpopulations, and bound the effect size in the non-overlap subpopulation in a sensitivity analysis. Similarly, \citealt{zivich2024synthesis} and \citealt{zivich2025accountingmissingdatapublic} divide the population into overlap and non-overlap subpopulations, and find bounds based on assuming a deterministic model in the non-overlap region. Our work differs from these approaches in the application of nonparametric efficiency theory to analyze our proposed bounds, and in providing a formal comparison of non-overlap confidence intervals to traditional methods across several overlap regimes.

Another area of related work derives bounds based on extrapolating the outcome in non-overlap regions under smoothness assumptions or shape constraints placed on the conditional mean outcome functional \citep{armstrong2021optimal}. In the context of off-policy learning, \citealt{khan2024offpolicy} derive bounds under Lipschitz continuity of the outcome model. Interestingly, they briefly describe bounds based on outcome boundedness, which resemble our proposed approach. A key difference in our approach is that we focus on cases where both the outcome conditional mean and propensity score models are estimated, and derive estimators that are double-robust.

A complementary approach to ours is to study the properties of point estimators of the ATE under overlap violations. \citep{ma2020ipw} study the asymptotic distribution of inverse probability weighted (IPW) estimators under various overlap regimes, and \citep{dorn2025weakoverlapdoublyrobust} show that a clipped augmented inverse propensity weighted (AIPW) estimator with a data-dependent threshold achieves asymptotic normality and well-calibrated confidence intervals even in regimes where the ATE is identifiable, but the semiparametric efficiency bound is infinite and $\sqrt{n}$-consistent estimation is not possible. We follow this work in analyzing our approach under the same slowly varying tails model (Assumption~\ref{assumption:slowly-varying-tails}). However, our approach is fundamentally different: rather than targeting a point estimate at a slower-than-$\sqrt{n}$ rate, we derive partial identification bounds that are identifiable even when overlap fails to hold, and whose smooth approximations can be estimated at $\sqrt{n}$-rates regardless of the overlap regime at the cost of a non-vanishing identification gap. 

The remainder of the manuscript proceeds as follows. In Section~\ref{sec:background}, we introduce notation and formally define the ATE in terms of the potential outcomes framework for causal inference. Non-overlap bounds for the ATE, and their smooth approximations, are derived in Section~\ref{sec:bounds}. An estimator based on TMLE and a multiplier bootstrap based method for uniform confidence sets is given in Section~\ref{sec:estimation}, and a theoretical comparison of non-overlap confidence intervals to traditional point estimator confidence intervals across several overlap regimes is in Section~\ref{sec:comparison-wald}. A simulation study is presented in Section~\ref{sec:simulation-study-1}, and an illustrative application studying the effect of right heart catheterization on mortality in Section~\ref{section:application}. We conclude with a discussion in Section~\ref{sec:discussion}. 

\section{Background}
\label{sec:background}

Suppose we observe $n$ i.i.d. copies $Z_1, \dots, Z_n$ of the generic random variable $Z \in \mathcal{Z}$ drawn from the law $\bbP$ falling in the nonparametric statistical model $\model$; that is, $Z \sim \bbP \in \model$. Let $Z = (X, A, Y)$, where $X$ is a vector of covariates, $A \in \{0, 1\}$ a binary treatment indicator, and $Y \in \mathbb{R}$ is an outcome.  Let $Y^0$ and $Y^1$ be the potential outcomes under counterfactual treatment assignments $A = 0$ and $A = 1$, respectively \citep{imbens2004nonparametric}. We define the \textit{propensity score} as $\pscore \equiv \pscore(X) \equiv \bbP(A = 1 \mid X)$ and the \textit{outcome model} as $\omodel_a \equiv \omodel_a(X) \equiv \E \left( Y \mid A = a, X \right)$ for $a \in \{0, 1 \}$, where $\E$ denotes expectation with respect to $\bbP$. We let $\eta = (\pscore, \omodel_0, \omodel_1)$ be the set gathering the nuisance functions. Throughout this paper, unless necessary for clarity, we will omit covariate arguments so that, e.g., $\bbP ( \pi \leq c) \equiv \bbP \{ \pi(X) \leq c \}$.

The Average Treatment Effect (ATE) is defined in terms of potential outcomes as $\psi = \E\left( Y^1 - Y^0 \right)$. Point identification of the ATE requires the following well-known set of assumptions:

\medskip

\begin{assumption}[Consistency]
    \label{assumption:consistency}
    $Y = Y^A$. 
\end{assumption}

\medskip

\begin{assumption}[No unmeasured confounding]
    \label{assumption:conditional-exchangeability}
    $Y^a \condindep A \mid X$. 
\end{assumption}

\medskip

\begin{assumption}[Overlap]
    \label{assumption:weak-positivity}
    It holds $\bbP$-almost everywhere that $0 < \pscore < 1$.
\end{assumption}

\medskip 

\noindent Under Assumptions~\ref{assumption:consistency}, \ref{assumption:conditional-exchangeability}, and \ref{assumption:weak-positivity}, the ATE is identified by the \textit{g-formula} \citep{hernan2020whatif}:
\begin{align}
    \psi = \E \lcb \E \left( Y \mid A = 1, X \right)  - \E \left( Y \mid A = 0, X \right) \rcb \equiv \E \lcb \omodel_1  - \omodel_0 \rcb. 
\end{align}
\begin{remark}[Structural vs practical overlap violations]
    Crucially, the g-formula identification result requires overlap (Assumption~\ref{assumption:weak-positivity}). \emph{Structural} overlap violations occur when $\bbP\{ \pscore = 0 \} > 0$ or $\bbP \{\pscore = 1\} > 0$. In this case, the g-formula does not hold, because $\E \{ \omodel_1 - \omodel_0 \}$ is not defined. \emph{Practical} overlap violations occur when there exists some $X$ such that $\pscore(X) \in (0,1)$, but where $\pscore(X) \approx 0$ or $\pscore(X) \approx 1$. In a specific sample, observations within strata $X$ will be unbalanced, in the sense that most or all subjects will take treatment $A=1$ (if $\pscore \approx 1$) or control $A=0$ (if $\pscore \approx 0)$. As a result, although practical overlap violations do not prevent identification, they can have a profound effect on the finite-sample statistical performance of estimators, which will typically have very large variance. 
\end{remark}

\paragraph{Notation} The indicator function is denoted $\I(a)$, equal to one when the predicate $a$ is true and zero otherwise. We denote convergence in distribution by $\leadsto$ and convergence in probability by $\stackrel{p}{\to}$. For a function $z \mapsto f(z)$, we write $\| f \|_\infty = \sup_z | f(z) |$ to denote the uniform norm, $\| f \| = \sqrt{\int f(z)^2 d\bbP(z)}$ to denote the $L_2(\bbP)$-norm,  $\E(f) = \int f(z) d\bbP$ the expectation of $f$ with respect to $\bbP$, $\bbP(A) = \E \{ \I(A) \}$ to denote the probability of $A$, where $A$ is some event, and $\bbP_n(f) = n^{-1} \sum_{i=1}^n f(Z_i)$ the empirical mean of $f$ with respect to $Z_1, \dots, Z_n$. For simplicity, we may write $\E \equiv \E_{\bbP}$. We write $a \lesssim b$ ($a \gtrsim b$) when $a \leq Cb$ ($a \geq Cb$) for a constant $C$. For $N \geq 1$, we write $[N] = \{1, \dots, N \}$.

\section{Non-overlap bounds}
\label{sec:bounds}
When the overlap assumption (Assumption \ref{assumption:weak-positivity}) does not hold, the ATE is not identifiable. However, if the potential outcomes are bounded it is possible to derive partial identification bounds for the ATE. 

\medskip

\begin{assumption}[Bounded outcome]
    \label{assumption:bounded}
    $Y \in [0, 1]$.
\end{assumption}

The assumption that $Y \in [0, 1]$ is with little loss of generality, in the sense that if $Y \in [a, b]$, then it can be transformed to fall in $[0, 1]$. We consider relaxing this assumption via a sensitivity analysis in Appendix~\ref{appendix:sensitivity-analysis}. Leveraging boundedness, the core of our approach is to split the population into two parts: one where a treatment effect is identifiable because overlap is satisfied, and one where worst-case bounds must be applied. The two parts of the population are divided via a propensity score threshold $c \in \left[ 0, \tfrac{1}{2} \right]$. The following proposition formalizes the result. 

\medskip

\begin{proposition}[Non-overlap bounds] \label{prop:bounds}
    Let $c \in \left[ 0, \tfrac{1}{2}\right]$ denote a propensity score trimming threshold. Under Assumptions~\ref{assumption:consistency}, \ref{assumption:conditional-exchangeability}, and \ref{assumption:bounded}, 
    \begin{align}
    \label{eq:exact-bounds}
        \E\left( Y^1 - Y^0 \right) &\in \Big[ L(c), U(c) \Big], 
    \end{align}
    where
    \begin{align}
        L(c) &= \psi(c) - \bbP(\pscore \geq 1 - c), \\
        U(c) &= \psi(c) + \bbP(\pscore \leq c), \text{ and } \\
        \psi(c) &= \E \left\{ \mu_1 \I (\pi > c)- \mu_0 \I(\pi < 1-c) \right\}
    \end{align}
\end{proposition}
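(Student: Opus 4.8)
The plan is to decompose each counterfactual mean, $\E(Y^1)$ and $\E(Y^0)$, according to whether the covariate value falls in an ``overlap'' region --- where the arm of interest has positive conditional probability, so the relevant outcome regression is point-identified from the observed data under Assumptions~\ref{assumption:consistency} and \ref{assumption:conditional-exchangeability} --- or in the complementary ``non-overlap'' region, where no identification is possible but the potential outcome is still known to lie in $[0,1]$. Replacing the non-overlap contributions by their worst-case values and pairing the resulting two-sided bounds on $\E(Y^1)$ and $\E(Y^0)$ with the appropriate signs produces the interval $[L(c), U(c)]$. Note that the overlap assumption (Assumption~\ref{assumption:weak-positivity}) is never invoked; only Assumptions~\ref{assumption:consistency} and \ref{assumption:conditional-exchangeability} together with boundedness of $Y$ are used.

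Fix $c \in [0, \tfrac12]$. \textbf{Identification on the overlap region.} On the event $\{\pi > c\}$ we have $\pi(X) > 0$, so conditioning on $\{A = 1\}$ is well-defined; Assumption~\ref{assumption:conditional-exchangeability} gives $\E(Y^1 \mid X) = \E(Y^1 \mid A = 1, X)$ and Assumption~\ref{assumption:consistency} gives $\E(Y^1 \mid A = 1, X) = \E(Y \mid A = 1, X) = \mu_1(X)$, so by iterated expectation $\E\{Y^1 \I(\pi > c)\} = \E\{\mu_1(X)\I(\pi > c)\}$. Symmetrically, using $1 - \pi(X) > 0$ on $\{\pi < 1-c\}$, one obtains $\E\{Y^0 \I(\pi < 1-c)\} = \E\{\mu_0(X)\I(\pi < 1-c)\}$. \textbf{Worst-case bounds on the non-overlap region.} Since $Y^1, Y^0 \in [0,1]$, $0 \le \E\{Y^1 \I(\pi \le c)\} \le \bbP(\pi \le c)$ and $0 \le \E\{Y^0 \I(\pi \ge 1-c)\} \le \bbP(\pi \ge 1-c)$. \textbf{Combining.} Because $\{\pi > c\}$ and $\{\pi \le c\}$ partition the covariate space, and likewise $\{\pi < 1-c\}$ and $\{\pi \ge 1-c\}$, adding the identified and worst-case pieces yields
\[
\E\{\mu_1(X)\I(\pi > c)\} \le \E(Y^1) \le \E\{\mu_1(X)\I(\pi > c)\} + \bbP(\pi \le c),
\]
together with the analogous two-sided bound on $\E(Y^0)$ in terms of $\mu_0$, $\I(\pi < 1-c)$, and $\bbP(\pi \ge 1-c)$. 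Subtracting, the lower bound on $\psi = \E(Y^1) - \E(Y^0)$ is the lower bound on $\E(Y^1)$ minus the upper bound on $\E(Y^0)$, which equals $\psi(c) - \bbP(\pi \ge 1-c) = L(c)$; the upper bound is the upper bound on $\E(Y^1)$ minus the lower bound on $\E(Y^0)$, which equals $\psi(c) + \bbP(\pi \le c) = U(c)$.

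The argument is essentially careful bookkeeping, so I do not anticipate a serious obstacle; the points requiring attention are (i) the boundary case $c = 0$, where $\{\pi > 0\}$ is exactly the region on which $\mu_1$ is defined and the worst-case term is confined to the (possibly positive-probability) event $\{\pi = 0\}$, so that $\mu_1(X)\I(\pi > 0)$ is understood to vanish there, and symmetrically for $\mu_0$ and $\{\pi = 1\}$; and (ii) matching each one-sided bound on $\E(Y^1)$ and $\E(Y^0)$ with the correct sign when forming $L(c)$ and $U(c)$. It is also worth observing, for the sake of the later estimation theory, that $\psi(c)$ and hence $L(c), U(c)$ are well-defined for every $c \in [0, \tfrac12]$ even under \emph{structural} overlap violations, precisely because the indicators restrict $\mu_1$ and $\mu_0$ to the regions where they exist.
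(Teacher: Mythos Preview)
Your proposal is correct and follows essentially the same argument as the paper: decompose each counterfactual mean into overlap and non-overlap pieces via the propensity threshold, point-identify the overlap piece using consistency and conditional exchangeability, and bound the non-overlap piece using $Y^a \in [0,1]$. If anything, your version is slightly cleaner, since writing the non-overlap term as $\E\{Y^1 \I(\pi \le c)\}$ rather than $\E(Y^1 \mid \pi \le c)\,\bbP(\pi \le c)$ sidesteps any issue with conditioning on a null event, and your remarks on the boundary case $c=0$ are a useful addition.
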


\begin{proof}
    The argument proceeds in three steps. First, 
    \begin{align}
        \E\left( Y^1 \right) &=  \E \lcb Y^1 \I \left( \pscore > c \right) \rcb + \E\left( Y^1 \mid \pscore \leq c \right) \bbP \left( \pscore \leq c \right) \text{ and } \\
        \E\left( Y^0 \right) &= \E\lcb Y^0 \I(\pscore < 1 - c)\rcb + \E\left( Y^0 \mid \pscore \geq 1 - c\right) \bbP(\pscore \geq 1 - c).
    \end{align}
    Therefore, the ATE satisfies 
    \[
    \E(Y^1 - Y^0) =  \E \lcb Y^1 \I \left( \pscore > c \right)   - Y^0 \I(\pscore < 1 - c) \rcb + \E\left( Y^1 \mid \pscore \leq c \right) \bbP \left( \pscore \leq c \right) - \E\left( Y^0 \mid \pscore \geq 1 - c\right) \bbP(\pscore \geq 1 - c).
    \]
    Second, the bounds in the result then follow by imposing the bounds $\{ Y^0, Y^1\} \in [0,1]$ on $\E\left( Y^1 \mid \pscore \leq c \right)$ and $\E\left( Y^0 \mid \pscore \geq 1 - c\right)$. Finally, overlap is not required to identify $\E \left\{ Y^1 \I(\pi > c) - Y^0 \I (\pi < 1-c) \right\}$, and therefore the identification of the central term in the bounds, $\psi(c)$, follows under only Assumptions~\ref{assumption:consistency} and \ref{assumption:conditional-exchangeability} by standard arguments.
\end{proof}

Proposition~\ref{prop:bounds} generalizes canonical worst-case identification bounds under unmeasured confounding \citep{manski1990nonparametricbounds, Robins89}. Setting $c = 0$ recovers the bounds described in \citealt{khan2024offpolicy} in the context of off-policy learning. Our innovation is to allow the size of the non-overlap population to vary based on the threshold $c$. This induces an ``identification gap'', the difference between the upper and lower bounds:
\begin{align}
    U(c) - L(c) = \underset{\text{Identification gap}}{\underbrace{\bbP(\pscore \leq c) + \bbP(\pscore \geq 1 - c)}}.
\end{align}
As formalized in Proposition~\ref{prop:confidence-interval-width-bound}, the identification gap can be balanced against the estimation uncertainty of the treatment effect within the overlap population to yield a narrow bound.

Unlike worst-case bounds under unmeasured confounding, it is plausible that our bounds are informative, in the sense that they exclude zero and allow for sign-identification of the ATE. An important negative result in the sensitivity analysis literature states that without further assumptions, worst-case bounds under unmeasured confounding are uninformative, and must include zero \citep{manski1990nonparametricbounds}. Crucially, however, because we focus on overlap violations rather than unmeasured confounders, \emph{non-overlap bounds can exclude zero} in reasonable scenarios. Simple analysis shows that the bounds exclude zero when $\psi(c) > \bbP(\pscore \geq 1 - c)$ or $\psi(c) < -\bbP(\pscore \leq c)$; that is, when the non-overlap subpopulation is small relative to the identifiable treatment effect. 

\subsection{Smooth valid bounds}
The bounds in \eqref{eq:exact-bounds} are non-smooth because they depend on indicator functions of the propensity score. The non-differentiability of these indicator functions poses challenges for estimation under nonparametric assumptions. Although $\sqrt{n}$-rate estimation and valid inference of the bounds may be possible with a well-specified parametric model for the propensity score or under specific assumptions, such as strong margin conditions on behavior of the propensity score around the threshold $c$, general guarantees are unavailable. We address this issue by constructing smooth approximations of the lower and upper bounds, which yields pathwise differentiable parameters that can be estimated at $\sqrt{n}$-rates under general nonparametric assumptions. Importantly, we show that these smooth bounds \emph{contain the non-smooth bounds}, and therefore, although the smooth bounds may be wider than the original bounds, they still always contain the ATE.

The core of our approach is to replace the indicator functions implicit in \eqref{eq:exact-bounds} with smooth approximations.  We construct two smooth approximations. First, we construct $s_{\ell}(\cdot, c, \gamma): [0,1] \to [0,1]$ as an approximation of $\I(\pi < c)$ (``$\ell$'' for ``$\pi$ \emph{less} than $c$'') and $s_g(\cdot, c, \gamma): [0,1] \to [0,1]$ as an approximation of $\I(\pi > c)$ (``g'' for ``$\pi$ \emph{greater} than $c$''), where $\gamma > 0$ is a smoothing parameter. We provide details on these approximations subsequently. Given these functions, we define the smooth approximation of $\psi(c)$ as
\begin{equation} \label{eq:smooth-ate}
    \psi_s(c, \gamma) = \E \{ \mu_1 s_g(\pi, c, \gamma) - \mu_0 s_{\ell}(\pi, 1-c, \gamma) \}.
\end{equation}
We also approximate the additional offset terms in the bounds of \eqref{eq:exact-bounds}. We approximate $\bbP(\pi \geq 1-c) = 1 - \bbP(\pi < 1-c)$ by $1 - \E \{ s_{\ell}(\pi, 1-c, \gamma) \}$ and $\bbP(\pi \leq c) = 1 - \bbP(\pi > c)$ by $1 - \E \{ s_g(\pi, c, \gamma) \}$.  From these, we construct smooth approximations of the bounds in \eqref{eq:exact-bounds}:
\begin{align}
    \smoothL(c, \gamma) &= \smoothATE(c, \gamma) - \Big[ 1 - \E \{ \smooth_{\ell}(\pscore, 1-c, \gamma) \} \Big] \text{ and } \label{eq:smooth-lower-bound} \\
    \smoothU(c, \gamma) &= \smoothATE(c, \gamma) + \Big[ 1 - \E \{ \smooth_g(\pscore, c, \gamma) \} \Big], \label{eq:smooth-upper-bound}
\end{align}
To guarantee the resulting smooth non-overlap bounds contain the ATE we must construct the smooth approximation functions to satisfy the following property.

\medskip

\begin{property}
    \label{req:smooth-approach} 
    For smooth approximations $\smooth_\ell$ and $\smooth_g$, it holds that $\smooth_{\ell}(x, c, \gamma) \leq \I(x < c)$ and $\smooth_g(x, c, \gamma) \leq \I(x > c)$.
\end{property}

\medskip

Property~\ref{req:smooth-approach} requires that the approximations bound the respective indicator functions \emph{from below}. Not all smooth approximations satisfy this property, but it is not difficult to construct functions that do. For example:

\begin{subequations} \label{eq:example-smooth-approximations}
\begin{align}
    s_{\ell}(x, c, \gamma) &= \begin{cases}
        1, & x \leq c - \gamma, \\
        0, & x \geq c, \\
        1 - \exp\left[1 + \frac{1}{\{ (x - c)/\gamma \}^2 - 1} \right], & \text{otherwise, and}
    \end{cases} \label{eq:example-smooth-approximations-lower} \\
    s_g(x, c, \gamma) &= \begin{cases}
        1, & x \geq c + \gamma, \\
        0, & x \leq c, \\
        1 - \exp\left[1 + \frac{1}{\{ (x - c)/\gamma \}^2 - 1} \right], & \text{otherwise.}
    \end{cases} \label{eq:example-smooth-approximations-upper}
\end{align}
\end{subequations}
We use $s_{\ell}$ and $s_g$ as in \eqref{eq:example-smooth-approximations-lower} and \eqref{eq:example-smooth-approximations-upper} for the rest of the manuscript whenever a specific smooth approximation is needed. We chose this smooth approximation because the region in which they have a non-zero derivative is bounded. Figure~\ref{fig:smooth-approximation} illustrates $s_{\ell}$ and $s_g$ for several decreasing choices of $\gamma$. As $\gamma$ decreases, the smooth approximation gets closer to the relevant indicator function.

\begin{figure}[h!]
    \centering
    \begin{tikzpicture}
        \begin{axis}[
            title={(A) smooth approximations of $\I(x < c)$},
            width={0.45\textwidth},
            height={0.3\textwidth},
            axis x line=bottom,
            axis y line=left,
            xlabel={},
            ylabel={},
            domain={-1:1},
            samples=200,
            ytick={0, 1},
            ytick style={draw=none},
            xtick={0},
            xticklabels={$c$},
            ymax=1.1,
            ymin=-0.1,
            legend pos=south west
        ]
            \addplot[only marks, mark=*, fill=gray,draw=gray] coordinates {(0, 0)};
            \addplot[only marks, mark=*, fill=white, draw=gray] coordinates {(0, 1)};
            \addplot[gray, thick, domain={-1:0}]{1};
            \addplot[gray, thick, domain={0:1}]{0};
            
            \addplot[blue, dashed]{(x >= 0) * 0 + (x < -1) * 1 + (x >= -1) * (x < 0) * (1 - 1 / exp(-1) * exp(1 / ((x)^2 - 1)};
            \addplot[blue, dashed]{(x >= 0) * 0 + (x < -0.5) * 1 + (x >= -0.5) * (x < 0) * (1 - 1 / exp(-1) * exp(1 / ((x / 0.5)^2 - 1)};
            \addplot[blue, dashed]{(x >= 0) * 0 + (x < -0.25) * 1 + (x >= -0.25) * (x < 0) * (1 - 1 / exp(-1) * exp(1 / ((x / 0.25)^2 - 1)};
            \addplot[blue, dashed]{(x >= 0) * 0 + (x < -0.1) * 1 + (x >= -0.1) * (x < 0) * (1 - 1 / exp(-1) * exp(1 / ((x / 0.1)^2 - 1)};

        \end{axis}
    \end{tikzpicture}
    \begin{tikzpicture}
        \begin{axis}[
            title={(B) smooth approximations of $\I(x > c)$},
            width={0.45\textwidth},
            height={0.3\textwidth},
            axis x line=bottom,
            axis y line=left,
            xlabel={},
            ylabel={},
            domain={-1:1},
            samples=200,
            ytick={0, 1},
            ytick style={draw=none},
            xtick={0},
            xticklabels={$c$},
            ymax=1.1,
            ymin=-0.1
        ]
            \addplot[only marks, mark=*, fill=gray,draw=gray] coordinates {(0, 0)};
            \addplot[only marks, mark=*, fill=white,draw=gray] coordinates {(0, 1)};
            \addplot[gray, thick, domain={-1:0}]{0};
            \addplot[gray, thick, domain={0:1}]{1};
            
            \addplot[blue, dashed]{(x <= 0) * 0 + (x > 1) * 1 + (x > 0) * (x < 1) * (1 - 1 / exp(-1) * exp(1 / ((x)^2 - 1)};
            \addplot[blue, dashed]{(x <= 0) * 0 + (x > 0.5) * 1 + (x > 0) * (x < 0.5) * (1 - 1 / exp(-1) * exp(1 / ((x / 0.5)^2 - 1)};
            \addplot[blue, dashed]{(x <= 0) * 0 + (x > 0.25) * 1 + (x > 0) * (x < 0.25) * (1 - 1 / exp(-1) * exp(1 / ((x / 0.25)^2 - 1)};
            \addplot[blue, dashed]{(x <= 0) * 0 + (x > 0.1) * 1 + (x > 0) * (x < 0.1) * (1 - 1 / exp(-1) * exp(1 / ((x / 0.1)^2 - 1)};
        \end{axis}
    \end{tikzpicture}
    \caption{Example smooth approximations $s_{\ell}(x, c, \gamma)$ (A) and $s_g(x, c, \gamma)$ (B) as defined in \eqref{eq:example-smooth-approximations-lower} and \eqref{eq:example-smooth-approximations-upper} with smoothness $\gamma \in \{ 1, 0.5, 0.25, 0.1 \}$.}
    \label{fig:smooth-approximation}
\end{figure}

The next result shows that if the smooth approximation functions satisfy Property~\ref{req:smooth-approach}, then they are valid bounds on the ATE.

\medskip

\begin{proposition}[Smooth non-overlap bounds] \label{prop:smooth-bounds}
    Under the conditions of Proposition~\ref{prop:bounds}, suppose $s_{\ell}(x, c, \gamma)$ and $s_g(x, c, \gamma)$ satisfy Property~\ref{req:smooth-approach}. Then,
    \begin{align}
        \E \left( Y^1 - Y^0 \right) \in \Big[ \smoothL(c, \gamma), \smoothU(c, \gamma) \Big],
    \end{align}
    where $\smoothL(c, \gamma)$ and $\smoothU(c, \gamma)$ are defined in \eqref{eq:smooth-lower-bound} and \eqref{eq:smooth-upper-bound}, respectively.
\end{proposition}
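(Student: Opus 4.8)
The plan is to deduce the statement directly from Proposition~\ref{prop:bounds} by showing that the smooth interval \emph{contains} the exact one, i.e.\ that $\smoothL(c,\gamma) \le L(c)$ and $\smoothU(c,\gamma) \ge U(c)$. Since Proposition~\ref{prop:bounds} already gives $\E(Y^1 - Y^0) \in [L(c), U(c)]$ under Assumptions~\ref{assumption:consistency} and \ref{assumption:conditional-exchangeability}, these two inequalities immediately yield $\E(Y^1 - Y^0) \in [\smoothL(c,\gamma), \smoothU(c,\gamma)]$.

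For the lower bound I would compute $\smoothL(c,\gamma) - L(c)$ explicitly. Writing the offset in $L(c)$ as $\bbP(\pi \ge 1-c) = 1 - \E\{\I(\pi < 1-c)\}$, the constant $1$ cancels against the one in \eqref{eq:smooth-lower-bound}, leaving
\[
\smoothL(c,\gamma) - L(c) = \big(\smoothATE(c,\gamma) - \psi(c)\big) + \E\big[\, s_l(\pi, 1-c, \gamma) - \I(\pi < 1-c)\,\big].
\]
Substituting the definitions of $\smoothATE(c,\gamma)$ from \eqref{eq:smooth-ate} and of $\psi(c)$ from Proposition~\ref{prop:bounds}, the $s_l$ contributions combine, and regrouping gives
\[
\smoothL(c,\gamma) - L(c) = \E\big[\, \mu_1(X)\,\big(s_g(\pi,c,\gamma) - \I(\pi > c)\big)\,\big] + \E\big[\, \big(1 - \mu_0(X)\big)\,\big(s_l(\pi,1-c,\gamma) - \I(\pi < 1-c)\big)\,\big].
\]
Property~\ref{req:smooth-approach} makes each of $s_g(\pi,c,\gamma) - \I(\pi > c)$ and $s_l(\pi,1-c,\gamma) - \I(\pi < 1-c)$ nonpositive pointwise, while $Y \in [0,1]$ gives $\mu_1(X) \ge 0$ and $1 - \mu_0(X) \ge 0$; hence both expectations are $\le 0$ and $\smoothL(c,\gamma) \le L(c)$.

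The upper bound is symmetric. Using $\bbP(\pi \le c) = 1 - \E\{\I(\pi > c)\}$ and the same cancellation and regrouping yields
\[
\smoothU(c,\gamma) - U(c) = \E\big[\, \big(\mu_1(X) - 1\big)\,\big(s_g(\pi,c,\gamma) - \I(\pi > c)\big)\,\big] - \E\big[\, \mu_0(X)\,\big(s_l(\pi,1-c,\gamma) - \I(\pi < 1-c)\big)\,\big],
\]
and both terms are $\ge 0$ because $\mu_1(X) - 1 \le 0$, $\mu_0(X) \ge 0$, and the two bracketed differences are nonpositive by Property~\ref{req:smooth-approach}; hence $\smoothU(c,\gamma) \ge U(c)$.

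There is no deep obstacle here — the argument is a short computation — but the step that requires care is the regrouping: the offset term must be combined with the \emph{matching} outcome-regression term so that each $s - \I$ difference is multiplied by $1 - \mu_0(X)$ (lower bound) or $\mu_1(X) - 1$ (upper bound) rather than by a bare $\mu_a(X)$. It is precisely this pairing, together with the two-sided bound $\mu_a(X) \in [0,1]$ inherited from $Y \in [0,1]$ and the one-sided inequality of Property~\ref{req:smooth-approach}, that makes every integrand sign-definite; note also that the conclusion holds for every $\gamma > 0$, with no need to let $\gamma \to 0$.
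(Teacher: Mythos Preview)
Your proposal is correct and follows essentially the same approach as the paper: reduce to showing $\smoothL(c,\gamma)\le L(c)$ and $\smoothU(c,\gamma)\ge U(c)$, and deduce these from Property~\ref{req:smooth-approach} together with $\mu_a(X)\in[0,1]$. The only cosmetic difference is that the paper first isolates a two-sided bound on the approximation error $\psi(c)-\smoothATE(c,\gamma)$ and then substitutes it, whereas you compute $\smoothL-L$ and $\smoothU-U$ directly and regroup so each integrand is sign-definite pointwise; the underlying algebra and the ingredients used are identical.
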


Proposition~\ref{prop:smooth-bounds} establishes the key guarantee, that the smooth bounds $\smoothL(c, \gamma)$ and $\smoothU(c, \gamma)$ contain the ATE while requiring only that the approximation functions satisfy Property~\ref{req:smooth-approach}, a condition easily ensured by construction. The proof (included in the appendix) illustrates why Property~\ref{req:smooth-approach} is essential: it ensures that the smooth approximation error $\psi_s(c, \gamma) - \psi(c)$ has the correct sign. Without this property, inequality \eqref{eq:smooth-approx-error} could fail, potentially yielding bounds that exclude the ATE. The use of the smooth approximation introduces another term to the size of the ATE bounds, which is now the sum of the original identification gap and a smooth approximation error term:
\begin{align}
    \label{eq:identification-gap-smooth-approximation}
    \smoothU(c, \gamma) - \smoothL(c, \gamma) &= \overset{\text{Identification gap}}{\overbrace{\bbP(\pscore \leq c) + \bbP(\pscore \geq 1 - c)}} \\
        &\ + \underset{\text{Smooth approximation error}}{\underbrace{\bbP(\pscore > c) - \E\lcb s_g(\pscore, c, \gamma)\rcb +  \bbP(\pscore < 1 - c) - \E\lcb s_\ell(\pscore, 1 - c, \gamma) \rcb}}.
\end{align}

The smooth approximation approach builds naturally on existing work with smooth approximations for non-smooth functionals \citep{van2014targeted}, though two key distinctions merit attention. First, in standard point identification settings, the smooth approximation error typically lacks clear structure: it may be positive or negative without pattern. Consequently, an estimator targeting the smooth approximation does not implicitly target the original non-smooth parameter in any specific way. Instead, researchers must employ data-adaptive smoothing techniques that ensure the approximation error vanishes as sample size grows, thereby enabling valid inference for the non-smooth functional \citep{whitehouse2025inference, levis2025covariate}. Our partial identification framework sidesteps this complication entirely: because our smooth bounds contain the non-smooth bounds by construction, we preserve valid ATE inference even when using fixed smoothing parameters. This guarantee---that smooth bounds maintain validity---has appeared previously with smooth instrumental variable bounds \citep{levis2025covariate}. While \citet{levis2025covariate} develop smooth bounds, they also achieve $\sqrt{n}$-rates for estimating non-smooth bounds under mild margin conditions. In their context, targeting non-smooth bounds directly remains both feasible and preferable from a practical standpoint. By contrast, achieving comparable rates for our non-smooth bounds would require substantially stronger assumptions, making the smooth approximation approach necessary.

The practical implementation involves balancing smooth approximation error against statistical estimation error: smaller $\gamma$ yields tighter bounds but higher estimator variance. This tradeoff is typically favorable in finite samples, as $\gamma$ can be chosen small enough that statistical error dominates approximation error.  Furthermore, our proposed inference framework (developed in Section~\ref{sec:estimation}) allows practitioners to optimize over multiple $(c,\gamma)$ combinations while preserving valid coverage for the ATE.

\subsection{Efficiency theory}
In this section, we derive the efficient influence function for the smooth bounds, which we will use in the next section to construct an estimator. For this, we leverage efficiency theory for low-dimenaional functionals within semiparametric and nonparametric models, developed in a line of research including \cite{begun1983information}, \cite{bickel97}, and \cite{vanderVaart98}, among many others. \cite{kennedy2024doublerobustreview} provides an excellent review oriented to causal inference. 

The core step in the analysis is the derivation of a particular distributional Taylor expansion. For a functional $\theta : \mathcal{P} \to \mathbb{R}$, suppose that $\theta$ is sufficiently smooth so as to admit the following expansion, for any $\bbP_1, \bbP_2 \in \mathcal{P}$:
\begin{align}
    \label{eq:von-mises}
    \theta(\bbP_1) = \theta(\bbP_2) - \E_{\bbP_2}\lcb \eif_\theta(Z; \bbP_1) \rcb + \mathsf{R}(\bbP_1, \bbP_2),
\end{align} 
where $\eif_\theta : \mathcal{Z} \to \mathbb{R}$ is a mean-zero, finite-variance gradient and $\mathsf{R} : \mathcal{P} \times \mathcal{P} \to \mathbb{R}$ is called the \textit{second-order remainder}. In a nonparametric model, there is a unique gradient that characterizes the semiparametric efficiency bound, referred to as the \textit{efficient influence function} (EIF) of $\theta$. Importantly, $\mathsf{R}$ may only depend on $\bbP_1$ and $\bbP_2$ by their products or squares of their differences, which justifies referring to $\mathsf{R}$ as second-order. We refer to \eqref{eq:von-mises} as the \textit{von Mises expansion} of $\theta$ \citep{mises1947asymptotic}, following the terminology used by \cite{fernholz1983vonmises, robins2009quadratic, kennedy2024doublerobustreview}, among others. 

The following theorem establishes that the smooth bounds in Proposition~\ref{prop:smooth-bounds} admit von Mises expansions and characterizes their EIFs.

\medskip

\begin{theorem}[Efficient influence functions]
    \label{thm:eifs} 
    Under the setup of Proposition~\ref{prop:smooth-bounds}, suppose the maps $x \mapsto s_{\ell}(x, c, \gamma)$ and $x\mapsto s_g(x, c, \gamma)$ are twice-differentiable with bounded first and second derivatives. Let $\dot s$ denote the first derivative. Then, the parameters $\smoothATE(c, \gamma)$, $\smoothL(c, \gamma)$, and $\smoothU(c, \gamma)$ admit von-Mises expansions of the form \eqref{eq:von-mises}, with, respectively, the following uncentered efficient influence functions:
    \begin{align}
        \eif_{\smoothATE}(Z, \eta, c, \gamma) &= \omodel_1 \smooth_g(\pscore, c, \gamma) - \omodel_0 \smooth_{\ell}(\pscore, 1 - c, \gamma) \\
                                    &\ + \frac{A}{\pscore} \smooth_g(\pscore, c, \gamma) \left( Y - \omodel_1 \right) - \frac{1- A}{1 - \pscore} \smooth_{\ell}(\pscore, 1 - c, \gamma) \left( Y - \omodel_0 \right) \\
                                    &\ + \lcb \omodel_1 \dot{\smooth}_g(\pscore, c, \gamma) - \omodel_0 \dot{\smooth}_{\ell}(\pscore, 1 - c, \gamma) \rcb \left( A - \pscore \right), \\
        \eif_{L_s}(Z, \eta, c, \gamma)         &= \eif_{\smoothATE}(Z, \eta, c, \gamma) - 1 + \smooth_{\ell}(\pscore, 1 - c, \gamma) + \dot{\smooth}_{\ell}(\pscore, 1 - c, \gamma) \left( A - \pscore \right), \\
        \eif_{U_s}(Z, \eta, c, \gamma)         &= \eif_{\smoothATE}(Z, \eta, c, \gamma) + 1 - \smooth_g(\pscore, c, \gamma) - \dot{\smooth}_g(\pscore, c, \gamma) \left( A - \pscore \right).
    \end{align}
\end{theorem}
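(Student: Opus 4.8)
The plan is to derive each efficient influence function by the standard recipe: compute the pathwise derivative of the functional along a regular one-dimensional submodel, identify the unique gradient from the orthogonal decomposition of the nonparametric tangent space, and then confirm that this gradient is the first-order term of a von Mises expansion \eqref{eq:von-mises} with a second-order remainder. Since $\smoothATE(c,\gamma)$, $\smoothL(c,\gamma)$, and $\smoothU(c,\gamma)$ are linear combinations of the four elementary functionals $\E\{\omodel_1 s_g(\pscore,c,\gamma)\}$, $\E\{\omodel_0 s_l(\pscore,1-c,\gamma)\}$, $\E\{s_g(\pscore,c,\gamma)\}$, $\E\{s_l(\pscore,1-c,\gamma)\}$ plus constants, and gradients are linear, it suffices to treat each block and recombine.

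First I would factor the likelihood as $p(x)\,p(a\mid x)\,p(y\mid a,x)$ and take a submodel with score $S = S_X + S_{A\mid X} + S_{Y\mid A,X}$, the three pieces conditionally mean zero and mutually orthogonal. For a block such as $\E_\bbP\{\omodel_{1,\bbP}(X)\, s_g(\pscore_\bbP(X),c,\gamma)\}$, differentiating at the truth produces exactly three contributions: from the $X$-marginal, the centered integrand $\omodel_1 s_g(\pscore,c,\gamma)$ paired with $S_X$; from $\pscore$, via $\tfrac{d}{dt}\pscore_t(X)|_0 = \E\{(A-\pscore)S_{A\mid X}\mid X\}$ and the chain rule through $s_g$, the term $\omodel_1 \dot s_g(\pscore,c,\gamma)(A-\pscore)$ paired with $S_{A\mid X}$; and from $\omodel_1$, via $\tfrac{d}{dt}\omodel_{1,t}(X)|_0 = \pscore^{-1}\E\{A(Y-\omodel_1)S_{Y\mid A,X}\mid X\}$, the term $\tfrac{A}{\pscore} s_g(\pscore,c,\gamma)(Y-\omodel_1)$ paired with $S_{Y\mid A,X}$. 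Because the model is nonparametric the tangent space is all of $L_2^0(\bbP)$, so the gradient is unique and these three terms (minus the centering constant) are the EIF of this block. The same computation yields the $\omodel_0$ block with $s_l(\pscore,1-c,\gamma)$, $\dot s_l$, and control weight $\tfrac{1-A}{1-\pscore}$, while the pure weight functionals $\E\{s_g\}$ and $\E\{s_l\}$ retain only their $X$-marginal and $\pscore$ terms. Recombining with the signs from \eqref{eq:smooth-ate}--\eqref{eq:smooth-upper-bound}, and carrying through the nonrandom constants ($-1$ for $\eif_L$, $+1$ for $\eif_U$), reproduces $\eif_\psi$, $\eif_L$, $\eif_U$ exactly as stated, in uncentered form.

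The remaining task is to verify the von Mises expansion, i.e. that $\mathsf R(\bbP_1,\bbP_2) = \theta(\bbP_1) - \theta(\bbP_2) + \E_{\bbP_2}\{\eif_\theta(Z;\bbP_1)\}$ depends on $(\bbP_1,\bbP_2)$ only through products and squares of differences of the nuisances $\eta = \{\pscore,\omodel_0,\omodel_1\}$. This is a direct algebraic expansion. The inverse-weighting terms produce products such as $\E_{\bbP_2}\{(1 - \pscore_1/\pscore_2)\, s_g(\pscore_1,c,\gamma)(\omodel_{1,2}-\omodel_{1,1})\}$, which are well defined and finite because $u\mapsto s_g(u,c,\gamma)/u$ and $u\mapsto s_l(u,1-c,\gamma)/(1-u)$ are bounded by $1/c$ on $[0,1]$ — so, crucially, no overlap assumption enters. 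The terms arising from the propensity dependence inside $s_g,s_l$ require a second-order Taylor expansion of $u\mapsto s_g(u,c,\gamma)$ and $u\mapsto s_l(u,1-c,\gamma)$ about $\pscore_2$: its linear part cancels against the $\dot s\,(A-\pscore)$ pieces of $\eif_\theta$, and its quadratic remainder is $O\{(\pscore_1-\pscore_2)^2\}$ by the assumed boundedness of the second derivatives.

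I expect the main obstacle to be precisely that last cancellation — showing that every term first-order in $\pscore_1-\pscore_2$ vanishes, leaving a genuinely second-order remainder. This is exactly why the $\dot s_g(\pscore,c,\gamma)(A-\pscore)$ and $\dot s_l(\pscore,1-c,\gamma)(A-\pscore)$ correction terms must appear in the influence functions, and why the hypotheses demand two bounded derivatives of $s_l$ and $s_g$. Everything else — the tangent-space bookkeeping and the algebra of recombining the four blocks — is routine.
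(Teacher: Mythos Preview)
Your proposal is correct and follows essentially the same route as the paper: derive a candidate influence function for each elementary block (the paper uses Kennedy's heuristic IF calculus, you use the equivalent pathwise-derivative/score-decomposition formulation), then verify the von Mises expansion by Taylor-expanding the smoothing functions to show the remainder is second order. Your explicit observation that $s_g(u,c,\gamma)/u$ and $s_l(u,1-c,\gamma)/(1-u)$ are bounded by $1/c$ --- so that no overlap assumption enters the remainder analysis --- is a useful point the paper leaves implicit.
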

Theorem~\ref{thm:eifs} provides the uncentered EIFs for the bounds. We will use them to construct debiased estimators in the next section. Beyond the assumptions of Proposition~\ref{prop:smooth-bounds}, we only require that the smooth approximation functions have bounded first and second derivatives with respect to their first argument, which can be guaranteed by construction, and is satisfied by the examples in Figure~\ref{fig:smooth-approximation}.

The efficient influence functions in Theorem~\ref{thm:eifs} take the typical form of a plug-in plus weighted residuals. Starting with $\smoothATE(c, \gamma)$, the first line of its EIF is the typical plug-in term that we could estimate and then average to obtain a plug-in estimator for $\smoothATE(c, \gamma)$. The second and third lines include weighted residuals that debias the plug-in estimator. The third line is a debiasing term that arises due to the fact that the smoothed trimmed effect depends on the estimated propensity scores via the smoothing functions $s_{\ell}$ and $s_g$. The second and third EIFs combine the EIF for $\smoothATE(c, \gamma)$ with the EIFs for $1 - \E \lcb \smooth_{\ell}(\pi, 1-c, \gamma) \rcb$ and $1 - \E \lcb \smooth_g(\pi, c, \gamma) \rcb $, respectively, and incorporate additional weighted residuals for debiasing estimators for $\E \{ s_{\ell}(\pi, 1-c, \gamma) \}$ and $\E \{ s_g(\pi, c, \gamma) \}$.

\medskip

\begin{remark}
    The EIF $\eif_{\psi_s}$ of the smooth trimmed effect $\psi_s(c, \gamma)$ can be compared to the EIF of the ATE:
    \begin{align}
        \label{eq:eif-ate}
        \eif(Z, \eta) &= \omodel_1 - \omodel_0 + \frac{A}{\pscore} \left( Y - \omodel_{1} \right) - \frac{1 - A}{1 - \pscore} \left( Y - \omodel_0 \right).
    \end{align}
    The key difference is that $\eif_{\psi}$ down-weights observations with extreme propensity scores through the smoothing functions $s_{\ell}$ and $s_g$. This down-weighting occurs both in the plug-in terms and the inverse propensity weighted residuals.
\end{remark}

\medskip

The semiparametric efficiency bounds for the lower and upper smooth non-overlap bounds are defined as the variance of their respective EIFs, which we state in the following proposition building on Theorem~\ref{thm:eifs}. 

\medskip

\begin{proposition}[Efficiency bounds]
    \label{prop:efficiency-bounds}
    Under the conditions of Theorem~\ref{thm:eifs}, the efficiency bounds $\sigma^2_L$ and $\sigma^2_U$ for the parameters $\smoothL(c, \gamma)$ and $\smoothU(c, \gamma)$, respectively, are given by
    \begin{align}
        \sigma^2_L &= \Var\lcb\mu_1 \smooth_g(\pscore, c, \gamma) - \mu_0 \smooth_\ell(\pscore, 1 - c, \gamma) + \smooth_\ell(\pscore, 1 - c, \gamma) \rcb \\
        &+ \E\lcb \frac{\Var(Y \mid A, X)}{\pscore(1 - \pscore)} \times s_g(\pscore, c, \gamma)^2 \times s_\ell(\pscore, 1 - c, \gamma)^2  \rcb \\
        &+ \E\lcb \Var(A \mid X) \times (\mu_1 \dot{s}_g(\pscore, c, \gamma) - \mu_0 \dot{\smooth}_\ell(\pscore, 1 - c, \gamma)  + \dot{s}_\ell(\pscore, 1 - c, \gamma))^2 \rcb \\
        \sigma^2_U &= \Var\lcb\mu_1 \smooth_g(\pscore, c, \gamma) - \mu_0 \smooth_\ell(\pscore, 1 - c, \gamma) - s_g(\pscore, c, \gamma) \rcb \\
        &+ \E\lcb \frac{\Var(Y \mid A, X)}{\pscore(1 - \pscore)} \times s_g(\pscore, c, \gamma)^2 \times s_\ell(\pscore, 1 - c, \gamma)^2  \rcb \\
        &+ \E\lcb \Var(A \mid X) \times (\mu_1 \dot{s}_g(\pscore, c, \gamma) - \mu_0 \dot{\smooth}_\ell(\pscore, 1 - c, \gamma) - \dot{s}_g(\pscore, c, \gamma))^2  \rcb. 
    \end{align}
\end{proposition}

\medskip

\begin{remark}
    A standard alternative approach to achieve $\sqrt{n}$-estimation of the bounds without smoothing is to adopt a margin condition near a fixed propensity score threshold $c$. Specifically, suppose that there exists an $\alpha > 0$ and $K > 0$ such that $\bbP(|\pscore - c| \leq t) \leq Kt^\alpha$ for small $t$. If the propensity score estimator satisfies $\| \hat{\pscore} - \pscore\|_{\infty} = o_{\bbP}(n^{-1/(2\alpha)})$, and estimators of $\mu_0$ and $\mu_1$ converge at rate $n^{-1/4}$ or faster, then $n^{-1/2}$-rate estimation of the non-smooth bounds is possible. We do not pursue this option further because in many practical scenarios there is not a single threshold fixed a-priori; rather, it is of interest to estimate the bounds over many thresholds, and assuming a strong margin condition for every threshold is likely unreasonable.
\end{remark}

\section{Estimation and inference}
\label{sec:estimation}

In this section, we propose debiased estimators for the smooth bounds. With fixed threshold and smoothing parameters, we establish that our estimator is $\sqrt{n}$-consistent and asymptotically Gaussian under nonparametric conditions on the propensity score and outcome regression. Then, we demonstrate that the multiplier bootstrap facilitates asymptotically valid inference across a fixed finite set of parameters, and show how to use the multiplier bootstrap to obtain a narrow confidence interval. We characterize the resulting confidence interval width as a function of the threshold and smoothing parameters, and compare to traditional onfidence intervals centered on a point estimator across several overlap regimes.

A naive plug-in estimator would construct estimators for the nuisance functions in the identification result in Proposition~\ref{prop:smooth-bounds} and plug them into the empirical average to approximate the expectation.
This plug-in estimator will inherit the convergence properties of the estimators for the outcome regression and propensity score. As a result, it can achieve $\sqrt{n}$-convergence with well-specified parametric models for the propensity score and outcome regression or with specific nonparametric assumptions. However, general guarantees are unavailable. In the previous section, we established that the smooth bounds are pathwise differentiable and admit EIFs. Therefore, we use the EIFs to debias the naive plug-in estimator, yielding new estimators that can converge at $\sqrt{n}$-rates under nonparametric assumptions.

\subsection{Targeted learning}
\label{sec:tmle}

A variety of methods exist for constructing nonparametric asymptotically normal and efficient estimators of pathwise differentiable estimands, including estimating equations, one-step estimation, double machine learning, and targeted learning \citep{robins1995efficiency, Tsiatis06, chernozhukov2016double, vanderLaan2003, vanderLaanRose11}. We focus on targeted learning because it is a substitution estimator that automatically yields estimates of the bounds that fall in $[-1, 1]$, unlike other approaches. 

Targeted learning works by iteratively updating initial nuisance estimates $\widehat \eta$ via a fluctuation model. The fluctuation model and loss function are carefully chosen to target the parameter of interest such that the final nuisance estimates $\widehat \eta^*$ solve the empirical EIF estimating equation; e.g., when targeting the smooth lower bound $\smoothL(c, \gamma)$, the final estimates $\widehat \eta^*$ satisfy
\begin{align}
    \bbP_n \lcb \eif_{L}(Z_i, \widehat \eta^*, c, \gamma) \rcb \approx 0.
\end{align}
This careful iteration debiases the resulting estimator so that it has stronger convergence guarantees than the naive plug-in estimator. Nuisance estimation can be combined with sample splitting and cross-fitting to avoid Donsker or other complexity conditions in the analysis \citep{chernozhukov2018double, zheng2011cross, chen2022debiased}. For simplicity of presentation in this section, we assume sample splitting is used, in which nuisance estimates are constructed on a held-out fold and the debiased estimator is constructed on the remaining observations. However, in practice we employ cross-fitting and cycle through the folds to retain full sample efficiency.

For the specific Targeted Minimum Loss-based Estimator (TMLE) we propose, we use the following fluctuation model and loss function, which can be seen as an extension of those in \cite{gruber2010continuoustmle}. For $\epsilon \in \mathbb{R}$ and arbitrary $\mu_0$, $\mu_1$, and $\pi$, define the fluctuation model as
\begin{align}
    \omodel_0(\epsilon) &= \mathrm{logit}^{-1}\left\{ \mathrm{logit}(\mu_0) + \frac{1}{1-\pscore} \smooth_{\ell}(\pscore, 1 - c, \gamma) \times \epsilon \right\}, \\
    \omodel_1(\epsilon) &= \mathrm{logit}^{-1}\left\{ \mathrm{logit}(\mu_1) + \frac{1}{\pscore} \smooth_g(\pscore, c, \gamma) \times \epsilon \right\} \text{, and} \\
    \pscore(\epsilon) &= \mathrm{logit}^{-1}\Big[ \mathrm{logit}(\pscore)  +  \lcb \omodel_1 \dot{\smooth}_g(\pscore, c, \gamma) - \omodel_0 \dot{\smooth}_{\ell}(\pscore, 1 - c, \gamma) + M \rcb \times \epsilon \Big],
\end{align}
where $M = \dot{\smooth}_{\ell}(\pscore, 1 - c, \gamma)$ when targeting a lower bound $\smoothL(c, \gamma)$ and $M = -\dot{\smooth}_g(\pscore, c, \gamma)$ when targeting an upper bound $\smoothU(c, \gamma)$. These fluctuation models satisfy the key property that at $\epsilon = 0$, each fluctuation reduces to $\mu_0$, $\mu_1$, and $\pi$, respectively. We then use the following loss function:
\begin{align}
    \mathcal{L}(\epsilon, Z) =& -A\log \pscore(\epsilon) - (1 - A) \log \{ 1 - \pscore(\epsilon) \} \\
    &- A\Big[ Y\log\omodel_1(\epsilon) + (1 - Y)\log \{ 1 - \omodel_1(\epsilon) \} \Big] \\
    &-(1-A) \Big[ Y\log\omodel_0(\epsilon) + (1 - Y)\log \{ 1-\omodel_0(\epsilon) \} \Big].
\end{align}
This loss function satisfies the key property that the EIF of the target parameter is contained in the linear span of the gradient of the loss function with respect to $\epsilon$ evaluated at $\epsilon = 0$. This ensures that the fluctuation step reduces bias toward the target parameter in the optimal direction, guaranteeing asymptotically efficient estimation. This fluctuation model and loss function can be used for both a bounded continuous outcome ($Y \in [0, 1]$) and a binary outcome $Y \in \{0, 1\}$. 

With the fluctuation model and loss function in hand, we next outline the TMLE. We focus on estimating $\smoothL(c, \gamma)$. An almost identical algorithm is used to estimate $\smoothU(c, \gamma)$, but where $M$ is changed in the fluctuation submodel.  
\medskip

\begin{algorithm}[TMLE] \label{alg:tmle}
Assume access to initial nuisance estimates $\widehat \eta$ learned on a separate, independent, sample.
\begin{enumerate}
    \item Initialization: set $\widehat{\eta}^* = \widehat{\eta}$.
    \item Loss minimization: find $\epsilon^*$ by empirical minimization of $\mathcal{L}$ under the fluctuation model conditional on nuisance parameter estimates $\widehat{\eta}^*$:
    \begin{align}
        \epsilon^* = \argmin_{\epsilon \in \mathbb{R}} \sum_{i=1}^n \mathcal{L}(\epsilon, Z_i)
    \end{align}
    \item Recursive update: set nuisance parameter estimates $\widehat{\eta}^* = \left( \pscore(\epsilon^\ast), \mu_0(\epsilon^\ast), \mu_1(\epsilon^\ast) \right)$ under the fluctuation model conditional on nuisance parameter estimates $\widehat{\eta}^\ast$.
    \item Repeat Steps 2-3 until convergence (i.e. when $\epsilon^* \approx 0$).
    \item Using the final nuisance estimates from step 3, construct debiased plug-in estimators
    \[
    \widehat L_s(c, \gamma) = \widehat \psi_s(c, \gamma) - \left[ 1 - \bbP_n \{ s_{\ell}(\widehat \pi^\ast, 1-c, \gamma) \} \right] 
    \]
    where
    \[
    \widehat \psi_s(c, \gamma) = \bbP_n \{ \widehat \mu_1^\ast s_g(\widehat \pi^\ast, c, \gamma) - \widehat \mu_0^\ast s_{\ell}(\widehat \pi^\ast, 1-c, \gamma) \}.
    \]
    Meanwhile, construct variance estimates $\widehat{\sigma}^2_L$ using the unbiased sample variance of $\varphi_L(Z, \widehat \eta^\ast, c, \gamma)$.
\end{enumerate}
\end{algorithm}

\medskip

The following result establishes convergence of the TMLE to a Gaussian limiting distribution centered on the truth for fixed $c$ and $\gamma$. 

\medskip

\begin{theorem}[Weak convergence, fixed $c, \gamma$] \label{thm:weak-conv}
    Fix $c \in \left(0, \tfrac{1}{2}\right), \gamma \in (0, \infty)$. Assume the nuisance convergence rates $\| \widehat{\pscore} - \pscore \| = o_{\bbP}\left( n^{-1/4} \right)$, $\| \widehat{\omodel}_0 - \omodel_0 \| = o_{\bbP}\left( n^{-1/4} \right)$, and $\|\widehat{\omodel}_1 - \omodel_1 \| = o_{\bbP}\left( n^{-1/4} \right)$, and that $\left|\tfrac{\hat{\sigma}_L(c, \gamma)}{\sigma_L(c, \gamma)} - 1\right| = o_{\bbP}(1)$ and $\left|\tfrac{\hat{\sigma}_U(c, \gamma)}{\sigma_U(c, \gamma)} - 1\right| = o_{\bbP}(1)$. 
    Then
    \begin{align}
        \frac{\sqrt{n}}{\widehat{\sigma}_L(c, \gamma)} \left( \widehat L_s(c, \gamma) - \smoothL(c, \gamma) \right) \leadsto N\left( 0, 1 \right) \quad \text{ and } \quad \frac{\sqrt{n}}{\widehat{\sigma}_U(c, \gamma)} \left( \widehat U_s(c, \gamma) - U_s(c, \gamma) \right) \leadsto N\left( 0, 1 \right).
    \end{align}
\end{theorem}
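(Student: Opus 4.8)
The plan is to follow the standard template for establishing asymptotic linearity of a targeted minimum loss-based estimator, specialized to exploit the structure of the efficient influence functions in Theorem~\ref{thm:eifs}; I describe it for $\widehat{L}_s(c,\gamma)$, the case of $\widehat{U}_s(c,\gamma)$ being identical up to relabeling. Because nuisances are estimated on a separate fold (sample splitting, as assumed in this section), I would condition throughout on that auxiliary fold, so that the initial estimates $\widehat{\eta}$ and the fluctuated estimates $\widehat{\eta}^\ast$ may be treated as fixed functions when analyzing the estimation fold; this avoids Donsker-type complexity conditions. Two preliminary observations set up the decomposition. First, the TMLE is a substitution estimator whose targeting step forces the empirical mean of the centered influence function to vanish, equivalently $\bbP_n\{\eif_L(Z, \widehat{\eta}^\ast, c, \gamma)\} = \widehat{L}_s(c,\gamma)$ up to an $o_\bbP(n^{-1/2})$ term if the iteration is not run fully to convergence, so the TMLE coincides with its one-step form. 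Second, the one-dimensional logistic fluctuation satisfies $\epsilon^\ast = O_\bbP(n^{-1/2})$, hence $\|\widehat{\eta}^\ast - \widehat{\eta}\| = O_\bbP(n^{-1/2})$, so the updated estimates inherit the hypothesized rates: $\|\widehat{\pscore}^\ast - \pscore\|$, $\|\widehat{\omodel}_0^\ast - \omodel_0\|$, $\|\widehat{\omodel}_1^\ast - \omodel_1\|$ are all $o_\bbP(n^{-1/4})$.

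Combining the one-step identity with the von Mises expansion guaranteed by Theorem~\ref{thm:eifs} yields the canonical three-term decomposition
\begin{align}
\widehat{L}_s(c,\gamma) - \smoothL(c,\gamma) &= (\bbP_n - \bbP)\,\eif_L(Z, \eta, c, \gamma) \;+\; (\bbP_n - \bbP)\lcb \eif_L(Z, \widehat{\eta}^\ast, c, \gamma) - \eif_L(Z, \eta, c, \gamma) \rcb \\ &\quad {}+\; \mathsf{R}(\widehat{\eta}^\ast, \eta),
\end{align}
where $\mathsf{R}$ is exactly the second-order remainder from \eqref{eq:von-mises}. The first term is an average of i.i.d.\ mean-zero summands. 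The key point is that although overlap may fail, the influence function is nonetheless bounded: on the event $\{s_g(\pscore, c, \gamma) > 0\}$ one has $\pscore > c$, hence $s_g(\pscore, c, \gamma)/\pscore \le 1/c$, and likewise $s_l(\pscore, 1-c, \gamma)/(1-\pscore) \le 1/c$ on $\{s_l(\pscore, 1-c, \gamma) > 0\}$, so the inverse-propensity-weighted terms in $\eif_L$ are bounded; combined with $Y \in [0,1]$ and the boundedness of $s_l, s_g$ and their derivatives, $\eif_L$ is bounded and has finite variance $\sigma_L^2 := \Var_\bbP\{\eif_L(Z, \eta, c, \gamma)\}$, which I take to be strictly positive. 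The classical CLT then gives $\sqrt{n}\,(\bbP_n - \bbP)\,\eif_L(Z, \eta, c, \gamma) \leadsto N(0, \sigma_L^2)$.

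It remains to show the other two terms are $o_\bbP(n^{-1/2})$. For the empirical-process term, conditioning on the auxiliary fold makes it a centered average with conditional variance at most $n^{-1}\|\eif_L(\cdot, \widehat{\eta}^\ast, c, \gamma) - \eif_L(\cdot, \eta, c, \gamma)\|^2$; Lipschitz continuity of $\eta \mapsto \eif_L(\cdot, \eta, c, \gamma)$ in $L_2(\bbP)$ (using the boundedness of $s_l, s_g$, $\dot{s}_l$, $\dot{s}_g$, the bounds on the weights noted above, and boundedness of $Y$) together with $\widehat{\eta}^\ast \stackrel{p}{\to} \eta$ makes this $o_\bbP(n^{-1})$, so the term is $o_\bbP(n^{-1/2})$ by the usual conditional-Chebyshev plus subsequence argument. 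For the remainder, I would compute $\mathsf{R}(\widehat{\eta}^\ast, \eta)$ explicitly from $\eif_L$ and verify it is a sum of integrals of products of nuisance errors: cross terms $\int (\widehat{\omodel}_a^\ast - \omodel_a)(\widehat{\pscore}^\ast - \pscore)(\cdots)\,d\bbP$, together with quadratic terms arising from the second-order Taylor bound $s(\widehat{\pscore}^\ast) - s(\pscore) - \dot s(\pscore)(\widehat{\pscore}^\ast - \pscore) = O\{(\widehat{\pscore}^\ast - \pscore)^2\}$ enabled by the bounded second derivatives, so that Cauchy--Schwarz gives $|\mathsf{R}(\widehat{\eta}^\ast, \eta)| \lesssim \|\widehat{\pscore}^\ast - \pscore\|^2 + \sum_{a \in \{0,1\}} \|\widehat{\omodel}_a^\ast - \omodel_a\|\,\|\widehat{\pscore}^\ast - \pscore\| = o_\bbP(n^{-1/2})$ under the rate hypotheses. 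Finally, $\widehat{\sigma}_L^2 \stackrel{p}{\to} \sigma_L^2$ because the sample variance of $\eif_L(Z, \widehat{\eta}^\ast, c, \gamma)$ converges to $\Var_\bbP\{\eif_L(Z, \eta, c, \gamma)\}$ by boundedness, continuity in $\eta$, and consistency of $\widehat{\eta}^\ast$; Slutsky's theorem then delivers the stated convergence, and the same chain applied to $\eif_U$ handles $\widehat{U}_s(c,\gamma)$.

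I expect the main obstacle to be the joint control of the empirical-process and second-order-remainder terms in the absence of overlap: a priori the inverse-propensity weights are unbounded, which would invalidate both the finite-variance claim for the linear term and the Cauchy--Schwarz bounds on the remainder. The resolution is structural rather than through an extra assumption---the smoothing functions $s_l, s_g$ vanish exactly where the propensity score approaches the boundary, so $s_g(\pscore, c, \gamma)/\pscore$ and $s_l(\pscore, 1-c, \gamma)/(1-\pscore)$ are bounded by $1/c$, and the $\dot s$-correction terms in $\eif_L$ absorb the otherwise first-order sensitivity of the smooth trimmed effect to estimation of $\pscore$, leaving only a genuinely second-order dependence. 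A secondary point requiring care is confirming that the iterative TMLE fluctuation does not degrade the nuisance rates, which is handled by the $O_\bbP(n^{-1/2})$ bound on the fluctuation parameter $\epsilon^\ast$ noted above.
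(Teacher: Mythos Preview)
Your proposal is correct and follows essentially the same approach as the paper's proof, which simply cites the standard TMLE asymptotic-linearity template (empirical EIF equation solved by construction, empirical-process term controlled by sample splitting, second-order remainder controlled by the rate hypotheses) and refers to \cite{diaz2023nonparametric} for details. Your write-up is in fact more explicit than the paper's sketch, and your observation that the ratios $s_g(\pscore,c,\gamma)/\pscore$ and $s_l(\pscore,1-c,\gamma)/(1-\pscore)$ are bounded by $1/c$---hence no overlap assumption is needed---is exactly the structural point the paper emphasizes in the discussion following the theorem.
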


Theorem~\ref{thm:weak-conv} provides the key convergence guarantee for our TMLE in Algorithm~\ref{alg:tmle}, which relies on two key properties: first, the fluctuation model of Section~\ref{sec:tmle} ensures that the TMLE solves the empirical EIF estimation equation, eliminating first-order bias; second, cross-fitting controls the empirical process remainder, and induces the standard $n^{-1/4}$ rate double-robustness requirement. The theorem demonstrates that if the nuisance functions (the propensity score and outcome regressions) are estimated at $n^{-1/4}$ rates in root mean squared error, then our estimators for the smooth lower and upper bounds converge at $\sqrt{n}$-rates to the true bounds with Gaussian limiting distributions. Moreover, these estimators achieve the nonparametric efficiency bound, which is the variance of each bound's EIF. These nuisance convergence rate assumptions can be satisfied using flexible nonparametric estimators under standard regularity conditions such as smoothness or sparsity \citep{Gyorfietal02}. 

\subsection{Confidence intervals and width analysis}
Theorem~\ref{thm:weak-conv} suggests a straightforward confidence interval for the ATE, as the intersection of $1-\alpha/2$ one-sided intervals for the lower and upper bounds:
\begin{align}
\label{eq:confidence-interval-ate-bounds}
\left[ \widehat L_s(c, \gamma) - q_{1-\alpha/2} \sqrt{\frac{\widehat \sigma_L^2(c, \gamma)}{n}}, \quad \widehat U_s(c, \gamma) + q_{1-\alpha/2} \sqrt{\frac{\widehat \sigma_U^2(c, \gamma)}{n}} \right].
\end{align}
This confidence interval obtains valid coverage for the ATE by Theorem~\ref{thm:weak-conv}, Proposition~\ref{prop:smooth-bounds}, and a union bound. 

\medskip

\begin{remark}
    It may be possible to reduce the critical values used to construct the confidence interval, following the procedure in \citet{imbens2004confidence}. The intuition is that the ATE cannot be close to both the lower and upper bounds when they are well separated, and therefore one can consider the intersection of $1-\alpha$ one-sided intervals.
\end{remark}

\medskip

Studying the relationship between the width of the confidence intervals in \eqref{eq:confidence-interval-ate-bounds} in relation to the threshold $c$ and the smoothness parameter $\gamma$ yields insights on how to choose these parameters. The width of the $(1-\alpha)\times 100\%$ interval is given by 
\begin{align}
    \widehat{W}_s(c, \gamma) = \widehat{U}_s(c, \gamma) - \widehat{L}_s(c, \gamma) + \frac{q_{1-\alpha/2}}{\sqrt{n}}\lcb \sqrt{\widehat{\sigma}^2_L(c,\gamma)} + \sqrt{\widehat{\sigma}^2_U(c,\gamma)}  \rcb
\end{align}
Writing the oracle width as
\begin{align}
    W_s(c, \gamma) = U_s(c, \gamma) - L_s(c, \gamma) + \frac{q_{1-\alpha/2}}{\sqrt{n}}\lcb \sqrt{\sigma^2_L(c,\gamma)} + \sqrt{\sigma^2_U(c,\gamma)}  \rcb,
\end{align}
we have, under the assumptions of Theorem~\ref{thm:weak-conv}, that
\begin{align}
    \label{eq:width-consistency}
    \widehat{W}_s(c, \gamma) &= W_s(c, \gamma) + \left(\widehat{W}_s(c, \gamma) - W_s(c, \gamma) \right) = W_s(c, \gamma) + o_{\bbP}(1).
\end{align}
The following proposition provides a simple upper bound for the estimated confidence interval width that makes explicit the trade-off when choosing $c$ and $\gamma$.

\medskip

\begin{proposition}[Confidence interval width bound]
    \label{prop:confidence-interval-width-bound}
    Fix $c \in (0, \tfrac{1}{2})$ and $\gamma > 0$. By Lemma~\ref{lemma:bounded-derivatives-smooth-approximations}, for the smooth approximations \eqref{eq:example-smooth-approximations-lower} and \eqref{eq:example-smooth-approximations-upper}, there exists a constant $D > 0$ such that, for all $\pscore \in (0, 1)$, $|\dot{s}_\ell(\pscore, c, \gamma)| \leq D/\gamma$ and $|\dot{s}_g(\pscore, c, \gamma)| \leq D/\gamma$. Then, under the assumptions of Theorem~\ref{thm:weak-conv},
    \begin{align}
        \widehat{W}_s(c, \gamma) \leq & \, \bbP(\pscore \leq c + \gamma) + \bbP(\pscore \geq 1 - c - \gamma) \\
        &+ 2\frac{q_{1-\alpha/2}}{\sqrt{n}} \sqrt{1 + \frac{1}{2c} + \frac{9D^2}{4\gamma^2}} \\
        & + o_{\bbP}(1).
    \end{align}
\end{proposition}

Proposition~\ref{prop:confidence-interval-width-bound} makes clear the trade-off involved in choosing $c$ and $\gamma$. Choosing $c$ small decreases the size of the non-overlap region, but increases the efficiency bound (through the $1/2c$ term in the width bound). Choosing $\gamma$ small decreases the smooth approximation error, but also increases the efficiency bound (through the $9D^2 / 4\gamma^2$ term. In the next section, we turn to the problem of finding an optimal tradeoff in order to find a choice of $c$ and $\gamma$ yielding the narrowest interval.

\subsection{Uniform inference} 
In practice, researchers face tradeoffs when choosing a parameter combination: smaller $c$ and $\gamma$ will give tighter bounds (the identification gap and smooth approximation errors will be smaller), but these can be more difficult to estimate (the estimator variance of the treatment effect in the overlap population will increase), leading to a wider combined confidence region for the ATE.
In the previous section, we established asymptotic guarantees for fixed threshold and smoothness parameter combinations. In this section, we show uniform weak convergence over a range of thresholds and smoothness parameters. To apply this result in practice, we develop a practical inference approach based on the multiplier bootstrap that holds uniformly over a finite grid of parameter choices. This uniform inference approach allows researchers to specify a set of plausible parameter combinations and then use \emph{the narrowest confidence interval}. Finally, by allowing the grid size to go to zero, we show that the multiplier bootstrap technique recovers the uniform weak convergence result in the limit.

\medskip

\begin{theorem}[Uniform weak convergence]
    \label{thm:weak-convergence-uniform}
    Let $\mathcal{C} = \left[c_1, c_2 \right]$ for $0 < c_1 < c_2 < \tfrac{1}{2}$ and $\Gamma = [\gamma_1, \gamma_2]$ for $0 < \gamma_1 < \gamma_2 < \infty$. Let $\widehat{\sigma}^2_L(c, \gamma)$ and $\widehat{\sigma}^2_U(c, \gamma)$ denote the estimators of the efficient influence function variances 
    \begin{align}
        \sigma^2_L(c, \gamma) = \E\left[ \lcb \eif_L(Z, \eta, c, \gamma) - \smoothL(c, \gamma) \rcb^2 \right] \quad \text{ and } \quad \sigma^2_U(c, \gamma) = \E\left[ \lcb \eif_U(Z, \eta, c, \gamma) - \smoothU(c, \gamma) \rcb^2 \right],
    \end{align} respectively.
    Assume that
    \begin{enumerate}
        \item \label{assumption:differentiable-1} For all $x \in [0, 1]$ and $\gamma \in \Gamma$, the maps $c \mapsto \smooth_{\ell}(x, c, \gamma)$, $c \mapsto \dot{\smooth}_{\ell}(x, c, \gamma)$, $c \mapsto \smooth_g(x, c, \gamma)$, and $c \mapsto \dot{\smooth}_g(x, c, \gamma)$ are differentiable with bounded first derivative.
        \item \label{assumption:differentiable-2} For all $x \in [0, 1]$ and $c \in \mathcal{C}$, the maps
        $\gamma \mapsto \smooth_{\ell}(x, c, \gamma)$, $\gamma \mapsto \dot{\smooth}_{\ell}(x, c, \gamma)$, $\gamma \mapsto \smooth_{g}(x, c, \gamma)$, and $\gamma \mapsto \dot{\smooth}_g(x, c, \gamma)$ are differentiable with bounded first derivative.
        \item \label{assumption:rates} $\| \widehat{\pscore} - \pscore \| = o_{\bbP}\left( n^{-1/4} \right)$, $\| \widehat{\omodel}_0 - \omodel_0 \| = o_{\bbP}\left( n^{-1/4} \right)$, and $\|\widehat{\omodel}_1 - \omodel_1 \| = o_{\bbP}\left( n^{-1/4} \right)$.
        \item \label{assumption:sup-consistency} $\sup_{c \in \mathcal{C}, \gamma \in \Gamma} \left| \frac{\hat{\sigma}_L(c, \gamma)}{\sigma_L(c, \gamma)} - 1 \right| = o_{\bbP}(1)$ and $\sup_{c \in \mathcal{C}, \gamma \in \Gamma} \left| \frac{\hat{\sigma}_U(c, \gamma)}{\sigma_U(c, \gamma)} - 1 \right| = o_{\bbP}(1)$. 
    \end{enumerate}
    Then,
    \begin{align}
        \frac{\widehat{L}_s(c, \gamma) - \smoothL(c, \gamma)}{\widehat{\sigma}_L(c, \gamma) \slash \sqrt{n}} \leadsto \mathbb{G}_L(c, \gamma) \quad \text{ and } \quad \frac{\widehat{U}_s(c, \gamma) - U_s(c, \gamma)}{\widehat{\sigma}_U(c, \gamma) \slash \sqrt{n}} \leadsto \mathbb{G}_U(c, \gamma)
    \end{align}
    in $\ell^\infty(\mathcal{C} \times \Gamma)$, where $\mathbb{G}_L(\cdot, \cdot)$ and $\mathbb{G}_U(\cdot, \cdot)$ are mean-zero Gaussian processes with covariances given by 
    \begin{align}
        \E\lcb \mathbb{G}_L(c_1, \gamma_1) \mathbb{G}_L(c_2, \gamma_2) \rcb &= \E\lcb \widetilde{\eif}_L(Z, \eta, c_1, \gamma_1) \widetilde{\eif}_L(Z, \eta, c_2, \gamma_2) \rcb \quad \text{ and } \quad  \\
        \E\lcb \mathbb{G}_U(c_1, \gamma_1) \mathbb{G}_U(c_2, \gamma_2) \rcb &= \E\lcb \widetilde{\eif}_U(Z, \eta, c_1, \gamma_1) \widetilde{\eif}_U(Z, \eta, c_2, \gamma_2) \rcb,
    \end{align}
    and where
    \begin{align}
        \widetilde{\eif}_L(Z, \eta, c, \gamma) &= \lcb \eif_L(Z, \eta, c, \gamma) - \smoothL(c, \gamma) \rcb / \sigma_L(c, \gamma) \\
        \widetilde{\eif}_U(Z, \eta, c, \gamma) &= \lcb \eif_U(Z, \eta, c, \gamma) - \smoothU(c, \gamma) \rcb / \sigma_U(c, \gamma).
    \end{align}
\end{theorem}
The first two assumptions of the theorem must be verified for the specific choice of smooth approximations; we do so for the smooth approximations \eqref{eq:example-smooth-approximations-lower} and  \eqref{eq:example-smooth-approximations-upper} in Lemma~\ref{lemma:bounded-derivatives-smooth-approximations}. The rate conditions are the same as those required in Theorem~\ref{thm:weak-conv} for pointwise weak convergence. The final assumption is stronger than is required for the pointwise case, requiring that the variance estimators are uniformly consistent over $\mathcal{C} \times \Gamma$.

In the following algorithm, we assume as input a set of finite parameters: $\mathcal{C}_K = \{ (c_1, \gamma_1), \dots, (c_K, \gamma_K) \}$. For brevity, we'll omit $(c_k, \gamma_k)$ arguments, and instead use $k$ subscripts. For example, we will let $\widehat L_k \equiv \widehat L_s(c_k, \gamma_k)$ and $\widehat \sigma_{L,k} \equiv \widehat \sigma_{L}(c_k, \gamma_k)$. We use a multiplier bootstrap to simulate the joint distribution of all bound estimators, then calibrate our confidence intervals using the most extreme realization across all parameter combinations. This ensures that even after selecting the narrowest interval, we have appropriate coverage.

\medskip

\begin{algorithm} \label{alg:bootstrap} Given $K$ threshold and smooth parameter combinations $\{ (c_k, \gamma_k) \}_{k=1}^{K}$:
    \begin{enumerate}
        \item For each $k \in [K]$, construct efficient influence function estimates as in the prior section: \( \left\{ \varphi_{L,k}(Z_i, \widehat \eta),  \varphi_{U,k}(Z_, \widehat \eta) \right\}_{i=1}^{n} \). Also construct point estimates and standard error estimators for the upper and lower bounds: $\left\{ \widehat L_k, \widehat U_k \right\}$ and $\left\{ \widehat \sigma_{L,k}, \widehat \sigma_{U,k} \right\}$.
        \item For $B$ bootstrap samples, draw i.i.d.\ multipliers $\{\xi_i^{(b)}\}_{i=1}^n$ with $\E (\xi) =0$, $\E ( \xi^2) =1$, and form the  studentized residuals for each index $k$: 
        \[
        T_{L,k}^{(b)}\ =\ \frac{1}{\sqrt n}\sum_{i=1}^n \xi_i^{(b)}\left\{ 
        \frac{ \varphi_{L,k}(Z_i, \widehat \eta) - \widehat L_k }{\widehat \sigma_{L,k}} \right\},
        \qquad
        T_{U,k}^{(b)}\ =\ \frac{1}{\sqrt n}\sum_{i=1}^n \xi_i^{(b)}
        \left\{ \frac{ \varphi_{U,k}(Z_i, \widehat \eta) - \widehat U_k }{\widehat\sigma_{U,k}} \right\}.
        \]
        \item Compute the bootstrap max–max statistic
        \[
        \widehat M^{(b)}\ :=\ \max_{1\le k\le K}\ \max\Big( \, T_{L,k}^{(b)}\,,\, -\, T_{U,k}^{(b)}\,\Big).
        \]
        Let $\widehat q_{1-\alpha}$ be the empirical $(1-\alpha)$ quantile of $\{ \widehat M^{(b)}\}_{b=1}^B$.
        \item Construct a uniform confidence set across all parameter sets as 
        \[
        \left\{ \left[ \widehat L_k - \widehat{q}_{1-\alpha} \frac{\widehat{\sigma}_{L,k}}{\sqrt{n}}, \widehat{U}_k + \widehat{q}_{1-\alpha} \frac{\widehat{\sigma}_{U,k}}{\sqrt{n}} \right] \right\}_{k=1}^{K}.
        \]
        \item Construct the narrowest confidence interval for the ATE as
        \[
        \widehat{CI} = \left[ \max_{k} \lcb \widehat L_k - \widehat q_{1-\alpha} \frac{\widehat{\sigma}_{L,k}}{\sqrt{n}} \rcb, \min_k \lcb \widehat U_k + \widehat q_{1-\alpha} \frac{\widehat \sigma_{U,k}}{\sqrt{n}} \rcb \right].
        \]
    \end{enumerate}
\end{algorithm}
Algorithm~\ref{alg:bootstrap} combines the TMLE in Algorithm~\ref{alg:tmle} with the multiplier bootstrap. It outputs two quantities: a uniform confidence set across all parameters, and the narrowest CI for the ATE. A key feature of our approach is the construction of the max-max statistic using only one side of each studentized residual: $T_{L,k}$ for lower bounds and $-T_{U,k}$ for upper bounds. This asymmetric construction reflects the different types of estimation errors we seek to control. For the lower bounds, the relevant error event is over-estimation (when $\widehat L_k$ exceeds the true $L_k$), which corresponds to positive deviations in $T_{L,k}$. For the upper bounds, the critical error is under-estimation (when $\widehat U_k$ falls below the true $U_k$), corresponding to negative deviations in $T_{U,k}$. By incorporating only the relevant tail of each distribution, this construction yields one-sided simultaneous confidence bands that cover the set $[L_{\max}, U_{\min}]$ with the desired coverage probability.

The next result establishes that the uniform set and narrowest CI facilitate asymptotically valid inference. 

\medskip 

\begin{theorem}[Asymptotic validity of narrowest bootstrap interval] \label{thm:bootstrap}
    Suppose the conditions of Theorem~\ref{thm:weak-conv} hold and a uniform confidence set and narrow confidence interval are constructed according to Algorithm~\ref{alg:bootstrap}. Further suppose $\widehat \sigma_{L,k} \stackrel{p}{\to} \sigma_{L,k}$ and $\widehat \sigma_{U,k} \stackrel{p}{\to} \sigma_{U,k}$ for all $k \in [K]$. Then, 
    \[
    \bbP \left\{ \forall k \in [K],\; \E(Y^1 - Y^0) \in \left[ \widehat L_k - \widehat q_{1-\alpha} \frac{\widehat{\sigma}_{L,k}}{\sqrt{n}}, \widehat U_k + \widehat{q}_{1-\alpha} \frac{\widehat{\sigma}_{U,k}}{\sqrt{n}} \right]  \right\} \geq 1 - \alpha + o_{\bbP}(1).
    \]
    Moreover, 
    \[
    \bbP \left\{ \E(Y^1 - Y^0) \in \widehat{CI} \right\} \geq 1 - \alpha + o_{\bbP}(1). 
    \]
\end{theorem}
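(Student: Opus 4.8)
The plan is to reduce the coverage statement to the event that a single max--max statistic lies below the bootstrap quantile, then prove a joint central limit theorem for the $2K$ bound estimators, show that the multiplier bootstrap consistently estimates the law of that statistic, and finish with a quantile-consistency argument. For the reduction, I would use Proposition~\ref{prop:smooth-bounds}: since $\E(Y^1 - Y^0) \in [L_k, U_k]$ for every $k \in [K]$, the event that $\E(Y^1-Y^0)$ lies in $[\widehat L_k - \widehat q_{1-\alpha}\widehat\sigma_{L,k},\,\widehat U_k + \widehat q_{1-\alpha}\widehat\sigma_{U,k}]$ for all $k$ is implied by $\bigcap_{k \in [K]}\{\widehat L_k - \widehat q_{1-\alpha}\widehat\sigma_{L,k} \leq L_k\} \cap \{U_k \leq \widehat U_k + \widehat q_{1-\alpha}\widehat\sigma_{U,k}\}$, and the latter (dividing each inequality by the positive $\widehat\sigma$'s, which I take to denote estimated standard errors so that they match the interval half-widths) is exactly $\{\widehat S_n \leq \widehat q_{1-\alpha}\}$, where
\[
\widehat S_n := \max_{1\leq k\leq K}\max\!\left(\frac{\widehat L_k - L_k}{\widehat\sigma_{L,k}},\; -\,\frac{\widehat U_k - U_k}{\widehat\sigma_{U,k}}\right).
\]
Moreover, reading off the definition of $\widehat{CI}$ in Algorithm~\ref{alg:bootstrap}, the event $\{\E(Y^1-Y^0) \in \widehat{CI}\}$ is identical to the uniform-coverage event, so the second assertion follows from the first. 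It therefore suffices to show $\bbP(\widehat S_n \leq \widehat q_{1-\alpha})$ converges to a limit that is at least $1-\alpha$.

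\textbf{Joint CLT.} Next I would upgrade Theorem~\ref{thm:weak-conv} to a joint statement. Its proof supplies the asymptotic-linearity expansions $\sqrt n(\widehat L_k - L_k) = \sqrt n(\bbP_n - \bbP)\{\varphi_{L,k}(\cdot,\eta)\} + o_{\bbP}(1)$, and likewise for $\widehat U_k$, where the remainders --- the empirical-process term (controlled by cross-fitting/sample splitting and $L_2(\bbP)$-consistency of $\widehat\eta$) and the second-order remainder (controlled by the $n^{-1/4}$ nuisance rates) --- are $o_{\bbP}(1)$ over the fixed finite index set. Stacking these $2K$ influence functions, applying the Lindeberg central limit theorem with the Cram\'er--Wold device, and invoking $\widehat\sigma_{L,k} \stackrel{p}{\to}\sigma_{L,k}$, $\widehat\sigma_{U,k}\stackrel{p}{\to}\sigma_{U,k}$ with Slutsky, the studentized $2K$-vector converges to a centered Gaussian vector $(G_{L,k}, G_{U,k})_{k\in[K]}$. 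Since $(x,y)\mapsto\max_k\max(x_k, -y_k)$ is Lipschitz, the continuous mapping theorem gives $\widehat S_n \leadsto S := \max_k\max(G_{L,k}, -G_{U,k})$.

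\textbf{Bootstrap consistency and conclusion.} I would then show that, conditionally on the data, $\widehat M^{(b)} = \max_k\max(T_{L,k}^{(b)}, -T_{U,k}^{(b)})$ converges weakly in probability to the same $S$. By the multiplier CLT for a fixed $2K$-dimensional vector, this reduces to showing that the empirical covariance matrix of the studentized residuals $(\varphi_{L,k}(Z_i,\widehat\eta)-\widehat L_k)/\widehat\sigma_{L,k}$ and $(\varphi_{U,k}(Z_i,\widehat\eta)-\widehat U_k)/\widehat\sigma_{U,k}$ converges in probability to the correlation matrix of $(G_{L,k},G_{U,k})_{k}$; the diagonal entries are consistent by assumption, and the off-diagonal entries follow from $L_2(\bbP)$-consistency of the estimated EIFs --- for which it is crucial that the inverse-propensity weights in Theorem~\ref{thm:eifs} are tempered by $s_l$ and $s_g$, so that the EIFs are uniformly bounded without any overlap assumption --- together with the expansion above, which makes recentering at $\widehat L_k$ rather than at the population mean asymptotically negligible. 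The continuous mapping theorem then yields $\widehat M^{(b)} \mid \text{data} \leadsto S$ in probability. Finally, because $\widehat\sigma_{L,k},\widehat\sigma_{U,k}$ have strictly positive limits, $S$ is the maximum of a nondegenerate finite Gaussian vector, so its distribution function $F_S$ is continuous at the quantile $q_{1-\alpha}$; the standard quantile-consistency argument gives $\widehat q_{1-\alpha}\stackrel{p}{\to} q_{1-\alpha}$, and combining this with $\widehat S_n\leadsto S$ and continuity of $F_S$ yields $\bbP(\widehat S_n\leq\widehat q_{1-\alpha})\to F_S(q_{1-\alpha}) = 1-\alpha$, so by the reduction both coverage probabilities are at least $1-\alpha$ asymptotically.

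The step I expect to be hardest is the bootstrap consistency step, and within it the convergence of the empirical covariance matrix of the \emph{estimated}, recentered EIFs: this requires a uniform boundedness argument for the EIFs that leans on the smoothing-tempered inverse-propensity weights (since overlap is not assumed), plus a careful argument that substituting $\widehat\eta$ for $\eta$ and $\widehat L_k$ for the population mean does not perturb the limiting covariance. A purely bookkeeping point is reconciling the two uses of $\widehat\sigma$ in Algorithm~\ref{alg:bootstrap} --- as the EIF standard deviation inside the studentized residuals, versus as the standard error scaling the interval half-widths --- but this does not affect the substance of the argument.
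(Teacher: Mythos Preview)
Your proposal is correct and follows essentially the same three-step structure as the paper's proof: reduce coverage to a max--max statistic event via Proposition~\ref{prop:smooth-bounds}, establish joint weak convergence of the $2K$-vector of studentized bound estimators, and invoke multiplier-bootstrap consistency plus quantile convergence. The only minor differences are that you observe the uniform-coverage event and the $\widehat{CI}$ event are literally identical (the paper derives them separately from the max--max event), and the paper's proof explicitly separates the Monte Carlo error from finite $B$ by distinguishing the empirical bootstrap quantile $\widehat q_{1-\alpha}$ from the exact conditional quantile $\widetilde q_{1-\alpha}$, whereas you fold this into the final quantile-consistency step.
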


Theorem~\ref{thm:bootstrap} establishes the key inferential guarantees, which because of the finite-grid approximation only require the pointwise conditions of Theorem~\ref{thm:weak-conv}. First, it shows that the uniform confidence set attains valid coverage across all bounds.  As a result, the narrowest CI for the ATE, constructed using the maximum lower CI bound and minimum upper CI bound, also maintains valid coverage. The crucial caveat is that this comes at a cost: the critical value $\widehat q_{1-\alpha}$ will be larger than its pointwise counterpart. In practice, we have seen roughly 10\% increases (e.g., from $1.96$ to approximately $2.1$) with $K = 162$. This tradeoff is valuable when researchers are uncertain about the optimal balance between smooth approximation error and estimator variance, and this balance varies substantially across the parameter space. Conversely, when prior analyses or diagnostic tools provide clear guidance toward a single parameter combination, the pointwise estimators from the prior section are preferable. In the next section, we characterize the relationship between the non-overlap bound confidence interval width and point estimator confidence interval width across different overlap regimes, providing theoretical guidance on when the uniform inference approach is most beneficial.

A natural question is whether the finite-grid bootstrap procedure of Algorithm~\ref{alg:bootstrap} approximates the theoretically optimal uniform inference as the grid becomes dense. The following proposition confirms this, showing that the bootstrap critical value converges to the quantile of the supremum of the limiting Gaussian process as the grid mesh goes to zero.

\medskip

\begin{proposition}[Bootstrap recovery of uniform inference]
    \label{prop:bootstrap-uniform}
    Suppose the conditions of Theorem~\ref{thm:weak-convergence-uniform} hold. Let $\{ \mathcal{C}_K \}_{K=1}^\infty$ be a sequence of finite grids $\mathcal{C}_K = \{ c_k, \gamma_k \}_{k=1}^K \subset \mathcal{C} \times \Gamma$ with mesh size $\delta_K = \max_{(c, \gamma) \in \mathcal{C} \times \Gamma} \min_{1 \leq k \leq K} \lcb |c - c_k| + |\gamma - \gamma_k| \rcb \to 0$ as $K \to \infty$. Let $\widehat{q}_{1-\alpha}^{K}$ denote the bootstrap critical value from Algorithm~\ref{alg:bootstrap} applied to the grid $\mathcal{C}_K$, and let $q_{1-\alpha}^\infty$ denote the $(1-\alpha)$ quantile of $\sup_{(c, \gamma) \in \mathcal{C} \times \Gamma} \max\left( \mathbb{G}_L(c, \gamma), -\mathbb{G}_U(c, \gamma)\right)$, where $\mathbb{G}_L$ and $\mathbb{G}_U$ are the limiting Gaussian processes of Theorem~\ref{thm:weak-convergence-uniform}. Assume $B \to \infty$. Then, as $n, K \to \infty$,
    \begin{align}
        \widehat{q}_{1-\alpha}^{K} = q_{1-\alpha}^\infty + o_{\bbP}(1).
    \end{align}
\end{proposition}

\section{Comparison of non-overlap bounds to standard methods across overlap regimes}
\label{sec:comparison-wald}
When the ATE is not point identifiable, due to structural overlap violations, then non-overlap bounds remain valid even when point estimation is impossible.
When the ATE is point identified, however, it is of interest to compare our methods to traditional methods based on point estimation. In this section, we compare the confidence intervals from non-overlap bounds to traditional confidence intervals constructed using the empirical variance of the EIF and centered around an asymptotically normal and efficient estimator of the ATE. We organize the comparison by considering three overlap regimes in which the ATE is identified: strong overlap ($\pscore$ bounded away from 0 and 1), somewhat weak overlap (propensity score tails decay polynomially), and very weak overlap. We find conditions under which the non-overlap confidence intervals are asymptotically equivalent to point estimator confidence intervals under strong overlap (Theorem~\ref{thm:equiv-tmle}), remain valid even when $\sqrt{n}$-inference for point estimators breaks down entirely under very weak overlap (Theorem~\ref{thm:very-weak-overlap}), and, interestingly, can yield shorter intervals in finite samples near the phase transition between somewhat weak and very weak overlap (Theorem~\ref{thm:somewhat-weak-overlap}). 

\subsection{Strong overlap}

The strong overlap assumption represents a best-case regime for traditional doubly-robust estimators, and is typically invoked as a simple condition to ensure that the efficiency bound is finite.

\medskip

\begin{assumption}[Strong overlap]
    \label{assumption:strong-overlap}
    There exists a $\delta > 0$ such that $\bbP\lcb \delta \leq \pscore(X) \leq 1 - \delta \rcb = 1$. 
\end{assumption}

We establish in the following theorem that under strong overlap, the non-overlap confidence intervals formed via Algorithm~\ref{alg:bootstrap} may be asymptotically equivalent to point estimator confidence intervals.

\medskip

\begin{theorem}[Asymptotic equivalence of non-overlap bounds and point estimator confidence intervals under strong overlap]
\label{thm:equiv-tmle}
Let $\widehat\psi$ denote an asymptotically normal and efficient estimator of the ATE based on the efficient influence function in \eqref{eq:eif-ate}, constructed using the same nuisance estimates and cross-fitting folds as the non-overlap bounds of Algorithm \ref{alg:bootstrap}. Let $\widehat\sigma_\psi$ denote the standard error estimator of the estimated EIF. Denote the endpoints of $\widehat{CI}$ constructed using the smooth approximations \eqref{eq:example-smooth-approximations-lower} and \eqref{eq:example-smooth-approximations-upper}  as
\begin{align}
    \underline{C}_n = \max_{1 \le k \le K}
    \lcb \widehat{L}_k - \widehat{q}_{1-\alpha} \frac{\widehat{\sigma}_{L,k}}{\sqrt n} \rcb,
    \qquad
    \overline{C}_n = \min_{1 \le k \le K}
    \lcb
    \widehat{U}_k + \widehat{q}_{1 - \alpha} \frac{\widehat{\sigma}_{U,k}}{\sqrt n}
    \rcb,
\end{align}
and the endpoints of the standard EIF-based point estimator confidence interval $\widehat{CI}_{\psi}$ as
\begin{align}
    \underline{C}_{\psi,n} = \widehat{\psi} - z_{1 - \alpha / 2}\frac{\widehat{\sigma}_\psi}{\sqrt{n}},
    \qquad
    \overline{C}_{\psi,n} = \widehat{\psi} + z_{1-\alpha/2}\frac{\widehat\sigma_\psi}{\sqrt{n}}.
\end{align}
Assume strong overlap (Assumption~\ref{assumption:strong-overlap}) and that:
\begin{enumerate}
    \item The propensity score estimator is uniformly consistent:
    $\|\widehat\pi-\pi\|_\infty=o_{\bbP}(1)$.
    \item The number of bootstrap draws satisfies $B\to\infty$.
\end{enumerate}
Then:
\begin{enumerate}
    \item \label{consequence:bounded-threshold-smoothness} If the finite collection of threshold--smoothness pairs $\{(c_k,\gamma_k)\}_{k=1}^K$ satisfies, for some $\varepsilon>0$, $\max_{1\le k\le K}(c_k + \gamma_k)\le \delta - \varepsilon$, then
    \begin{align}
        \sqrt n\,(\underline{C}_n - \underline{C}_{\psi,n}) = o_{\bbP}(1),
        \qquad
        \sqrt n\,(\overline{C}_n-\overline C_{\psi, n}) = o_{\bbP}(1).
        \label{eq:endpoint-convergence}
    \end{align}
    \item \label{consequence:bounded-threshold-smoothness-weak} If there exists at least one $k \in \{1, \dots, K \}$ such that for some $\varepsilon>0$, $c_k + \gamma_k \le \delta - \varepsilon$, then
    \begin{align}
        \sqrt n\,(\underline{C}_n - \underline{C}_{\psi,n}) \leq o_{\bbP}(1),
        \qquad
        \sqrt n\,(\overline{C}_n-\overline C_{\psi, n}) \geq o_{\bbP}(1).
        \label{eq:endpoint-convergence-weak}
    \end{align}
\end{enumerate}
\end{theorem}

Theorem~\ref{thm:equiv-tmle} provides welcome reassurance that, when strong overlap holds, there is no (asymptotic) penalty for uncertainty quantification using non-overlap bounds compared to a traditional point estimator confidence interval, as long as one uses threshold-smoothness pairs that are bounded by the overlap constant. Part \ref{consequence:bounded-threshold-smoothness} shows that if all threshold--smoothness pairs satisfy $c_k + \gamma_k \leq \delta - \epsilon$, then the non-overlap confidence interval is asymptotically exactly equivalent to a point estimator confidence interval. This is intuitive: in this regime, every estimated propensity score falls well within the overlap region, and the bounds collapse to the standard doubly-robust point estimate. Part~\ref{consequence:bounded-threshold-smoothness-weak} relaxes this to require that \textit{at least one} pair falls below the strong overlap threshold. Here, the non-overlap confidence interval is asymptotically at least as wide as the point estimator confidence interval, with any additional width driven by inflation of the bootstrap critical value from pairs with $c_k + \gamma_k > \delta$. Since the strong overlap threshold $\delta$ is typically unknown in practice, Part~\ref{consequence:bounded-threshold-smoothness-weak} ensures that including conservatively large thresholds in the grid preserves asymptotic validity at the cost of a modest width inflation from the bootstrap procedure. 

\subsection{Weak overlap regimes}
Next, we study a version of the slowly varying tails model of \citealt{dorn2025weakoverlapdoublyrobust} (who generalized the assumption used in \citealt{ma2020ipw}), which allows us to study more nuanced overlap violations:

\medskip

\begin{assumption}[Slowly varying tails]
    \label{assumption:slowly-varying-tails}
    Assume that every $\bbP \in \model$ satisfies, for some $\sigma_{\min} > 0$, $C > 0$, and $\gamma_0 > 0$:
    \begin{enumerate}
        \item Non-trivial outcome model: $\Var(Y \mid A, X) \geq \sigma^2_{\min}$. 
        \item Propensity score tails: for all $t \in [0, 1]$, $\bbP(\pscore \leq t) \leq C t^{\gamma_0 - 1}$ and $\bbP(1 - \pscore \leq t) \leq C t^{\gamma_0 - 1}$.
    \end{enumerate}
\end{assumption}
The first part of the assumption is required to rule out trivial cases in which the outcome is perfectly predictable. The second part controls the tail behavior of the propensity scores (here, we extend the original model to consider both tails). Note that the moment boundedness conditions in \citealt{dorn2025sensitivity}'s model (their Assumption 1) are already satisfied in our setting by outcome boundedness (our Assumption~\ref{assumption:bounded}). Following \citealt{dorn2025sensitivity}, we refer to $\gamma_0 > 2$ as \textit{somewhat weak overlap} and $\gamma_0 < 2$ as \textit{very weak overlap}, with a key phase transition in the behavior of the efficiency bound when $\gamma_0 = 2$. Figure~\ref{fig:overlap-regimes} visually illustrates strong overlap, somewhat weak, and very weak propensity score densities. In very weak overlap regimes, the subject of the following theorem, there is enough mass in the tails of the propensity score distribution to prevent asymptotically normal and efficient estimation of the ATE using standard doubly-robust estimators. 

\medskip

\begin{figure}[ht]
    \centering
    \begin{tikzpicture}
        \begin{axis}[
            name=panelA,
            title={(A) Strong overlap},
            width={0.3\textwidth},
            height={0.25\textwidth},
            axis x line = bottom,
            axis y line = left,
            xlabel={$\pi$},
            ylabel={Density},
            domain={0:1},
            samples=200,
            xtick={0, 0.2, 0.8, 1},
            xticklabels={$0$,$\delta$,$1{-}\delta$,$1$},
            ytick=\empty,
            ymin=0,
            ymax=5,
            xmin=0,
            xmax=1,
            clip=false,
        ]
            \addplot[blue,thick,domain={0.2:0.8}]{1260*x^4*(1-x)^4};
            \addplot[blue,thick,domain={0:0.2}]{0};
            \addplot[blue,thick,domain={0.8:1}]{0};
        \end{axis}

        \begin{axis}[
            name=panelB,
            title={(B) Somewhat weak overlap},
            at={($(panelA.south)+(4cm,0cm)$)},
            width={0.3\textwidth},
            height={0.25\textwidth},
            axis x line=bottom,
            axis y line=left,
            xlabel={$\pi$},
            ylabel={Density},
            domain={0.001:0.999},
            samples=200,
            xtick={0, 1},
            xticklabels={$0$,$1$},
            ytick=\empty,
            ymin=0,
            ymax=2.5,
            xmin=0,
            xmax=1,
            clip=false,
        ]
            \addplot[blue,thick]{6*x*(1-x)};
        \end{axis}

        \begin{axis}[
            name=panelC,
            title={(C) Very weak overlap},
            at={($(panelB.south)+(4cm,0)$)},
            width={0.3\textwidth},
            height={0.25\textwidth},
            axis x line = bottom,
            axis y line = left,
            xlabel={$\pi$},
            ylabel={Density},
            domain={0.01:0.99},
            samples=200,
            xtick={0, 1},
            xticklabels={$0$,$1$},
            ytick=\empty,
            ymin=0,
            ymax=4,
            xmin=0,
            xmax=1,
            clip=false,
        ]
            \addplot[blue,thick]{1 / (3.14159 * sqrt(x*(1-x)))};
        \end{axis}
    \end{tikzpicture}
    \caption{Illustration of propensity score densities under the three overlap regimes considered in Section~\ref{sec:comparison-wald}. (A)~Strong overlap: the propensity score density is supported on $[\delta, 1 - \delta]$ (Assumption~\ref{assumption:strong-overlap}. (B)~Somewhat weak overlap: the density extends to $0$ and $1$ with polynomial tails satisfying $\gamma_0 > 2$ (Assumption~\ref{assumption:slowly-varying-tails}). (C)~Very weak overlap: the density has polynomial tails satisfying $\gamma_0 < 2$ (Assumption~\ref{assumption:slowly-varying-tails}). }
    \label{fig:overlap-regimes}
\end{figure}

\medskip

\begin{theorem}[Efficiency bound comparison under very weak overlap]
    \label{thm:very-weak-overlap}
    Suppose Assumption~\ref{assumption:slowly-varying-tails} holds for $\gamma_0 \in (1, 2)$. Assume further that there exists a $\bbP \in \model$ and $C' > 0$ such that $\bbP(\pscore \leq t) \geq C' t^{\gamma_0 - 1}$ and $\bbP(\pscore \geq 1 - t) \geq C' (1 - t)^{\gamma_0 - 1}$ for all $t \in (0, 1]$. Then
    \begin{enumerate}
        \item The efficiency bound for the ATE is infinite.
        \item Assume conditions 1 and 2 of Theorem~\ref{thm:weak-convergence-uniform} (bounded first derivatives of smooth approximations). Then the efficiency bounds for the smooth non-overlap bounds are finite: for any $c \in (0, \tfrac{1}{2})$ and $\gamma > 0$, there exists a $D > 0$ such that $\sigma^2_L(c, \gamma) < D$ and $\sigma^2_U(c, \gamma) < D$.
    \end{enumerate}
\end{theorem}
\begin{proof}
    Part 1 is proved by a simple extension of \citealt[Proposition 1]{dorn2025sensitivity}, who proved the same result considering the lower tail of the propensity score distribution. Part 2 holds by applying the boundedness of the derivatives of the smooth approximations to the definition of the efficiency bounds (Proposition~\ref{prop:efficiency-bounds}). 
\end{proof}
Theorem~\ref{thm:very-weak-overlap} establishes an important contrast: under very weak overlap, no regular estimator of the ATE can achieve $\sqrt{n}$-convergence, yet the smooth non-overlap bounds can be estimated at $\sqrt{n}$-rates with asymptotically valid confidence intervals. Non-overlap bounds therefore provide a viable path to inference on the ATE in settings where traditional approaches fail.

The previous results establish the extreme cases: non-overlap bounds confidence intervals match those of point estimators  under strong overlap, and remain valid even when $\sqrt{n}$-rate point estimation is not possible under very weak overlap. The natural next question is how the approaches compare in the intermediate regime where the ATE efficiency bound is finite, but large ($\gamma_0 > 2$). Interestingly, building on the width decomposition of Proposition~\ref{prop:confidence-interval-width-bound}, we find that the best performing method involves a delicate tradeoff between sample size and the proximity to phase transition ($\gamma_0 = 2$).

\medskip

\begin{theorem}[Efficiency bound comparison under somewhat weak overlap]
    \label{thm:somewhat-weak-overlap}
    Suppose Assumption~\ref{assumption:slowly-varying-tails} holds for some $\sigma^2_{\min} > 0$, $C > 0$, and $\gamma_0 > 2$. Assume further that there exists a constant $C' \leq C$ such that, for all $\bbP \in \model$ and all $t \in [0, 1]$, $\bbP(\pscore \leq t) \geq C' t^{\gamma_0 - 1}$ and $\bbP(1 - \pscore \leq t) \geq C' t^{\gamma_0 - 1}$.  Then
    \begin{enumerate}
        \item The width of the point estimator confidence interval, given by $\overline{C}_{\psi, n} - \underbar{C}_{\psi,n}$, scales as $O\left(n^{-1/2}\right)$, with a constant proportional to $(\gamma_0 - 2)^{-1/2}$. 
        \item The width of the non-overlap confidence interval $\widehat{W}_s(c, \gamma)$, optimized over $c$, scales as $O\left(n^{-(\gamma_0 - 1) / (2 \gamma_0 - 1)}\right)$. 
        \item For any $n$, there exists a $\gamma_0(n) > 2$ such that, for any $\gamma_0 \in (2, \gamma_0(n))$, the width of the non-overlap confidence interval $\widehat{W}_s(c, \gamma)$, optimized over $c$, is shorter than the point estimator confidence interval. 
    \end{enumerate}
\end{theorem}
The implication of Theorem~\ref{thm:somewhat-weak-overlap} is that, in the somewhat weak overlap setting, the width of EIF-based point estimator confidence intervals converges at a faster rate ($O(n^{-1/2})$) than the optimal non-overlap confidence intervals ($O(n^{-(\gamma_0 - 1)/(2\gamma_0 - 1)})$) as $n \to \infty$ (parts 1 and 2). However, for any finite sample size $n$, there exists a range of $\gamma_0$ values near the phase transition at $\gamma_0 = 2$ for which the non-overlap bounds yield shorter intervals than the point estimator intervals (part 3). This reflects the fact that, near the phase transition, the constant in the point estimator interval diverges as $(\gamma_0 - 2)^{-1/2}$, while the non-overlap bounds remain well-behaved. We investigate this behavior empirically in the simulation studies to follow.

\section{Simulation Studies}
\label{sec:simulation-study-1}
The simulation studies compare the non-overlap bounds to a doubly-robust estimator of the ATE in settings designed to exhibit a range of overlap regimes. As a benchmark, we applied a doubly-robust estimator based on the uncentered EIF of the ATE, given by
\begin{align}
    \label{eq:one-step}
    \widehat{\psi}^{\mathsf{dr}} = \frac{1}{n} \sum_{i=1}^n \widehat{\phi}(Z_i),
\end{align}
where $\widehat{\phi}$ is an estimate of the uncentered EIF. Asymptotically valid $95\%$ confidence intervals are formed as $\widehat{\psi}^{\mathsf{dr}} \pm q_{0.975} \sqrt{\var(\widehat{\phi}) \slash n}$, where $q_{0.975}$ is the 97.5\% quantile of the standard normal distribution and $\var(\widehat{\phi}) $ is the empirical variance of the estimated uncentered EIF. The width of the one-step 95\% confidence interval is $W^{\mathsf{dr}} = \min(2, 2 \times q_{0.975} \times \sqrt{\var(\widehat{\phi}) \slash n})$. We truncated the width to have a maximum of $2$, indicating the interval covered the entire parameter space. We rejected the null hypothesis of that the ATE equals zero if the estimated 95\% confidence interval excluded zero.

In both simulation studies, the non-overlap ATE bounds were estimated with a logarithmic grid of propensity score thresholds from $c = 10^{-4}$ to $c = 0.05$. A uniform 95\% confidence set over the thresholds was calculated using the proposed multiplier bootstrap method with $1000$ bootstrap draws. For comparison with the one-step estimator, the uncertainty interval width for the non-overlap ATE bounds at each $\gamma$ was taken to be the tightest uniform 95\% confidence interval over the tested propensity score thresholds, also truncated to have maximum of 2: $W^{\mathsf{bounds}}_\gamma = \min(2, \min_{c} \widehat{\smoothU}(c, \gamma) - \max_{c} \widehat{\smoothL}(c, \gamma))$. We rejected the null hypothesis of a null treatment effect if any of the uniform 95\% confidence intervals over the propensity score thresholds excluded zero.

\subsection{Simulation Study 1}
First, we compared the finite-sample performance of the one-step estimator and non-overlap bounds with respect to a data-generating process (DGP) designed to exhibit extreme practical overlap violations. Simulated datasets are composed of $n$ i.i.d. copies of $Z = (X_1, X_2, A, Y)$ drawn from the joint law characterized by
\begin{align}
    \label{eq:simulation-dgp}
    X_1 &\sim \mathrm{Uniform}(-1, 1), \\
    X_2 &\sim \mathrm{Categorical}(\{ -1, 0, 1 \}, \{ 0.05, 0.9, 0.05 \}), \\
    A \mid X_1, X_2 &\sim \mathrm{Bernoulli}\left( \mathrm{logit}^{-1}\left( X_1 + \alpha \times X_2 \right)  \right), \\
    Y \mid A, X_1, X_2 &\sim \mathrm{Bernoulli}\left( \mathrm{logit}^{-1}\left( 0.5 + X_1 + A) \right) \right),
\end{align}
where $\alpha$ controls the severity of non-overlap. For the main simulation study we set $\alpha = 5$ to induce extreme positivity violations. In Appendix~\ref{section:additional-results}, we present results for simulations with $\alpha =1$ to evaluate the non-overlap bounds in a setting where overlap is satisfied. The true ATE under the DGP above is $\psi \approx 0.227$.  

The sampling distribution of doubly-robust estimators of the ATE can be highly skewed under DGPs exhibiting extremely variable propensity scores \citep{robins2007comment}. For the one-step estimator \eqref{eq:one-step}, the reason this occurs can be easily seen by analyzing the term involving inverse propensity weights of the uncentered EIF. Define the observed inverse propensity weight as
\begin{align}
    \label{eq:propensity-weight}
    r_i = \left( \frac{A_i}{\pscore(X_i)} - \frac{1 - A_i}{1 - \pscore(X_i)}\right),
\end{align}
which enters into the one-step estimator by weighting the residual, $Y_i - \mu_A(X_i)$ (see \eqref{eq:one-step}). The residual will be non-zero assuming that the conditional variance of $Y_i$ given $A_i$ and $X_i$ is non-zero. For the inverse propensity weight $r_i$ to be large, there must be an observation with both $A_i = 1$ and $\pi(X_i)$ small (or $A_i = 0$ and $1 - \pi(X_i)$ small). By definition, this is a rare event: it is unlikely to observe $A = 1$ when $\pi(X)$ is small (and vice-versa for the opposite case). In the event that this occurs, then $r_i$ is large, leading to high variance in the estimator and skewness of the sampling distribution. In addition, the empirical variance of the estimated EIF will be large, leading to wide and uninformative confidence intervals.

The skewness of the sampling distributions of doubly-robust point estimators due to overlap violations complicates characterizing their behavior in simulation studies. \cite{robins2007comment} made this point in their response to \cite{kang2007demystifying}, arguing that ``one thousand replications are not enough to capture the tail behavior of highly skewed sampling distributions, and as such cannot produce reliable Monte Carlo estimates of bias, much less of variance.'' One approach to reduce the Monte Carlo error would be to simply increase the number of replications; we do so by running $5000$ replications. In addition, we address the issue of highly skewed sampling distributions in two other ways. First, we conditionally simulate datasets that exhibit non-overlap. Specifically, we draw datasets from \eqref{eq:simulation-dgp} via rejection sampling such that $\max_i |r_i| > 100$. That is, datasets of size $n$ are drawn repeatedly until a dataset is found that has at least one observation with $A_i = 1$ and $\pi(X_i) < 0.01$ or $A_i = 0$ and $\pi(X_i) > 0.99$ (which each imply $|r_i| > 100$). This targets the investigation to the datasets with extreme non-overlap that cause particular problems for traditional estimators. This helps us to characterize the ``worst-case" performance of traditional estimators. For context, we also perform a simulation study that is identical except for using unconditional draws from \eqref{eq:simulation-dgp}, without rejection sampling. This allows us to characterize the ``average'' performance of the estimators with respect to our DGP. The second way we address skewed sampling distributions under non-overlap is by evaluating our methods with additional metrics that capture the full sampling distribution. For the 95\% uncertainty interval widths of the one-step and non-overlap bounds, we calculate their means, 90\% quantiles, and standard deviation across simulations to capture both the central tendency and spread of the sampling distributions. We also report traditional metrics including the empirical coverage of the 95\% confidence intervals and the power of the hypothesis tests, defined as the proportion of simulations in which each method rejected the null hypothesis of a null treatment effect.

We generated $5000$ simulation datasets for every sample size $n \in \{ 100, 250, 500, 1000 \}$ via rejection sampling (such that simulation datasets have $\max_i |r_i| > 100$) and non-rejection sampling methods described above. In all cases, nuisance parameters for both the benchmark one-step estimator and the non-overlap bounds were estimated using well-specified generalized linear models and $5$-fold sample splitting. The non-overlap bounds were estimated for each of the smoothness parameters $\gamma \in \{ 10^{-3}, 10^{-2}, 10^{-1} \}$.

We emphasize that we did not compare the non-overlap bounds to methods targeting alternative estimands, such as overlap weighting \citep{li2018balancing} or propensity score trimming \citep{crump2009overlap, mcclean2025propensity} precisely because our inferential goal is to estimate the \textit{population} ATE, and not an alternative estimand. Because propensity score trimming, for example, does not target the ATE, it is not a relevant comparator to non-overlap bounds. For comparisons of the finite-sample performance of various weighting and trimming estimators, we refer to the substantial literature on the subject \citep{sturmer2010propensity, lee2011trimming, busso2014weighting, li2018propensity, zhou2020overlap, sturmer2021trimming, benmichael2023balancing}. 

\paragraph{Results} The results of the simulation study are shown in Table~\ref{tab:simulation-study-1-results}, focusing on uncertainty interval widths, and Table~\ref{tab:simulation-study-1-results-power}, containing empirical coverage and power results. Each table is separated into separate sections for (A) the simulations generated conditional on $\max_i|r_i| > 100$ and (B) simulations generated unconditionally. The estimators exhibit significantly different performance in the conditionally simulated simulations, which we emphasize represents the ``worst-case'' behavior of traditional doubly-robust estimators by targeting the skewed region of their sampling distributions. As expected, in this regime the 95\% confidence interval widths for the doubly-robust one-step estimator are wide on average, and are highly variable. Coverage is conservative, reaching near 100\% coverage. Power to reject the null hypothesis is low; at the smallest sample size $n = 100$, the null hypothesis is rejected in only $3\%$ of simulations. In comparison, the non-overlap bounds are narrower and are less variable at all sample sizes and choice of tuning parameter $\gamma$. The non-overlap bounds also exhibit higher power; for example achieving approximately 50\% power at the smallest sample size of $n=100$. 

In the unconditional regime, representing the ``average-case'' performance of the estimators under the simulation DGP, the non-overlap bounds have smaller mean width and are less variable than the doubly-robust uncertainty intervals. On the other hand, the doubly-robust intervals exhibit better empirical coverage and power in this scenario. The non-overlap bounds nevertheless achieve higher power for some sample sizes. That the doubly-robust one-step estimator performs better in this regime compared to the previous conditional regime can be explained by the skewness of its sampling distribution. By marginalizing over the entire sampling distribution, poor behavior when extreme finite-sample overlap violations occur (i.e. $\max_i |r_i|$ is large) is de-emphasized. The skewness also explains the relationships that the mean and median confidence intervals widths have with increasing sample size: the mean width \textit{increases} with sample size, while the median \textit{decreases}, because larger simulated datasets have higher probability of including an observation with a large $r_i$ value. 

The non-overlap bounds in both simulation regimes were slightly sensitive to the choice of tuning parameter $\gamma$. As expected by the construction of the bounds, the mean interval width slightly increased and empirical coverage became more conservative as the smoothness parameter increased.

\begin{table}
    \centering
    \begin{tabular}{|llrrrrrrrr|}
    \hline
    & & \multicolumn{8}{c|}{95\% UI Width} \\
    & & \multicolumn{2}{c}{Mean} & \multicolumn{2}{c}{Median} & \multicolumn{2}{c}{90\% Quantile} & \multicolumn{2}{c|}{Std. Dev.} \\
    $N$ & $\gamma$ & Bounds & DR & Bounds & DR & Bounds & DR & Bounds & DR \\
    \hline
    \multicolumn{10}{|l|}{\textit{(A) Simulations conditional on $\max_i |r_i| > 100$}} \\
    100 & $10^{-3}$ & \textbf{0.46} & 1.90 & \textbf{0.45} & 2.00 & \textbf{0.52} & 2.00 & \textbf{0.05} & 0.37\\
     & $10^{-2}$ & \textbf{0.46} & 1.90 & \textbf{0.45} & 2.00 & \textbf{0.52} & 2.00 & \textbf{0.05} & 0.37\\
     & $10^{-1}$ & \textbf{0.49} & 1.90 & \textbf{0.49} & 2.00 & \textbf{0.57} & 2.00 & \textbf{0.06} & 0.37\\
    250 & $10^{-3}$ & \textbf{0.29} & 1.83 & \textbf{0.29} & 2.00 & \textbf{0.32} & 2.00 & \textbf{0.03} & 0.50\\
     & $10^{-2}$ & \textbf{0.29} & 1.83 & \textbf{0.29} & 2.00 & \textbf{0.33} & 2.00 & \textbf{0.04} & 0.50\\
     & $10^{-1}$ & \textbf{0.37} & 1.83 & \textbf{0.37} & 2.00 & \textbf{0.40} & 2.00 & \textbf{0.03} & 0.50\\
    500 & $10^{-3}$ & \textbf{0.22} & 1.65 & \textbf{0.21} & 2.00 & \textbf{0.28} & 2.00 & \textbf{0.04} & 0.68\\
     & $10^{-2}$ & \textbf{0.23} & 1.65 & \textbf{0.22} & 2.00 & \textbf{0.27} & 2.00 & \textbf{0.04} & 0.68\\
     & $10^{-1}$ & \textbf{0.29} & 1.65 & \textbf{0.29} & 2.00 & \textbf{0.31} & 2.00 & \textbf{0.02} & 0.68\\
    1000 & $10^{-3}$ & \textbf{0.17} & 1.35 & \textbf{0.15} & 2.00 & \textbf{0.23} & 2.00 & \textbf{0.05} & 0.84\\
     & $10^{-2}$ & \textbf{0.20} & 1.35 & \textbf{0.20} & 2.00 & \textbf{0.23} & 2.00 & \textbf{0.03} & 0.84\\
     & $10^{-1}$ & \textbf{0.23} & 1.35 & \textbf{0.23} & 2.00 & \textbf{0.25} & 2.00 & \textbf{0.01} & 0.84\\
    \multicolumn{10}{|l|}{\textit{(B) Unconditional simulations}} \\
    100 & $10^{-3}$ & \textbf{0.51} & 0.52 & 0.51 & \textbf{0.41} & 0.59 & \textbf{0.51} & \textbf{0.06} & 0.40\\
     & $10^{-2}$ & \textbf{0.51} & 0.52 & 0.51 & \textbf{0.41} & 0.58 & \textbf{0.51} & \textbf{0.06} & 0.40\\
     & $10^{-1}$ & \textbf{0.51} & 0.52 & 0.51 & \textbf{0.41} & 0.58 & \textbf{0.51} & \textbf{0.05} & 0.40\\
    250 & $10^{-3}$ & \textbf{0.34} & 0.54 & 0.34 & \textbf{0.25} & \textbf{0.37} & 2.00 & \textbf{0.03} & 0.66\\
     & $10^{-2}$ & \textbf{0.34} & 0.54 & 0.34 & \textbf{0.25} & \textbf{0.37} & 2.00 & \textbf{0.03} & 0.66\\
     & $10^{-1}$ & \textbf{0.35} & 0.54 & 0.35 & \textbf{0.25} & \textbf{0.38} & 2.00 & \textbf{0.03} & 0.66\\
    500 & $10^{-3}$ & \textbf{0.25} & 0.67 & 0.26 & \textbf{0.17} & \textbf{0.29} & 2.00 & \textbf{0.03} & 0.81\\
     & $10^{-2}$ & \textbf{0.26} & 0.67 & 0.26 & \textbf{0.17} & \textbf{0.29} & 2.00 & \textbf{0.03} & 0.81\\
     & $10^{-1}$ & \textbf{0.28} & 0.67 & 0.28 & \textbf{0.17} & \textbf{0.30} & 2.00 & \textbf{0.02} & 0.81\\
    1000 & $10^{-3}$ & \textbf{0.19} & 0.90 & 0.20 & \textbf{0.19} & \textbf{0.23} & 2.00 & \textbf{0.04} & 0.91\\
     & $10^{-2}$ & \textbf{0.21} & 0.90 & 0.21 & \textbf{0.19} & \textbf{0.23} & 2.00 & \textbf{0.02} & 0.91\\
     & $10^{-1}$ & \textbf{0.23} & 0.90 & 0.23 & \textbf{0.19} & \textbf{0.24} & 2.00 & \textbf{0.01} & 0.91\\
    \hline
    \end{tabular}
    \caption{Uncertainty width results from Simulation Study 1 (Section~\ref{sec:simulation-study-1}) for (A) simulations conditional on $\max_i |r_i| > 100$, with $r_i$ the inverse propensity weight defined in \eqref{eq:propensity-weight}, and (B) simulations generated unconditionally. Results compare the 95\% uncertainty intervals of a doubly-robust one-step estimator (DR) and the tightest uniform 95\% non-overlap bounds (Bounds) across a grid of propensity score thresholds and smoothness parameters in terms of the mean, median, 90\%, quantile, and standard deviation of the  uncertainty interval width.}
    \label{tab:simulation-study-1-results} 
\end{table}

\begin{table}
    \centering
    \begin{tabular}{|llrrrr|}
    \hline
    & & \multicolumn{2}{c}{95\% Coverage} & \multicolumn{2}{c|}{Power} \\
    $N$ & $\gamma$ & Bounds & DR & Bounds & DR \\
    \hline
    \multicolumn{6}{|l|}{\textit{(A) Simulations conditional on $\max_i |r_i| > 100$}} \\
    100 & $10^{-3}$ & \textbf{96.7\%} & 99.8\% & \textbf{0.49} & 0.03\\
     & $10^{-2}$ & \textbf{96.7\%} & 99.8\% & \textbf{0.49} & 0.03\\
     & $10^{-1}$ & \textbf{98.2\%} & 99.8\% & \textbf{0.43} & 0.03\\
    250 & $10^{-3}$ & \textbf{96.8\%} & 99.4\% & \textbf{0.89} & 0.07\\
     & $10^{-2}$ & \textbf{97.0\%} & 99.4\% & \textbf{0.89} & 0.07\\
     & $10^{-1}$ & 99.6\% & \textbf{99.4\%} & \textbf{0.71} & 0.07\\
    500 & $10^{-3}$ & \textbf{97.4\%} & 99.3\% & \textbf{0.99} & 0.15\\
     & $10^{-2}$ & \textbf{98.2\%} & 99.3\% & \textbf{0.99} & 0.15\\
     & $10^{-1}$ & 99.9\% & \textbf{99.3\%} & \textbf{0.95} & 0.15\\
    1000 & $10^{-3}$ & \textbf{97.6\%} & 98.8\% & \textbf{1.00} & 0.31\\
     & $10^{-2}$ & 99.4\% & \textbf{98.8\%} & \textbf{1.00} & 0.31\\
     & $10^{-1}$ & 100.0\% & \textbf{98.8\%} & \textbf{1.00} & 0.31\\
    \multicolumn{6}{|l|}{\textit{(B) Unconditional simulations}} \\
    100 & $10^{-3}$ & 98.6\% & \textbf{94.1\%} & 0.34 & \textbf{0.51}\\
     & $10^{-2}$ & 98.6\% & \textbf{94.1\%} & 0.34 & \textbf{0.51}\\
     & $10^{-1}$ & 98.7\% & \textbf{94.1\%} & 0.35 & \textbf{0.51}\\
    250 & $10^{-3}$ & 98.8\% & \textbf{94.1\%} & 0.77 & \textbf{0.78}\\
     & $10^{-2}$ & 98.9\% & \textbf{94.1\%} & 0.78 & 0.78\\
     & $10^{-1}$ & 99.4\% & \textbf{94.1\%} & 0.75 & \textbf{0.78}\\
    500 & $10^{-3}$ & 99.1\% & \textbf{95.1\%} & \textbf{0.98} & 0.72\\
     & $10^{-2}$ & 99.4\% & \textbf{95.1\%} & \textbf{0.98} & 0.72\\
     & $10^{-1}$ & 99.8\% & \textbf{95.1\%} & \textbf{0.96} & 0.72\\
    1000 & $10^{-3}$ & 98.4\% & \textbf{95.9\%} & \textbf{1.00} & 0.57\\
     & $10^{-2}$ & 99.8\% & \textbf{95.9\%} & \textbf{1.00} & 0.57\\
     & $10^{-1}$ & 99.9\% & \textbf{95.9\%} & \textbf{1.00} & 0.57\\
    \hline
    \end{tabular}
    \caption{Empirical coverage and power results from Simulation Study 1 (Section~\ref{sec:simulation-study-1}) for (A) simulations conditional on $\max_i |r_i| > 100$, with $r_i$ the inverse propensity weight defined in \eqref{eq:propensity-weight}, and (B) simulations generated unconditionally. Results compare the 95\% uncertainty intervals of a doubly-robust one-step estimator (DR) and the tightest uniform 95\% non-overlap bounds (Bounds) across a grid of propensity score thresholds and smoothness parameters in terms of empirical 95\% coverage and power.}
    \label{tab:simulation-study-1-results-power} 
\end{table}

\subsection{Simulation Study 2}
Simulation Study 2 is designed to investigate the theoretical results derived in Section~\ref{sec:comparison-wald} comparing non-overlap bounds and traditional EIF-based point estimator confidence intervals in overlap regimes characterized by the slowly-varying tails assumption (Assumption~\ref{assumption:slowly-varying-tails}). To ensure that the propensity score distribution satisfied the slowly-varying tails assumption, we drew $A$ from a symmetric Beta distribution with shape parameters $a = \gamma_0 - 1$ and $b = \gamma_0 - 1$. The full simulation data-generating process was characterized by
\begin{align}
    U &\sim N(0, 1), \\
    X_1 &\sim N(U, 0.25^2), \\
    X_2 &\sim N(U^2, 0.25^2), \\
    X_3 &\sim N(0, 1), \\
    A &\sim \mathrm{Bernoulli}\left( F_{\beta}^{-1}(\Phi(U), \gamma_0 - 1, \gamma_0 - 1) \right), \\
    Y &\sim \mathrm{Bernoulli}\left( \mathrm{logit}^{-1}(-0.5 + 0.5U + A) \right), 
\end{align}
where $U$ is a latent confounder, $X_1, X_2, X_3$ are observed covariates, $F_{\beta}^{-1}(x, a, b)$ denotes the inverse Beta cumulative density function evaluated at $x$ with shape parameters $a$ and $b$, and $\Phi$ indicates the standard normal cumulative density function. Because $\Phi(U) \sim \mathrm{Uniform}(0,1)$ when $U \sim N(0, 1)$, applying the inverse Beta CDF yields propensity scores that are marginally distributed as $\mathrm{Beta}(\gamma_0 - 1, \gamma_0 - 1)$ while depending smoothly on the latent confounder $U$, ensuring both the desired tail behavior for the propensity score distribution and the existence of causal confounding. The third covariate, $X_3$, is a noise covariate independent of all other variables. A key property of this DGP is that the true ATE does not depend on $\gamma_0$, because $\gamma_0$ only controls the propensity score distribution and not the potential outcomes. Therefore, differences in estimator performance across $\gamma_0$ are attributable solely to the overlap regime. 

We generated $5000$ simulation datasets for every combination of sample sizes $n \in \{ 100, 250, 500, 1000, 2500, 5000 \}$ and $\gamma_0 \in \{ 1.5, 1.75, 2, 2.25, 2.5, 3 \}$. Because the observed covariates are nonlinear functions of the latent confounder $U$, we chose to estimate nuisance parameters using random forests (via the \texttt{SL.ranger} SuperLearner library) to evaluate the non-overlap bounds under realistic nonparametric conditions. Based on the results of Simulation Study 1, which showed performance was not highly sensitive to the smoothness parameter $\gamma$, we chose to set $\gamma = 10^{-2}$.

\paragraph{Results}
Figure~\ref{fig:simulation-study-2-results} illustrates the mean uncertainty interval widths for each sample size $n$ and smoothly-varying tails parameter $\gamma_0$. In the very weak overlap regime ($\gamma_0 < 2$), the mean point estimator confidence interval width fails to converge as sample size increases, consistent with the fact that the efficiency bound in this regime is unbounded (Theorem~\ref{thm:very-weak-overlap}). With larger $n$, the probability of drawing an observation with an extreme propensity score ($P(A = a | X)$) increases, inflating the width of the estimated CI. Non-overlap bounds, on the other hand, are able to adaptively place these observations into a non-overlap subpopulation. In the somewhat weak overlap regime, the non-overlap CI width may outperform traditional CIs in finite samples, particularly for $\gamma_0$ near the phase transition at $\gamma_0 = 2$ (Theorem~\ref{thm:somewhat-weak-overlap}). We see this verified in the simulation results: as $\gamma_0$ increases away from 2, the gap between the traditional CIs and non-overlap CIs decreases. Appendix~\ref{section:additional-results} includes corresponding results for 95\% empirical coverage and power to reject the null hypothesis of a null treatment effect, showing that non-overlap CIs are conservative in terms of coverage, yet the associated test exhibits higher power than a traditional test based on the doubly-robust estimator. 

\begin{figure}
    \centering
    \includegraphics[width=1\linewidth]{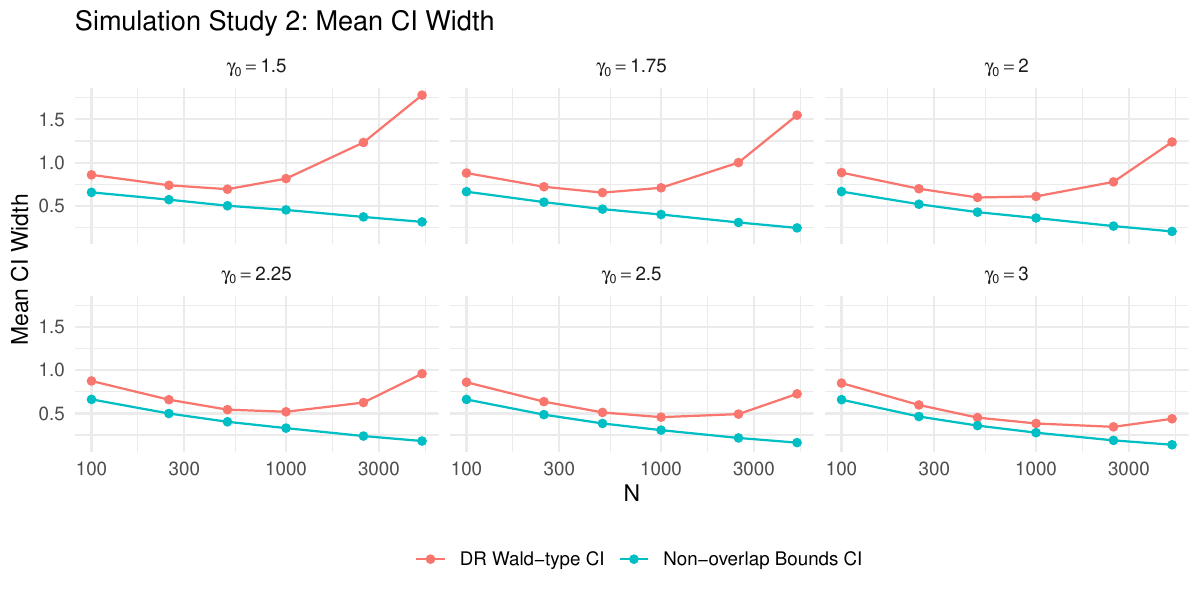}
    \caption{Simulation Study 2. Comparison of uncertainty intervals from a 95\% point estimator confidence interval (CI) centered on a doubly-robust one-step point estimate (red) and the union 95\% confidence interval formed via estimated non-overlap bounds (blue). The $x$-axis is the sample size and the $y$-axis the mean CI width. Each panel shows a different setting of $\gamma_0$, the smoothly-varying tails parameter (Assumption~\ref{assumption:slowly-varying-tails}) controlling the overlap regime, with $\gamma_0 = 2$ separating very weak overlap ($\gamma_0 < 2$) from somewhat weak overlap ($\gamma_0 > 2$). }
    \label{fig:simulation-study-2-results}
\end{figure}

\section{Application}
\label{section:application}
To illustrate the utility of our methods we reanalyzed data from six observational datasets spanning a range of overlap conditions. A systematic comparison across all six datasets (Appendix~\ref{appendix:additional-applications}) demonstrates that non-overlap bounds exhibit a ``no-regret'' property in practice: in datasets where overlap is well-satisfied, the non-overlap bounds are similar in width to traditional point estimator confidence intervals, consistent with Theorem~\ref{thm:equiv-tmle}, while in datasets with severe overlap violations, the non-overlap bounds remain informative even when the point estimator intervals span the entire parameter space. Here, we present a detailed analysis of one dataset that exemplifies the latter regime: an observational study of mortality following treatment with right heart catheterization (RHC) among patients in five medical centers in the USA \citep{murphy1990support,connors1996rhc}. Formally, the treatment is defined as whether RHC was applied within 24 hours of hospital admission (with $A = 1$ indicating treatment and $A = 0$ non-treatment). The outcome is a binary indicator of survival at 30 days post admission. The covariates comprise $72$ variables. The dataset includes $n = 5735$ patients, with $2184$ in the treatment group and $3551$ in the non-treatment group. Due to significant differences in the covariate distribution between groups, these data have been reanalyzed numerous times to illustrate causal inference methods designed to address practical overlap violations \citep{hirano2001propensityrhc, crump2009overlap, traskin2011trees, rosenbaum2012optimal, li2018balancing, rothe2017robustoverlapci, lee2021bounding, ma2024testingoverlap}. For our analysis, we use the version of the dataset publicly available in the \texttt{R} package \texttt{ATbounds} \citep{lee2021atboundscran}.

We calculated the non-overlap bounds on a logarithmic grid from $c = 10^{-6}$ to $c = 10^{-1}$. The smoothness tuning parameter was set to each of the values $\gamma \in \{ 0.001, 0.01 \}$. A uniform 95\% confidence set over the grid of thresholds and both tuning parameter values was calculated using the multiplier bootstrap method with $1000$ bootstrap draws. Nuisance parameters were estimated using cross-fitted generalized linear models with $5$ folds. Replication material in the form of an \texttt{R} script is available in Appendix~\ref{section:application-code}.

The estimated uniform non-overlap 95\% bounds for $\gamma = 0.01$ are shown in Figure~\ref{fig:rhc-example} (a comparable plot for $\gamma = 0.001$ is included in the appendix). The uniform non-overlap bounds exclude zero for a subset of the propensity score thresholds, implying statistically significant evidence of a non-zero ATE at the 5\% level. The tightest uniform 95\% uncertainty interval for the ATE was $(-0.10, -0.03)$. In comparison, a doubly-robust one-step point estimate of the ATE has 95\% confidence intervals that are non-informative (covering the entire parameter space $[-1, 1]$) due to the presence of observations with large inverse propensity weights. Figure~\ref{fig:rhc-example-pscore} displays the estimated propensity score distribution and the size of the non-overlap subpopulation for a range of propensity score thresholds. The structure of the propensity score distribution explains the success of the non-overlap bounds in this example: while the propensity score distribution includes values near $0$ and $1$, indicating overlap violations, the extreme cases are concentrated in a very small non-overlap subpopulation.

The estimated non-overlap bounds for the ATE are consistent with previous point estimates of alternative treatment effects defined via weighting or exclusion to target subpopulations satisfying overlap. The estimates reported in \citep{crump2009overlap} for the treatment effect among the population with propensity score estimates in $[0.1, 0.9]$ fall within the non-overlap bounds, as do the estimated overlap treatment effect reported in \cite{li2018propensity}. As opposed to their results, however, our results allow for inference on the ATE. 

\begin{figure}
    \centering
    \includegraphics[width=0.8\linewidth]{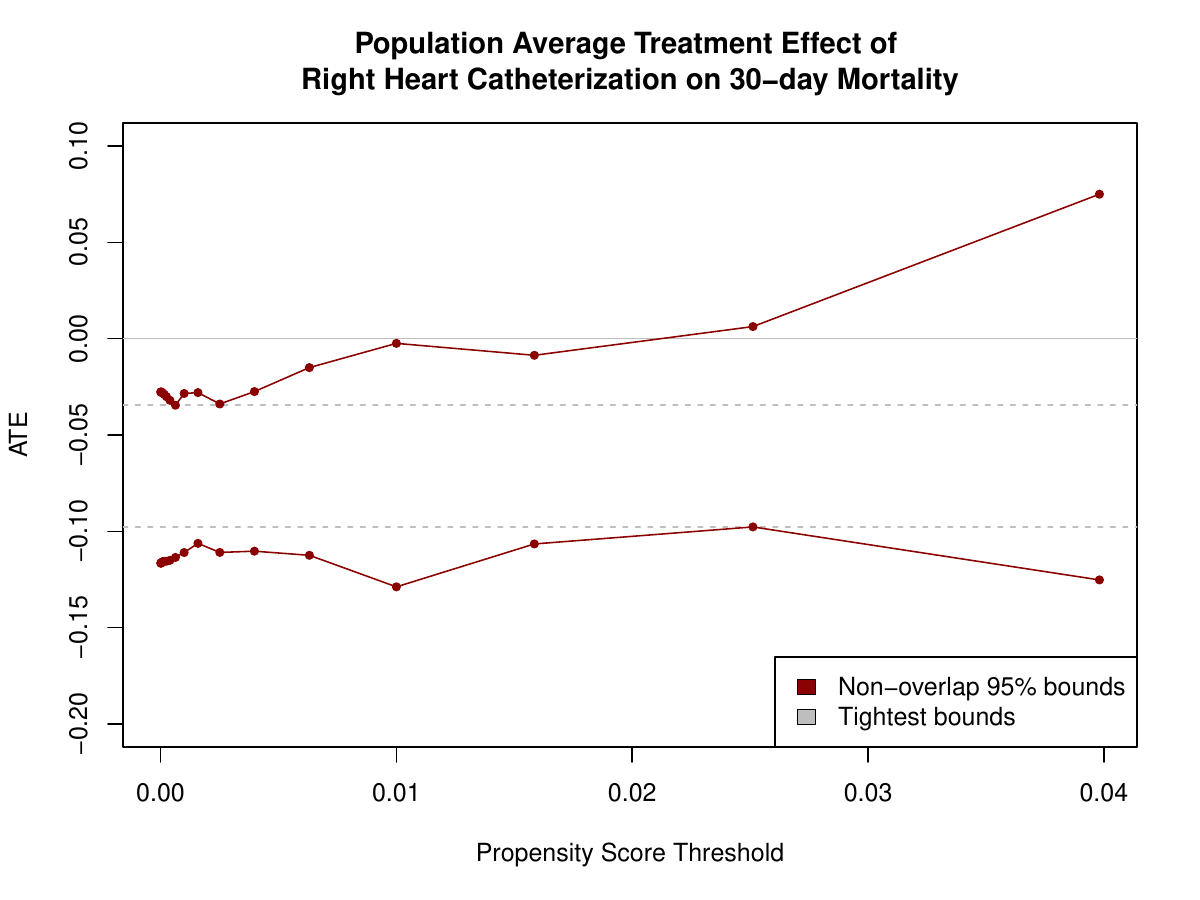}
    \caption{Uniform 95\% non-overlap bounds (for $\gamma = 0.01$) on the average treatment effect (ATE) of right heart catheterization on survival. The points illustrate the lower and upper bounds with respect to a logarithmic grid of propensity score thresholds. The lines between points are solely to guide the eye. The horizontal dotted lines indicate the tightest valid 95\% uncertainty interval that may be formed from the non-overlap bounds. }
    \label{fig:rhc-example}
\end{figure}

\begin{figure}
    \centering
    \includegraphics[width=0.9\linewidth]{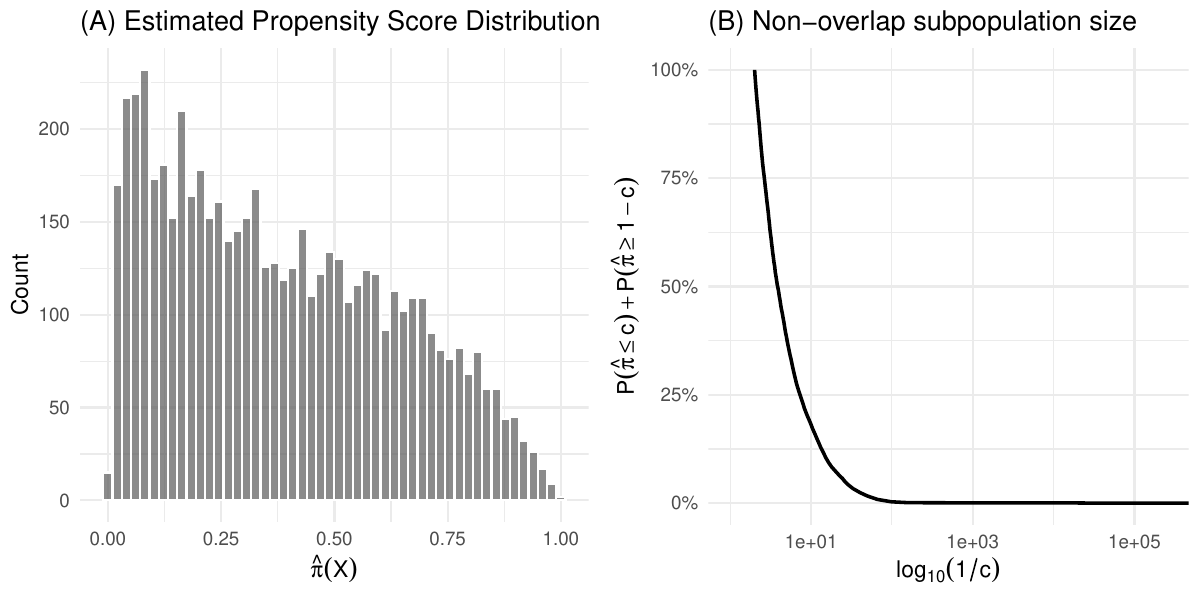}
    \caption{Propensity score distribution and non-overlap subpopulation size from the Right Heart Catheterization case study. (A) Estimated distribution of propensity scores. (B) Estimated size of non-overlap subpopulation ($\bbP(\hat{\pscore} \leq c) + \bbP(\hat{\pscore} \geq c)$) versus propensity score threshold $c$ (shown as $\log_{10}(1/c)$).}
    \label{fig:rhc-example-pscore}
\end{figure}

\section{Discussion}
\label{sec:discussion}
In practice, our results have implications for Average Treatment Effect (ATE) estimation when there are structural or practical violations of the overlap assumption. Under structural overlap violations, our method facilitates valid statistical inference for the ATE even when it is not point identified. This is in contrast to alternative approaches, which achieve point identification or precise estimation by changing the target of interest and thus altering the scientific question of interest \citep{petersen2014roadmap, dang2023roadmap, rizk2025overlap}. Therefore, if the estimand of interest is the ATE and not a weighted or trimmed alternative, then our methods provide a novel solution to conduct inference on the target of interest. From this view, alternative weighted or trimmed estimands can be part of a \textit{secondary} subgroup analysis, and reported in addition to non-overlap bounds, which provide important context about what is known about the ATE. 

The performance of the non-overlap bounds in finite samples with practical non-overlap is surprising in that the bounds have provably better properties than classical point esteimator confidence intervals in some overlap regimes. Our theoretical analysis in Section~\ref{sec:comparison-wald} formalizes this comparison. Under very weak overlap ($\gamma_0 < 2$), the ATE efficiency bound is infinite, precluding $\sqrt{n}$-estimation by any regular estimator, yet the non-overlap bounds can be estimated at $\sqrt{n}$-rates with valid confidence intervals (Theorem~\ref{thm:very-weak-overlap}). In the regime of somewhat weak overlap ($\gamma_0 > 2$), Theorem~\ref{thm:somewhat-weak-overlap} shows that the point estimator confidence interval converges at a faster asymptotic rate, yet the non-overlap bounds can yield shorter confidence intervals in finite samples near the phase transition. 

Lest we be charged with offering a free lunch, we emphasize that the use of non-overlap bounds involves tradeoffs. First, in exchange for overlap we must control the expected values of the potential outcomes in the non-overlap region. We assumed outcome boundedness as a straightforward way to achieve this, although there are alternatives such as assuming $\E ( Y^1 \mid \pi \leq c )$ and $\E( Y^0 \mid \pi \geq 1 - c)$ are bounded (for example, a sufficient condition for this is if $Y^1$ is non-negative and there exists $b > 0$ such that $\bbP(Y^1 > t \mid \pi \leq c) < 1/t^2$ for all $t > b$, implying $\E(Y^1 \mid \pi \leq c) \leq b + 1/b$). The second trade-off is that non-overlap bounds sacrifice point identification in exchange for partial identification bounds. Finally, non-overlap bounds are informative only in specific scenarios. The width of the non-overlap bounds is a function of the size of the non-overlap subpopulation, and the bounds will exclude zero only if the size of that population (representing ``noise'') is smaller than the treatment effect in the overlap subpopulation (representing ``signal''). The core insight of our work is that, in our experience as applied researchers, in many real-world situations the non-overlap population is small, meaning that the non-overlap bounds may still yield sufficiently narrow bounds to be useful (see Appendix~\ref{appendix:additional-applications} for an analysis of five additional datasets). 

Practical use of the non-overlap bounds requires setting two additional tuning parameters beyond those required for traditional doubly-robust estimators, in the form of the propensity score threshold $c$ and smooth approximation parameter $\gamma$. Proposition~\ref{prop:confidence-interval-width-bound} shows that the width of the non-overlap confidence interval decomposes into an identification gap (controlled by $c$), smooth approximation error (controlled by $\gamma$), and estimation uncertainty (scaling as $c^{-1/2}$ and $\gamma^{-1}$). In the simulation study, the performance of non-overlap bounds was not highly sensitive to $\gamma$, suggesting that in practice a small value can be chosen, such as $\gamma = 10^{-2}$ or $10^{-3}$. Following \cite{branson2024causaleffectestimationpropensity}, who apply a similar type of smooth approximation in a different context, we advise testing multiple values of $\gamma$ in a sensitivity analysis. The strength of the uniform non-overlap bounds formed via the multiplier bootstrap is that they are valid over all specified combinations of threshold and smoothness parameter, allowing for valid inference even when multiple smoothness parameters are used. In addition, an interesting avenue for future work is to consider uniform inference methods valid for both non-overlap bounds and confidence intervals for point estimates of the ATE. Indeed, when the threshold $c = 0$ and the smoothness is set to $\gamma = 0$, then the non-overlap bounds are equivalent (in practice) to a classical point estimator confidence interval around a TMLE point estimate of the ATE. A combined inference approach would allow for fully pre-specified analysis plans that are both robust to overlap violations while yielding point estimates when overlap is satisfied. 

The non-overlap bounds we propose in Proposition~\ref{prop:bounds} divide the population into subpopulations by thresholding the propensity score, which has an immediate and intuitive connection to the definition of overlap in Assumption~\ref{assumption:weak-positivity}. However, other decompositions of the population may have advantageous properties. First, an immediate extension would be to threshold the inverse propensity score by fixing $c \geq 2$ and considering subpopulations for which $\pscore^{-1} < c$ (or $(1 - \pscore)^{-1} < c$). For identification, this is equivalent to thresholding the propensity score, but for estimation allows for empirical Riesz loss estimation of the inverse propensity score \citep{chernozhukov2021automatic}. Another option thresholds the weight $\bbP(A)\bbP(X) \slash \bbP(A, X)$, which can be interpreted in terms of the mutual information between $A$ and $X$ (defined as $-\log \E\lcb \bbP(A)\bbP(X) \slash \bbP(A, X) \rcb$). Careful analysis of the efficiency bound for the ATE shows that large inverse propensity scores contribute to the efficiency bound proportional to the conditional variance of the outcome, suggesting thresholds related to $\Var(Y \mid A, X) \slash (\pscore(X)(1 - \pscore(X)))$ may yield estimators with better finite-sample properties. Finally, the problem could be framed in terms of finding a maximal subpopulation that achieves covariate balance subject to an entropy or variance constraint on the balancing weights, with worst-case bounds applied to the complement subpopulation. Investigating these possibilities and studying their implications for identification and estimation is a rich area for future work.

In addition, while we focused on deriving non-overlap bounds for the ATE, as a canonical example, we expect that our approach is relevant to a larger class of causal estimands. The two key ingredients necessary are a bounded outcome and overlap violations within a small proportion of the population relative to the magnitude of the treatment effect in the rest of the population. We suspect that continuous treatment effects \citep{schomaker2024continuous}, longitudinal treatment effects \citep{mcclean2025propensity}, and transport treatment effects (\citealt{zivich2024simulation, zivich2024synthesis}) may be particularly amenable to this style of data-adaptive non-overlap bounds.

\subsection*{Acknowledgements}
The computational requirements for this work were supported in part by the NYU Langone High Performance Computing (HPC) Core's resources and personnel. During the preparation of this work the authors used generative AI tools to suggest edits for structure, grammar, and clarity and to suggest edits to code for the data applications and simulation studies. After using these tools/services the authors reviewed and edited the content as necessary and take full responsibility for the content of the publication.

\subsection*{Data Availability}
Replication materials for the simulation study and data application are available at \url{https://github.com/herbps10/robust_ate_bounds_paper}. An \texttt{R} package implementing the proposed estimators is available at \url{https://github.com/herbps10/effectbounds}. 

\bibliography{references}

\clearpage

\begin{appendix}
\addcontentsline{toc}{section}{Appendix} 

\part{Appendix}
\setcounter{table}{0} 
\setcounter{figure}{0} 
\renewcommand{\thetable}{A\arabic{table}}
\renewcommand{\thefigure}{A\arabic{figure}}

\parttoc

\section{Proofs}

\subsection{Proof of Proposition~\ref{prop:smooth-bounds}}
\begin{proof}
    By Proposition~\ref{prop:bounds}, $\E(Y^1 - Y^0) \in \left[L(c), U(c) \right]$. Therefore, it is only necessary to show $L(c) \geq L_s(c, \gamma)$ and $U(c) \leq U_s(c, \gamma)$.  Notice first that the smooth approximation error $\psi(c) - \psi_s(c, \gamma)$ satisfies
    \begin{align}
        \psi(c) - \smoothATE(c, \gamma) = \E \left[ \mu_1(X) \{ \I(\pscore > c) - \smooth_g(\pscore, c, \gamma) \} - \mu_0(X) \{ \I (\pscore < 1-c) - \smooth_{\ell}(\pscore, 1 - c, \gamma) \} \right].
    \end{align}
    The boundedness of the outcome and Property~\ref{req:smooth-approach} imply
    \begin{equation} \label{eq:smooth-approx-error}
        \E\lcb \smooth_{\ell}(\pscore, 1 - c, \gamma) \rcb - \bbP\left( \pscore < 1-c \right) \leq \psi(c) - \smoothATE(c, \gamma) \leq \bbP\left( \pscore > c\right) - \E \lcb \smooth_g(\pscore, c, \gamma) \rcb.
    \end{equation}
    Applying these inequalities yields the result. For the lower bound,
    \begin{align}
    L(c) &= \ate(c) - \bbP(\pscore \geq 1 - c) \\
                   &= \smoothATE(c, \gamma) - \bbP(\pscore \geq 1 - c) + \psi(c) - \smoothATE(c, \gamma)  \\
                   &\geq \smoothATE(c, \gamma) - \bbP(\pscore \geq 1 - c) + \E \{ \smooth_{\ell}(\pscore, 1 - c, \gamma) \} - \bbP(\pscore < 1-c)  \\
                   &= \smoothATE(c, \gamma) - \Big[ 1 - \E \{ \smooth_{\ell}(\pscore, 1 - c, \gamma) \}  \Big] \equiv \smoothL(c, \gamma).
    \end{align}
    For the upper bound,
    \begin{align}
        U(c) &= \ate(c) + \bbP(\pscore \leq c) \\
        &= \smoothATE(c, \gamma) + \bbP(\pscore \leq c) + \ate(c) - \smoothATE(c, \gamma) \\
        &\leq \smoothATE(c, \gamma) + \bbP(\pscore \leq c) + \bbP(\pscore > c) - \E \{ \smooth_g(\pscore, c, \gamma) \} \\
        &= \smoothATE(c, \gamma) + \Big[  1 - \E \{ \smooth_g(\pscore, c, \gamma) \} \Big] \equiv \smoothU(c, \gamma).
    \end{align}
\end{proof}

\subsection{Derivations of efficient influence functions}
The following lemmas derive the efficient influence functions of the parameters $\ate(c)$, $\smoothL$, and $\smoothU$. The proofs follow a similar structure. First, we use heuristic techniques to derive putative EIFs; in particular, we draw on several ``tricks" from \cite{kennedy2024doublerobustreview}, which we refer to for further explanation; in our proofs, we annotate where each trick is applied. We then show that the parameter satisfies the von-Mises expansion \eqref{eq:von-mises} by deriving the form of the second-order remainder term. By showing that the remainder term is second-order, and that the putative EIF is bounded, an appeal to \citep[Lemma 2]{kennedy2023density} establishes that the putative EIF is the EIF. 

\begin{lemma}[Efficient influence function of $\smoothATE(c, \gamma)$]
    \label{lemma:eif-ate-c}
    Fix $c \in \left[0, \tfrac{1}{2}\right]$. The parameter $\smoothATE(c, \gamma)$ is pathwise differentiable with uncentered efficient influence function given by
    \begin{align}
        \eif_{\smoothATE}(Z) =& \frac{A}{\pscore} \smooth_g(\pscore, c, \gamma) \lcb Y - \omodel_1 \rcb - \frac{1- A}{1 - \pscore} \smooth_{\ell}(\pscore, 1 - c, \gamma) \lcb Y - \omodel_0 \rcb \\
                                    &+ \lcb \omodel_1 \dot{\smooth}_g(\pscore, c, \gamma) - \omodel_0 \dot{\smooth}_{\ell}(\pscore, 1 - c, \gamma) \rcb \lcb A - \pscore \rcb  \\
                                    &+ \omodel_1 \smooth_g(\pscore, c, \gamma) - \omodel_0 \smooth_{\ell}(\pscore, 1 - c, \gamma).
    \end{align}
\end{lemma}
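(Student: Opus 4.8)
The plan is to follow the two-stage recipe announced at the start of this appendix: first write down a putative efficient influence function using influence-function calculus, then certify it by exhibiting a von Mises expansion of the form \eqref{eq:von-mises} whose remainder is genuinely second order. Since the efficient influence function is linear in the parameter, I would split $\smoothATE(c) = \E\{\omodel_1 \smooth_g(\pscore, c, \gamma)\} - \E\{\omodel_0 \smooth_l(\pscore, 1-c, \gamma)\}$ and treat each summand. Take $\theta_1 = \E\{\omodel_1 \smooth_g(\pscore, c, \gamma)\}$. Factoring the law of $Z = (X, A, Y)$ as $p(x)p(a \mid x)p(y \mid a, x)$, this functional depends on $\bbP$ through the marginal law of $X$, through $\omodel_1(X) = \E(Y \mid A = 1, X)$, and through $\pscore$---the last both directly and, crucially, through the argument of $\smooth_g$. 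Perturbing along a one-dimensional submodel and collecting the contribution of each factor gives three pieces: the centered plug-in $\omodel_1 \smooth_g(\pscore, c, \gamma) - \theta_1$ from the law of $X$; the weighted residual $\tfrac{A}{\pscore}\smooth_g(\pscore, c, \gamma)(Y - \omodel_1)$ from $p(y \mid a=1, x)$, with the $A/\pscore$ factor appearing because we condition on $A = 1$; and $\omodel_1 \dot{\smooth}_g(\pscore, c, \gamma)(A - \pscore)$ from $p(a \mid x)$, where the chain rule produces the derivative $\dot{\smooth}_g$ and $A - \pscore$ is the Bernoulli score. The same computation for $\theta_0 = \E\{\omodel_0 \smooth_l(\pscore, 1-c, \gamma)\}$ uses the weight $\tfrac{1-A}{1-\pscore}$ and the derivative $\dot{\smooth}_l$; subtracting, $\eif_{\smoothATE} = \eif_{\theta_1} - \eif_{\theta_0}$ is exactly the stated uncentered expression.

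For rigor I would then fix $\bbP_1, \bbP_2 \in \model$ with nuisances $\eta_1$ and $\eta_2$ and compute the remainder $\mathsf{R}(\bbP_1, \bbP_2) = \smoothATE(c; \bbP_1) - \smoothATE(c; \bbP_2) + \E_{\bbP_2}\{\eif_{\smoothATE}(Z; \bbP_1)\}$ directly. Taking the inner conditional expectations under $\bbP_2$---so that $Y$ in the outcome residual is replaced by $\omodel_{1,2}$ and $A$ in the propensity residual by $\pscore_2$---produces substantial cancellation, leaving a sum of terms of two types: products of nuisance errors such as $\int (\omodel_{1,1} - \omodel_{1,2})\bigl(\tfrac{\pscore_2}{\pscore_1} - 1\bigr)\smooth_g(\pscore_1, c, \gamma)\, d\bbP_2$, which are manifestly second order; and the first-order Taylor remainder of the smooth approximation, $\int \omodel_{1,1}\bigl\{\smooth_g(\pscore_2, c, \gamma) - \smooth_g(\pscore_1, c, \gamma) - \dot{\smooth}_g(\pscore_1, c, \gamma)(\pscore_2 - \pscore_1)\bigr\}\, d\bbP_2$, together with the analogous $\omodel_0$ / $\smooth_l$ terms. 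Because $x \mapsto \smooth_g(x, c, \gamma)$ and $x \mapsto \smooth_l(x, c, \gamma)$ have bounded second derivatives---the smoothness hypothesis also invoked in Theorem~\ref{thm:eifs}---the Taylor remainder is $\lesssim \|\pscore_1 - \pscore_2\|^2$, hence also second order. This establishes the von Mises expansion \eqref{eq:von-mises} with this $\mathsf{R}$.

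It remains to check that the candidate is a mean-zero, bounded, finite-variance gradient: mean zero under $\bbP$ follows from the residual structure, and boundedness follows from $Y \in [0,1]$, boundedness of $\smooth_g, \smooth_l$ and their derivatives, and---the key point that obviates any overlap assumption---boundedness of the effective weights $\smooth_g(\pscore, c, \gamma)/\pscore$ and $\smooth_l(\pscore, 1-c, \gamma)/(1-\pscore)$, which holds under the assumed smoothness of the approximations (in particular even when $c = 0$, since then $\smooth_g(0, 0, \gamma) = 0$ and a Taylor expansion at the origin controls the ratio). With a bounded candidate gradient and a second-order remainder in hand, \citep[Lemma 2]{kennedy2023density} identifies the candidate as the efficient influence function in the nonparametric model $\model$. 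I expect the main obstacle to be the remainder bookkeeping in the second step: one must organize the cancellations so that every surviving piece is transparently either a product of nuisance differences or a quadratic Taylor remainder, and the $\pscore$-dependence of $\smooth_g$ and $\smooth_l$---absent from the textbook mean-counterfactual influence function---is precisely the nonstandard feature that makes this step require care.
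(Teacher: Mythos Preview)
Your proposal is correct and follows essentially the same approach as the paper: split $\smoothATE(c)$ into its two summands, derive the putative EIF for each via influence-function calculus (the paper uses the ``tricks'' of \cite{kennedy2024doublerobustreview}, you describe the same computation as perturbing each factor of the likelihood), then verify the von Mises expansion by exhibiting a second-order remainder in which the $\pscore$-dependence of $\smooth_g$ is handled by a Taylor expansion using the bounded-second-derivative assumption, and finally appeal to \citep[Lemma 2]{kennedy2023density}. Your explicit check that the effective weights $\smooth_g(\pscore, c, \gamma)/\pscore$ remain bounded without an overlap assumption is a useful addition that the paper's boundedness argument leaves implicit.
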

\begin{proof}
    Write $\psi(c) = \psi_1(c) - \psi_2(c)$, with $\psi_1(c) = \E\lcb \omodel_1(X) \smooth_g(\pscore(X), c, \gamma) \rcb$ and $\psi_2(c)= \E\lcb \omodel_0(X) \smooth_{\ell}(\pscore(X), 1 - c, \gamma)\rcb$. We derive the EIF $\eif_{\psi_1}$ of $\psi_1(c)$; the EIF $\eif_{\psi_2}$ of $\psi_2(c)$ follows from similar arguments. The EIF of $\psi(c)$ is then simply $\eif_{\psi} = \eif_{\psi_1} - \eif_{\psi_2}$. 
    
    We begin by proposing a putative EIF of $\psi_1$:
    \begin{align}
        \mathbb{IF}\left( \psi_1(c) \right) =& \mathbb{IF}\lcb \sum_x \omodel_1(x) \smooth_g(\pscore(x), c, \gamma) p(x) \rcb \hspace{2em} \text{ (Trick 1)} \\
        =& \sum_x \mathbb{IF}\lcb \omodel_1(x) \rcb \smooth_g(\pscore(x), c, \gamma) p(x) \hspace{2em} \text{ (Trick 2a)} \\
        &+ \omodel_1(x) \mathbb{IF}\lcb \smooth_g(\pscore(x), c, \gamma)  \rcb p(x) \\
        &+ \omodel_1(x) \smooth_g(\pscore(x), c, \gamma) \mathbb{IF}\lcb p(x) \rcb \\
        =& \sum_x \frac{\I(X = x, A = 1)}{p(1, x)} \left( Y - \omodel_1(x) \right) \smooth_g(\pscore(x), c, \gamma) p(x) \hspace{2em} \text{(Trick 3)} \\
        &+ \mu_1(x) \dot{\smooth}_g(\pscore(x), c, \gamma) \frac{\I[X = x]}{p(x)} \left( A - \pscore(x) \right) p(x) \hspace{2em} \text{(Trick 2b and Trick 3)} \\
        & + \omodel_1(x) \smooth_g(\pscore(x), c, \gamma) \left( \I(X = x) - p(x) \right) \\
        =& \frac{A}{\pscore(X)} \smooth_g(\pscore(X), c, \gamma) \left( Y - \omodel_1(X) \right) \\
        &+ \omodel_1(X) \dot{\smooth}_g(\pscore(X), c, \gamma) \left( A - \pscore(X) \right) \\
        &+ \omodel_1(X) \smooth_g(\pscore(X), c, \gamma)  - \psi_1(c).
    \end{align}
    Next, we derive the second-order remainder for $\psi_1$ induced by the putative EIF. Let $\bbP, \hat{\bbP} \in \model$. We denote expectations, nuisance parameters, and the EIF of the parameter with respect to $\hat{\bbP}$ as $\hat{\E}$, $\hat{\omodel}$, $\hat{\pscore}$, and $\hat{\varphi}$, respectively. We will also write $s(\pscore) \equiv \smooth_g(\pscore, c, \gamma)$.  Arguments are suppressed throughout.
    \begin{align}
        \mathsf{R}_{\psi_1}(\bbP, \hat{\bbP}) &= - \hat{\E}\lcb \eif_{\psi_1}(Z) \rcb +  \hat{\ate} \\
        &=  - \hat{\E} \lcb \frac{A}{\pscore} s(\pscore) \left( Y - \omodel_1 \right) + \omodel_1 \cdot \dot{\smooth}(\pscore)(\hat{\pscore} - \pscore) + \omodel_1 \cdot s(\pscore) \rcb + \hat{\E}\lcb \hat{\omodel}_1 \cdot s(\hat{\pscore}) \rcb \\
        &= \hat{\E} \lcb  - s(\pscore) \frac{\hat{\pscore}}{\pscore}  \left( \hat{\omodel}_1 - \omodel_1 \right) - \omodel_1 \cdot \dot{\smooth}(\pscore)(\hat{\pscore} - \pscore) - \omodel_1 \cdot s(\pscore) + \hat{\omodel}_1 \cdot s(\hat{\pscore}) \rcb \\
        &= \hat{\E} \lcb \frac{s(\pscore)}{\pscore} \left( \pscore - \hat{\pscore} \right) \left( \hat{\omodel}_1 - \omodel_1 \right)  - \omodel_1 \cdot \dot{\smooth}(\pscore)(\hat{\pscore} - \pscore) + \hat{\omodel}_1 \cdot s(\hat{\pscore}) - \hat{\omodel}_1 \cdot s(\pscore)  \rcb
    \end{align}
    Applying a second-order Taylor expansion of $\hat{\omodel} \cdot s(\pscore)$ around $s(\hat{\pscore})$, the latter three terms inside the expectation simplify:  
    \begin{align}
        & - \omodel_1 \cdot \dot{\smooth}(\pscore)(\hat{\pscore} - \pscore) + \hat{\omodel}_1 \cdot \smooth(\hat{\pscore}) - \hat{\omodel}_1 \cdot s(\pscore) \\
        =& - \omodel_1 \cdot \dot{\smooth}(\pscore)(\hat{\pscore} - \pscore) - \hat{\omodel}_1 \cdot \dot{\smooth}(\hat{\pscore})(\pscore - \hat{\pscore}) - \frac{1}{2}\hat{\omodel}_1 \cdot \ddot{\smooth}(\hat{\pscore}_1)(\pscore - \hat{\pscore})^2 + o\left(\left( \pscore - \hat{\pscore} \right)^2 \right) \\
        =& \left( \hat{\omodel}_1\cdot\dot{\smooth}(\hat{\pscore}) - \omodel_1 \cdot \dot{\smooth}(\pscore) \right)\left( \hat{\pscore} - \pscore \right) - \frac{1}{2}\hat{\omodel}_1\cdot\ddot{\smooth}(\hat{\pscore})(\pscore - \hat{\pscore})^2 + o\left(\left( \pscore - \hat{\pscore} \right)^2 \right),
    \end{align}
    where $\ddot{\smooth}$ is the second derivative of $x \mapsto s(x)$. 
    Therefore
    \begin{align}
        \mathsf{R}_{\psi_1}(\bbP, \hat{\bbP}) &=  \hat{\E} \lcb \frac{s(\pscore)}{\pscore} \left( \pscore - \hat{\pscore} \right) \left( \hat{\omodel}_1 - \omodel_1\right) + \left( \hat{\omodel}_1 \cdot \dot{\smooth}(\hat{\pscore}) - \omodel_1 \cdot \dot{\smooth}(\pscore) \right)\left( \hat{\pscore} - \pscore \right) - \frac{1}{2}\hat{\omodel}_1 \cdot \ddot{\smooth}(\hat{\pscore})(\pscore - \hat{\pscore})^2 + o\left(\left( \pscore - \hat{\pscore} \right)^2 \right)\rcb. 
    \end{align}
    The remainder term $\mathsf{R}_{\psi_1}$ is indeed second-order, depending only on products squares of differences in the nuisance parameters. Finally, we argue that the variance of the EIF is bounded. This follows from the fact that outcome is bounded, the smooth approximation function $s_{\psi}$ is bounded, and the derivative of the smooth approximation is bounded by definition. Therefore by \cite[Lemma 2]{kennedy2023density}, $\eif_{\psi_1}$ is the EIF of $\psi_1$. 
\end{proof}

\begin{lemma}[Efficient influence function of $\smoothU(c, \gamma)$]
    \label{lemma:eif-upper}
    Fix $c \in \left[0, \tfrac{1}{2} \right]$. The parameter $\smoothU(c, \gamma)$ is pathwise differentiable with uncentered efficient influence function given by
    \begin{align}
        \eif_{\smoothU}(Z) =& \eif_{\smoothATE}(Z) + 1 - \smooth_g(\pscore, c, \gamma) - \dot{\smooth}_g(\pscore, c, \gamma) \lcb A - \pscore \rcb.
    \end{align}
\end{lemma}
\begin{proof}
    The proof follows a similar structure as the proof of Lemma~\ref{lemma:eif-ate-c}, and we adopt the same notational conveniences. Within only this proof, let $\theta(c) = \E\lcb \smooth_g(\pscore, c, \gamma) \rcb$, so that $\smoothU(c) = \smoothATE(c) + \Big[ 1 - \theta(c) \Big]$. We derive the EIF of $\theta(c)$, from which the EIF for $\smoothL(c)$ follows by combining with the EIF of $\smoothATE(c)$ (Lemma~\ref{lemma:eif-ate-c}). For convenience we may write $s(\pscore) \equiv \smooth_g(\pscore, c, \gamma)$.  A putative EIF of $\theta(c)$ is
    \begin{align}
        \mathbb{IF}\left(\theta(c)\right) &= \mathbb{IF}\lcb \sum_x s(\pscore(x), c, \gamma) p(x) \rcb \hspace{2em}  \text{Trick 1}  \\
        &= \sum_x \mathbb{IF}\lcb s(\pscore(x) > c) \rcb p(x) +  s(\pscore(x), c, \gamma) \mathbb{IF}\lcb p(x) \rcb \hspace{2em} \text{Trick 2a} \\
        &= \sum_x \dot{\smooth}_g(\pscore(x), c, \gamma) \frac{\I(X = x)}{p(x)} \left( A - \pscore(x) \right) p(x) + s(\pscore(x) > c) \left( \I(X = x) - p(x) \right) \\
        &= \dot{\smooth}_g(\pscore(X), c, \gamma) \left( A - \pscore(X) \right) + \smooth_g(\pscore(X), c, \gamma) - \theta(c). 
    \end{align}
    The second-order term is
    \begin{align}
        \mathsf{R}_{\theta}(\bbP, \hat{\bbP}) &= -\hat{\E}\lcb \eif_{\theta} (Z) \rcb + \hat{\theta} \\
        &= -\hat{\E}\lcb \dot{\smooth}(\pscore)\left(A - \pscore \right) + s(\pscore) \rcb + \hat{\E}\lcb s(\hat{\pscore}) \rcb \\
        &= \hat{\E} \lcb -\dot{\smooth}(\pscore)(\hat{\pscore} - \pscore) - s(\pscore) + s(\hat{\pscore}) \rcb
    \end{align}
    Expanding $s(\pscore)$ about $s(\hat{\pscore})$ yields:
    \begin{align}
        &= \hat{\E} \lcb -\dot{\smooth}(\pscore)(\hat{\pscore} - \pscore) + \dot{\smooth}(\hat{\pscore})(\hat{\pscore} - \pscore) - \frac{1}{2}\ddot{\smooth}(\hat{\pscore} - \pscore)^2 - o\left( \left( \hat{\pscore} - \pscore \right)^2 \right) \rcb \\
        &=\hat{\E} \lcb \left( \dot{\smooth}(\hat{\pscore}) - \dot{\smooth}(\pscore) \right) \left( \hat{\pscore} - \pscore \right) - \frac{1}{2} \ddot{\smooth}(\hat{\pscore} - \pscore)^2 - o\left( \left( \hat{\pscore} - \pscore \right)^2 \right) \rcb. 
    \end{align}
    This shows that the remainder term is indeed second-order. The variance of $\eif_{\theta}$ is bounded, as both $s$ and $\dot{\smooth}$ are bounded by construction. Therefore $\eif_{\theta}$ is the EIF of $\theta(c)$ by \citealt[Lemma 2]{kennedy2023density}. The EIF of $\smoothU$ follows by combining EIFs for $\ate(c)$ and $\theta(c)$. 
\end{proof}

\begin{lemma}[Efficient influence function of $\smoothL(c, \gamma)$]
    \label{lemma:eif-lower}
    Fix $c \in \left[0, \tfrac{1}{2} \right]$. The parameter $\smoothL(c, \gamma)$ is pathwise differentiable with uncentered efficient influence function given by
    \begin{align}
        \eif_{\smoothL}(Z)         =& \eif_{\smoothATE}(Z) -1 + \smooth_{\ell}(\pscore, 1 - c, \gamma) + \dot{\smooth}_\ell(\pscore, 1 - c, \gamma) \lcb A - \pscore \rcb. 
    \end{align}
\end{lemma}
\begin{proof}
The proof follows from minor modifications to the proof of Lemma~\ref{lemma:eif-upper}.
\end{proof}

\subsection{Proof of Theorem~\ref{thm:eifs}}
The proof follows from Lemmas~\ref{lemma:eif-ate-c}, \ref{lemma:eif-lower}, and \ref{lemma:eif-upper}

\subsection{Proof of Proposition~\ref{prop:efficiency-bounds}}
\begin{proof}
For the smooth upper bound, the key of the derivation is to write
\begin{align}
    \Var(\eif_{\smoothU}) &= \E\lcb \left[ \psi_{\smoothU} - \E\left(\eif_{\smoothU}\right) \right]^2 \rcb \\
    &= \E\lcb \left[ \omodel_1 \smooth_g(\pscore, c, \gamma) - \omodel_0 \smooth_{\ell}(\pscore, 1 - c, \gamma) - \smooth_g(\pscore, c, \gamma) -\smoothU \right]^2 \rcb \\
    &\ + \E\lcb \left[ \frac{A}{\pscore} \smooth_g(\pscore, c, \gamma) \left( Y - \omodel_1 \right) - \frac{1- A}{1 - \pscore} \smooth_{\ell}(\pscore, 1 - c, \gamma) \left( Y - \omodel_0 \right) \right]^2 \rcb \\
    &\ + \E\lcb \left[ \omodel_1 \dot{\smooth}_g(\pscore, c, \gamma) - \omodel_0 \dot{\smooth}_{\ell}(\pscore, 1 - c, \gamma) \left( A - \pscore \right) \right]^2 \rcb,
\end{align}
where the equality follows because the cross-terms are zero, because the residual terms --- $Y - \E[Y\mid A, X]$ and $A - \pscore$ --- are mean zero conditional on $X$, and therefore the cross-terms have mean zero by the tower rule. Applying the tower rule again to the second and third lines and simplifying yields the result. 
\end{proof}

\subsection{Proof of Theorem~\ref{thm:weak-conv}}
The weak convergence of TMLE estimators based on cross-fitted nuisance estimators is well-established; see for example \citep[Theorem 3]{diaz2023nonparametric} and its proof, which can be easily adapted to this setting. The key steps are first to see that, by construction of the fluctuation models and the choice of loss function, the fluctuated estimates $\hat{\eta}^*$ solve the empirical EIF estimating equation, implying null first-order bias of the associated debiased plug-in estimators. Second, the use of cross-fitting for the nuisance estimators allows control of the empirical process remainder term.

\subsection{Proof of Proposition~\ref{prop:confidence-interval-width-bound}}
\begin{proof}
By Theorem~\ref{thm:weak-conv} and the consistency argument \eqref{eq:width-consistency}, it suffices to bound $W_s(c, \gamma)$. 
First, consider the smooth identification gap:
\begin{align}
    \smoothU(c, \gamma) - \smoothL(c, \gamma) = \left[1 - \E\lcb \smooth_g(\pscore, c, \gamma) \rcb\right] + \left[ 1 - \E\lcb \smooth_{\ell}(\pscore, 1 - c, \gamma) \rcb \right]. 
\end{align}

By construction of the smooth approximations \eqref{eq:example-smooth-approximations-lower} and \eqref{eq:example-smooth-approximations-upper}, recall that
\begin{align}
    \bbP(\pscore > c + \gamma) &\leq \E\lcb s_g(\pscore, c, \gamma \rcb \leq \bbP(\pscore > c) \\
    \text{and } \bbP(\pscore < 1 - c - \gamma) &\leq \E\lcb s_\ell(\pscore, 1 - c, \gamma \rcb \leq \bbP(\pscore < 1 - c).
\end{align}
Therefore,
\begin{align}
    \bbP(\pscore \leq c)  &\leq 1 - \E\lcb s_g(\pscore, c, \gamma \rcb \leq \bbP(\pscore \leq c + \gamma) \\
    \text{and } \bbP(\pscore \geq 1 - c) &\leq 1 - \E\lcb s_\ell(\pscore, 1 - c, \gamma \rcb \leq \bbP(\pscore \geq 1 - c - \gamma).
\end{align}
The smooth identification gap is therefore bounded by
\begin{align}
    \left[ 1 - \E\lcb s_g(\pscore, c, \gamma \rcb \right] + \left[ 1 - \E\lcb s_\ell(\pscore, 1 - c, \gamma \rcb \right] \leq \bbP(\pscore \leq c + \gamma) + \bbP(\pscore \geq 1 - c - \gamma).
\end{align}

Next, we derive bounds on $\sigma^2_{U}(c, \gamma)$; a similar argument holds for $\sigma^2_L(c, \gamma)$. Recall that 
\begin{align}
    \sigma^2_U &= \Var\lcb\mu_1 \smooth_g(\pscore, c, \gamma) - \mu_0 \smooth_\ell(\pscore, 1 - c, \gamma)) - s_g(\pscore, c, \gamma) \rcb \\
    &+ \E\lcb \frac{\Var(Y \mid A, X)}{\pscore(1 - \pscore)} \times s_g(\pscore, c, \gamma)^2 \times s_\ell(\pscore, 1 - c, \gamma)^2  \rcb \\
    &+ \E\lcb \Var(A \mid X) \times (\mu_1 \dot{s}_g(\pscore, c, \gamma) - \mu_0 \dot{\smooth}_\ell(\pscore, 1 - c, \gamma) - \dot{s}_g(\pscore, c, \gamma))^2  \rcb.
\end{align}
We work through the above display line by line. 
First, note that multiple quantities in the above display are bounded:
\begin{itemize}
    \item $A \in \{0, 1\}$ and $Y \in [0, 1]$ implies $\Var(A | X) \in [0, 1/4]$ and $\Var(Y \mid A, X) \in [0, 1/4]$.
    \item Similarly, $Y \in [0, 1]$ implies $\mu_1 \in [0, 1]$ and $\mu_0 \in [0, 1]$. 
    \item The smooth approximations satisfy, for all $\pscore \in [0, 1]$, $s_\ell(\pscore, c, \gamma) \in [0, 1]$ and $s_g(\pscore, c, \gamma) \in [0, 1]$.
    \item By assumption, for all $\pscore \in [0, 1]$, $|\dot{s}_\ell(\pscore, c, \gamma)| < D$ and $|\dot{s}_g(\pscore, c, \gamma)| < D$.
\end{itemize}
Applying these bounds, we have that
\begin{align}
     \mu_1 \smooth_g(\pscore, c, \gamma) - \mu_0 \smooth_\ell(\pscore, 1 - c, \gamma)) - s_g(\pscore, c, \gamma) \in [-1, 1],
\end{align}
implying that
\begin{align}
     \Var\lcb\mu_1 \smooth_g(\pscore, c, \gamma) - \mu_0 \smooth_\ell(\pscore, 1 - c, \gamma)) - s_g(\pscore, c, \gamma) \rcb \leq 1.
\end{align}
Next, note that by Property~\ref{prop:bounds}, $s_g(\pscore, c, \gamma) \leq \I(x > c)$, so $s_g(\pscore, c, \gamma)^2 \leq s_g(\pscore, c, \gamma) \leq \I(\pscore > c)$ (likewise for $s_\ell$). Therefore 
\begin{align}
    \frac{\smooth_g(\pscore, c, \gamma)^2 \times s_\ell(\pscore, 1 - c, \gamma)^2}{\pi(1-\pi)} \leq \frac{\I(\pscore > c)\I(\pscore < 1 - c)}{\pscore(1-\pscore)} \leq \frac{1}{c(1-c)}. 
\end{align}
Because $c \in (0, \frac{1}{2})$, $c(1-c) \geq c/2$. Combined with the bound on the conditional variance of $Y$, this yields
\begin{align}
    \E\lcb \frac{\Var(Y \mid A, X)}{\pscore(1 - \pscore)} \times s_g(\pscore, c, \gamma)^2 \times s_\ell(\pscore, 1 - c, \gamma)^2  \rcb \leq \frac{1}{4c(1 - c)} \leq \frac{1}{2c}.
\end{align}
Finally, 
\begin{align}
    \E\lcb \Var(A \mid X) \times (\mu_1 \dot{s}_g(\pscore, c, \gamma) - \mu_0 \dot{\smooth}_\ell(\pscore, 1 - c, \gamma) - \dot{s}_g(\pscore, c, \gamma))^2  \rcb \leq \frac{9D^2}{4\gamma^2},
\end{align}
because
\begin{align}
    |\mu_1 \dot{s}_g(\pscore, c, \gamma) - \mu_0 \dot{\smooth}_\ell(\pscore, 1 - c, \gamma) - \dot{s}_g(\pscore, c, \gamma)| \leq |\mu_1| |\dot{s}_g(\pscore, c, \gamma)| + |\mu_0| |\dot{\smooth}_\ell(\pscore, 1 - c, \gamma)| + |\dot{s}_g(\pscore, c, \gamma)| \leq 3D. 
\end{align}
Summing the inequalities yields the stated result.
\end{proof}

\subsection{Lemma: bounded derivatives of $\smooth_{\ell}$ and $\smooth_g$}
\begin{lemma}
\label{lemma:bounded-derivatives-smooth-approximations}
Let $\mathcal{C} = [c_1, c_2]$ for $0 < c_1 < c_2 < \tfrac{1}{2}$ and $\Gamma = [\gamma_1, \gamma_2]$ for $0 < \gamma_1 < \gamma_2 < \infty$. Then, for $\smooth_{\ell}$ and $\smooth_g$ as given in \eqref{eq:example-smooth-approximations-lower} and \eqref{eq:example-smooth-approximations-upper}:
\begin{enumerate}
    \item For all $c \in \mathcal{C}$ and $\gamma \in \Gamma$, the maps $x \mapsto \smooth_\ell(x, c, \gamma)$ and $x \mapsto \smooth_g(x, c, \gamma)$ are twice differentiable with first derivatives $\dot{\smooth}_{\ell}$ and $\dot{\smooth}_g$ and second derivatives $\ddot{\smooth}_{\ell}$ and $\ddot{\smooth}_g$. Moreover, $\dot{\smooth}_{\ell}$, $\dot{\smooth}_g$, $\ddot{\smooth}_{\ell}$, and $\ddot{\smooth}_g$ are bounded. 
    \item For all $x \in [0, 1]$ and $\gamma \in \Gamma$, the maps $c \mapsto \smooth_{\ell}(x, c, \gamma)$, $c \mapsto \dot{\smooth}_{\ell}(x, c, \gamma)$, $c \mapsto \smooth_{g}(x, c, \gamma)$, and $c \mapsto \dot{\smooth}_{g}(x, c, \gamma)$ are differentiable with bounded first derivative.
    \item For all $x \in [0, 1]$ and $c \in \mathcal{C}$, the maps $\gamma \mapsto \smooth_{\ell}(x, c, \gamma)$, $\gamma \mapsto \dot{\smooth}_{\ell}(x, c, \gamma)$, $\gamma \mapsto \smooth_{g}(x, c, \gamma)$, and $\gamma \mapsto \dot{\smooth}_{g}(x, c, \gamma)$ are differentiable with bounded first derivative.
\end{enumerate}
\end{lemma}
\begin{proof}
We demonstrate the results focusing on $\smooth_{g}$, as the derivations are similar for $\smooth_{\ell}$. Recall that
\begin{align}
    \smooth_{g}(x, c, \gamma) = \begin{cases}
        1, & x \geq c + \gamma, \\
        0, & x \leq c, \\
        1 - \exp\left[1 + \frac{1}{\lcb (x - c) \slash \gamma \rcb^2 - 1} \right], &\text{otherwise.}
    \end{cases}
\end{align}
As a preliminary, let $f : \mathbb{R} \to [0, 1]$ be characterized by
\begin{align}
    f(x) = \begin{cases}
        1, x \geq 1, \\
        0, x \leq 0, \\
        1 - \exp\left[ 1 + \frac{1}{x^2 - 1} \right], \text{ otherwise.}
    \end{cases}
\end{align}
See that $\smooth_g(x, c, \gamma) = f((x - c) / \gamma)$. The first and second derivatives of $f$ are given by
\begin{align}
    \dot{f}(x) &= \begin{cases}
        (1 - f(x)) \frac{2x}{(x^2 - 1)^2}, & 0 < x < 1, \\
        0, & \text{otherwise.}
    \end{cases}, \\
    \ddot{f}(x) &= \begin{cases}
        (1 - f(x)) \frac{-6x^4 + 2}{(x^2 - 1)^4}, & 0 < x < 1, \\
        0, & \text{otherwise.}
    \end{cases}
\end{align}
It can be shown via (somewhat tedious) calculations that both $\dot{f}$ and $\ddot{f}$ are bounded. Now we address each part of the lemma in turn.
\begin{enumerate}
    \item By the chain rule, $x \mapsto \smooth_g(x, c, \gamma)$ is twice differentiable with first and second derivatives given by
    \begin{align}
        \dot{\smooth}_g(x, c, \gamma) &= \frac{1}{\gamma} \dot{f}((x - c) / \gamma), \\
        \ddot{\smooth}_g(x, c, \gamma) &= \frac{1}{\gamma^2} \ddot{f}((x - c)/\gamma).
    \end{align}
    Therefore, because $\dot{f}$ and $\ddot{f}$ and $\gamma > \gamma_1 > 0$ by assumption, $\ddot{\smooth}_g$ and $\ddot{\smooth}_g$ are bounded as a function of $x$. 
    \item Again by the chain rule, $c \mapsto \smooth_g(x, c, \gamma)$ is differentiable with derivative $c \mapsto \tfrac{-1}{\gamma} \dot{f}((x - c) \gamma)$, which is bounded by the same argument as for part 1. Furthermore, the function $c \mapsto \dot{s}_g(x, c, \gamma)$ is differentiable with respect to $c$ with derivative $c \mapsto -\tfrac{1}{\gamma^2} \ddot{f}((x - c) / \gamma)$, which is bounded by a similar argument.
    \item Again by the chain rule, $\gamma \mapsto \smooth_g(x, c, \gamma)$ is differentiable with derivative $\gamma \mapsto -\tfrac{x - c}{\gamma^2} \dot{f}((x - c)/\gamma)$, which is bounded by a similar argument as above. Furthermore, the function $\gamma \mapsto \dot{s}_g(x, c, \gamma)$ is differentiable with respect to $\gamma$ with derivative $\gamma \mapsto -\tfrac{x - c}{\gamma^3}\ddot{f}((x - c) / \gamma)$, which is again bounded. 
\end{enumerate}
\end{proof}

\subsection{Proof of Theorem~\ref{thm:weak-convergence-uniform}}
We focus on proving the results for the upper bound parameter $\smoothU$, as similar arguments can be used to derive results for the lower bound parameter. The proof follows similar logic as that of \citep[Theorem 3]{kennedy2019nonparametric}

Let $\|f\|_{\mathcal{C},\Gamma} = \sup_{c\in\mathcal{C},\gamma\in\Gamma}|f(c, \gamma)|$ be the supremum norm over $\mathcal{C}\times\Gamma$. First, we show that the empirical process $U_n(c,\gamma) = \mathbb{G}_{n}\lcb \widetilde{\varphi}_U(Z, \eta, c, \gamma) \rcb \leadsto \mathbb{G}\lcb \widetilde{\varphi}_U(Z, \eta, c, \gamma) \rcb$  in $\ell^\infty(\mathcal{C}\times\Gamma)$. Then, by Theorem~\ref{thm:weak-conv}, we have that 
\begin{align}
    \| \widehat{U}_n(c, \gamma) - U_n(c, \gamma) \|_{\mathcal{C},\mathcal{D}} = o_{\bbP}(1),
\end{align}
where $\widehat{U}_n(c, \gamma) = \sqrt{n}\lcb \widehat{\smoothU}(c, \gamma) - \smoothU(c, \gamma) \rcb \slash \hat{\sigma}(c, \gamma)$, which completes the argument. 

Let $\mathcal{F}_L = \{ \eif_L(Z, \eta, c, \gamma) : c \in \mathcal{C}, \gamma \in \Gamma \}$ and $\mathcal{F}_U = \{ \eif_U(Z, \eta, c, \gamma) : c \in \mathcal{C}, \gamma \in \Gamma \}$ be the function classes of the efficient influence functions for the lower and upper bound parameters, respectively, in terms of $c \in \mathcal{C}$ and $\gamma \in \Gamma$. Theorem~\ref{thm:weak-conv} establishes weak convergence of $\widehat{\smoothL}(c, \gamma)$ and $\widehat{\smoothU}(c, \gamma)$ to a limiting Gaussian distribution for a fixed $c \in \mathcal{C}$ and $\gamma \in \Gamma$. To show uniform weak convergence over all $c \in \mathcal{C}$ and $\gamma \in \Gamma$ we show that $\mathcal{F}_L$ and $\mathcal{F}_U$ are both Lipschitz continuous in $c$ and $\gamma$, which implies that the function classes have finite bracketing integrals and are therefore Donsker \citep[Theorem 2.5.6]{vanderVaart&Wellner96}.

We show that the function classes are Lipschitz by showing that they have bounded derivatives. The derivative of $\varphi_{\ate}$ with respect to $c$ is given by
\begin{align}
    \frac{\partial}{\partial c} \eif_{\ate}(Z, \eta, c, \gamma) &= \omodel_1 \frac{\partial}{\partial c} \smooth_g(\pscore, c, \gamma) - \omodel_0 \frac{\partial}{\partial c} \smooth_{\ell}(\pscore, 1 - c, \gamma) \\
                                    &\ + \frac{A}{\pscore} \frac{\partial}{\partial c} \smooth_g(\pscore, c, \gamma) \left( Y - \omodel_1 \right) - \frac{1- A}{1 - \pscore} \frac{\partial}{\partial c} \smooth_{\ell}(\pscore, 1 - c, \gamma) \left( Y - \omodel_0 \right) \\
                                    &\ + \lcb \omodel_1 \frac{\partial}{\partial c}\dot{\smooth}_g(\pscore, c, \gamma) - \omodel_0 \frac{\partial}{\partial c} \dot{\smooth}_{\ell}(\pscore, 1 - c, \gamma) \rcb \left( A - \pscore \right), 
\end{align}
By assumption, the derivatives of $\smooth_{\ell}$, $\smooth_g$, $\dot{\smooth}_{\ell}$, and $\dot{\smooth}_g$ with respect to $c$ are bounded. Therefore, $\eif_{\ate}(Z, \eta, c, \gamma)$ is Lipschitz continuous with respect to $c$ because the sum of Lipschitz continuous functions is Lipschitz. For $\eif_U$, see that
\begin{align}
    \frac{\partial}{\partial c} \eif_U(Z, \eta, c, \gamma) &= \frac{\partial}{\partial c}  \eif_{\smoothATE}(Z, \eta, c, \gamma) - \frac{\partial}{\partial c}\smooth_{g}(\pscore, 1 - c, \gamma) - \frac{\partial}{\partial c}\dot{\smooth}_{\ell}(\pscore, 1 - c, \gamma) \left( A - \pscore \right).
\end{align}
Again, by assumption the derivatives of $\smooth_{\ell}$ and $\dot{\smooth}_{\ell}$ are bounded. This implies that $\eif_U$ is Lipschitz in $c$; an identical argument shows that $\eif_U$ is Lipschitz in $\gamma$, and can be easily applied to show the same properties hold for $\eif_U$.

\subsection{Proof of Theorem~\ref{thm:bootstrap}}

\begin{proof}
    For this proof we use the same notation as in Algorithm~\ref{alg:bootstrap}.

    \noindent \emph{\textbf{Joint weak convergence:}} \\ 
    By Theorem~\ref{thm:weak-conv}, the estimators satisfy a joint linear expansion; i.e., 
    \[
    \sqrt{n} \begin{pmatrix} \widehat L_1 - L_1 \\ \vdots \\ \widehat L_K - L_K \\ \widehat U_1 - U_1 \\ \vdots \\ \widehat U_K - U_K \end{pmatrix} = n^{-1/2} \sum_{i=1}^{n} \begin{pmatrix} \varphi_{L,1}(Z_i, \eta) - L_1 \\ \vdots \\ \varphi_{L,K}(Z_i, \eta) - L_K \\ \varphi_{U,1}(Z_i, \eta) - U_1  \\ \vdots \\ \varphi_{U,K}(Z_i, \eta) - U_K \end{pmatrix} + \begin{pmatrix} o_{\bbP}(1) \\ \vdots \\ o_{\bbP}(1) \end{pmatrix}.
    \]
    The multivariate CLT yields joint weak convergence to a multivariate Gaussian with covariance matrix $\E \{ \varphi(Z) \varphi(Z)^T \}$ where \\ $\varphi(Z) = \Big( \varphi_{L,1}(Z_i, \eta) - L_1, \dots, \varphi_{L,K}(Z_i, \eta) - L_K, \varphi_{U,1}(Z_i, \eta) - U_1, \dots, \varphi_{U,K}(Z_i, \eta) - U_K \Big)^T$.

    \begin{remark}
        From here, one could use the multivariate Gaussian limit to conduct inference by sampling from $N(0, \E(\varphi \varphi^T) )$. We instead use the multiplier bootstrap.
    \end{remark}

    \noindent \emph{\textbf{Event equivalence:}} \\
     Let $T_{L,k} = \sqrt{n} \left( \frac{\widehat L_k - L_k}{\widehat \sigma_{L,k}} \right)$ and $T_{U,k} = \sqrt{n} \left( \frac{U_k - \widehat U_k}{\widehat \sigma_{U,k}} \right)$. For any fixed $t\ge0$ and each $k$,
    \[
    L_k \geq \widehat L_k - t \tfrac{\widehat\sigma_{L,k}}{\sqrt n} \iff T_{L,k} \leq t,
    \qquad
    U_k \leq \widehat U_k + t \tfrac{\widehat\sigma_{U,k}}{\sqrt n} \iff T_{U,k} \leq t.
    \]
    The inequalities on the left-hand side of each equivalence are reversed because $T_{L,k}$ and $T_{U,k}$ are reversed. Next, notice that the following two events are equivalent:
    \[
    \Big\{\forall k:\ L_k \geq \widehat L_k -t \tfrac{\widehat\sigma_{L,k}}{\sqrt n}\ \ \text{and}\ \
    U_k \le \widehat U_k + t \tfrac{\widehat\sigma_{U,k}}{\sqrt n}\Big\}
    \]
    and
    \[
    \Big\{\,\max_{k}\max\{T_{L,k},T_{U,k} \}\le t\,\Big\}. 
    \]
    Moreover, notice both events also imply a third event. Let $k_L = \argmax_k \widehat L_k - t \widehat \sigma_{L,k} / \sqrt{n}$ and $k_U = \argmin_k \widehat U_k + t \widehat \sigma_{U,k} / \sqrt{n}$. Then, both events above imply
    \[
    \left\{ L_{k_L} \geq \max_k \widehat L_k - t \tfrac{\widehat \sigma_{L,k}}{\sqrt{n}}, U_{k_U} \leq \min_k \widehat U + t \tfrac{\widehat \sigma_{U,k}}{\sqrt{n}} \right\},
    \]
    By Proposition~\ref{prop:smooth-bounds}, $\E(Y^1 - Y^0) \in[L_k,U_k]$ for every $k$; indeed, $\E (Y^1 - Y^0) \in [L_j, U_k]$ for every $k,j$. Therefore, the previous event implies 
    \[
    \left\{ \E(Y^1 - Y^0) \in \left[ \max_k \widehat L_k - t \tfrac{\widehat \sigma_{L,k}}{\sqrt{n}}, \min_k \widehat U_k + t \tfrac{\widehat \sigma_{U,k}}{\sqrt{n}} \right] \right\},
    \]
    Therefore, 
    \begin{align*}
        \bbP \Big\{\max_{k}\max\{T_{L,k}, T_{U,k}\}\le t \Big\} &\leq \bbP \left\{ \forall\ k:\ \E(Y^1-Y^0) \in \Big[ \widehat L_k - t \tfrac{\widehat\sigma_{L,k}}{\sqrt n},\ \widehat U_k + t \tfrac{\widehat\sigma_{U,k}}{\sqrt n} \Big] \right\} \text{ and } \\
        \bbP \Big\{\max_{k}\max\{T_{L,k}, T_{U,k}\}\le t \Big\} &\leq \bbP \left\{ \E(Y^1 - Y^0) \in \left[ \max_k \widehat L_k - t \tfrac{\widehat \sigma_{L,k}}{\sqrt{n}}, \min_k \widehat U + t \tfrac{\widehat \sigma_{U,k}}{\sqrt{n}} \right] \right\} 
    \end{align*}     
    and so a lower bound on the left-hand probability implies a lower bound on the right-hand probabilities.

    \noindent \emph{\textbf{Multiplier bootstrap approximation:}} \\
    Let $\widehat M\ :=\ \max_{k}\max\{T_{L,k}, T_{U,k} \}$. Using the notation from above, the left-hand probability is $\bbP \left( \widehat M \leq t \right)$. We want to show $\bbP \left( \widehat M \leq \widehat q_{1-\alpha} \right) \geq 1- \alpha + o_{\bbP}(1)$.

    By the joint weak convergence result above and the continuous mapping theorem for the max map, \( \widehat M \leadsto M \). Hence,
    \[
    \sup_{t \in \mathbb{R}} \left|  \bbP \left( \widehat M \leq t \right) - \bbP \left( M \leq t \right) \right| = o(1).
    \]
    Meanwhile, by standard analysis of the multiplier bootstrap and the continuity of the max-max transformation with a fixed set of thresholds \citep[Theorem 2.9.6]{vanderVaart&Wellner96},
    \[
    \sup_{t \in \mathbb{R}} \left| \bbP \left( \widehat M^{(b)} \leq t \mid \mathbf{Z}_n \right) - \bbP \left( \widehat M \leq t \right) \right| = o_{\bbP}(1), 
    \]
    where $\mathbf{Z}_n$ denotes the data, so the first probability is conditional on the data.

    Next, let $\widetilde q_{1-\alpha}$ denote the exact bootstrap $1-\alpha$ quantile, so that $\bbP \left( \widehat M^{(b)} \leq \widetilde q_{1-\alpha} \mid \mathbf{Z}_n \right) = 1-\alpha$. By the previous display, 
    \[
    \bbP \left( \widehat M \leq \widetilde q_{1-\alpha} \right) = 1-\alpha + o_{\bbP}(1).
    \]
    Finally, we address the Monte Carlo error. Assuming $B \to \infty$ as $n \to \infty$,
    \[
    \widehat q_{1-\alpha} \stackrel{p}{\to} \widetilde q_{1-\alpha}.
    \]
    Therefore,
    \[
    \bbP \left( \widehat M \leq \widehat q_{1-\alpha} \right) = 1-\alpha + o_{\bbP}(1).
    \]
    The results follow from the definition of $\widehat M$ and the event equivalence argument above.

\end{proof}

\subsection{Proof of Proposition~\ref{prop:bootstrap-uniform}}
\begin{proof}
    We sketch the proof here, as it requires combining previous results with standard arguments for the multiplier bootstrap on Donsker classes. First, by the standard multiplier bootstrap CLT for Donsker classes \citep[Theorem 2.9.6]{vanderVaart&Wellner96}, the conditional law of $(T_L^{(b)}(c, \gamma), T_U^{(b)}(c, \gamma))$ converges to the law of $(\mathbb{G}_L, \mathbb{G}_U)$ in $\ell^\infty(\mathcal{C}\times \Gamma)$, where $\mathbb{G}_L$ and $\mathbb{G}_U$ are the limiting Gaussian processes from Theorem~\ref{thm:weak-convergence-uniform}. Next, by the continuous mapping theorem, the conditional distribution of $M_{\infty}^{(b)} = \sup_{(c, \gamma)} \max(T_L^{(b)}(c,\gamma), -T_U^{(b)}(c,\gamma))$ converges to $\sup_{(c,\gamma)} \max(\mathbb{G}_L(c,\gamma), - \mathbb{G}_U(c, \gamma))$. Finally, the Donsker property of the EIF function class (see Proof of Theorem~\ref{thm:weak-convergence-uniform}) implies asymptotic equicontinuity of the bootstrap processes, which implies that $|\widehat{M}^{(b)} - M_\infty^{(b)} | \xrightarrow{p} 0$. Combining these results, we have that the distribution of $\widehat{M}^{(b)}$ converges to the distribution of $\sup_{(c, \gamma)} \max(\mathbb{G}_L(c, \gamma), -\mathbb{G}_U(c, \gamma))$ as $n, K \to \infty$ with $\delta_K \to 0$. The bootstrap critical values then converge: $\widehat{q}_{1-\alpha}^{(K)} \xrightarrow{p} q_{1-\alpha}^\infty$, where the error from using $B$ bootstrap draws is $o_\bbP(1)$ when $B\to\infty$. 
\end{proof}

\subsection{Proof of Theorem \ref{thm:equiv-tmle}}
\begin{proof}
We begin by proving Part 1 of the theorem, which is the simpler case. Let
\[
E_n =
\left\{
\|\widehat\pscore-\pscore\|_\infty \le \varepsilon
\right\}.
\]
Note that, by assumption, $\mathbb{P}(E_n)\to 1$.  In the event $E_n$, for all $k\in[K]$ and $i\in[n]$, we have:
\begin{align}
    \widehat\pscore(X_i)
    &\ge
    \pscore(X_i)-\varepsilon
    \ge
    \delta-\varepsilon
    \ge
    c_k+\gamma_k.
\end{align}
Likewise,
\begin{align}
    \widehat\pscore(X_i)
    &\le
    \pscore(X_i)+\varepsilon
    \le
    1-\delta+\varepsilon
    \le
    1-c_k-\gamma_k.
\end{align}
Therefore, by \eqref{eq:example-smooth-approximations-lower} and \eqref{eq:example-smooth-approximations-upper},
\begin{align}
    s_g\{\widehat\pscore(X_i),c_k,\gamma_k\}=1,
    \qquad
    \dot s_g\{\widehat\pi(X_i),c_k,\gamma_k\}=0,
    \label{eq:sg-one-2}
\end{align}
and
\begin{align}
    s_\ell\{\widehat\pi(X_i),1-c_k,\gamma_k\}=1,
    \qquad
    \dot s_\ell\{\widehat\pi(X_i),1-c_k,\gamma_k\}=0
    \label{eq:sl-one-2}
\end{align}
for every $i$ and $k$, on the event $E_n$.

Substituting \eqref{eq:sg-one-2}--\eqref{eq:sl-one-2} into the efficient influence functions in Theorem~\ref{thm:eifs} shows that, on $E_n$,
\[
\varphi_L(\,\cdot\,;\eta,c_k,\gamma_k)
=
\varphi_U(\,\cdot\,;\eta,c_k,\gamma_k)
=
\varphi_\psi(\,\cdot\,;\eta)
\qquad\text{for all }k.
\]
This implies that the corresponding estimators and standard error estimators coincide across all $k$ and coincide with the standard ATE estimator:
\[
\widehat L_k=\widehat U_k=\widehat\psi,
\qquad
\widehat\sigma_{L,k}=\widehat\sigma_{U,k}=\widehat\sigma_\psi
\qquad\text{for all }k,
\]
on $E_n$. Hence
\begin{align}
    \underline C_n
    &=
    \widehat\psi-\widehat q_{1-\alpha}\frac{\widehat\sigma_\psi}{\sqrt n},
    \label{eq:lower-collapse}
    \\
    \overline C_n
    &=
    \widehat\psi+\widehat q_{1-\alpha}\frac{\widehat\sigma_\psi}{\sqrt n},
    \label{eq:upper-collapse}
\end{align}
on $E_n$.

We will now show that the multiplier-bootstrap critical value converges to $z_{1-\alpha/2}$. On $E_n$, the bootstrap studentized statistics in Algorithm~\ref{alg:bootstrap} satisfy
\[
T_{L,k}^{(b)}=T_{U,k}^{(b)}\equiv S_n^{(b)}
\qquad\text{for all }k,
\]
where
\[
S_n^{(b)}
=
\frac{1}{\sqrt n}
\sum_{i=1}^n
\xi_i^{(b)}
\left\{
\frac{\widehat\varphi_\psi(Z_i)-\widehat\psi}{\widehat\sigma_\psi}
\right\}.
\]
Therefore the bootstrap max--max statistic reduces to
\[
\widehat M^{(b)}
=
\max_{1\le k\le K}\max(T_{L,k}^{(b)},-T_{U,k}^{(b)})
=
|S_n^{(b)}|.
\]
Condition on the observed data \(Z_1,\dots,Z_n\), and define
\[
R_{ni}
=
\frac{\widehat\varphi_\psi(Z_i)-\widehat\psi}{\widehat\sigma_\psi},
\qquad
a_{ni}
=
\frac{R_{ni}}{\sqrt n},
\qquad
S_n^{(b)}=\sum_{i=1}^n a_{ni}\xi_i^{(b)}.
\]
We now verify a conditional scalar triangular-array CLT. Conditional on the data, the variables $Y_{ni}^{(b)}:=a_{ni}\xi_i^{(b)}$
are independent, mean-zero, and satisfy
\[
\Var\!\left(Y_{ni}^{(b)}\mid Z_1,\dots,Z_n\right)=a_{ni}^2.
\]
Hence
\[
\Var\!\left(S_n^{(b)}\mid Z_1,\dots,Z_n\right)=\sum_{i=1}^n a_{ni}^2.
\]
Moreover,
\[
\sum_{i=1}^n a_{ni}^2
=
\frac{1}{\widehat\sigma_\psi^2}
\frac{1}{n}\sum_{i=1}^n
\left\{\widehat\varphi_\psi(Z_i)-\widehat\psi\right\}^2.
\]
Thus, if \(\widehat\sigma_\psi^2\) is the empirical second moment, then
\(\sum_i a_{ni}^2=1\) exactly; if it is the unbiased sample variance, then
\(\sum_i a_{ni}^2=(n-1)/n\to 1\). Hence
\[
\sum_{i=1}^n a_{ni}^2 \to 1.
\]
Also, it can be verified that
\[
\max_{1\le i\le n}|a_{ni}|=o_{\mathbb P}(1).
\]
Therefore, for every fixed \(\eta>0\),
\begin{align*}
&\sum_{i=1}^n
\mathbb E\!\left[
Y_{ni}^{(b)\,2}\mathbf 1\{|Y_{ni}^{(b)}|>\eta\}
\mid Z_1,\dots,Z_n
\right] \\
&\qquad=
\sum_{i=1}^n
a_{ni}^2
\mathbb E\!\left[
\xi_i^2\mathbf 1\{|\xi_i|>\eta/|a_{ni}|\}
\right] \\
&\qquad\le
\left(\sum_{i=1}^n a_{ni}^2\right)
\sup_{1\le i\le n}
\mathbb E\!\left[
\xi_i^2\mathbf 1\{|\xi_i|>\eta/|a_{ni}|\}
\right]
\to 0,
\end{align*}
because \(\max_i |a_{ni}|\to 0\) and \(\mathbb E(\xi_i^2)<\infty\). Hence the conditional Lindeberg condition holds.

By the Lindeberg--Feller theorem, conditional on the data,
\[
S_n^{(b)}=\sum_{i=1}^n a_{ni}\xi_i^{(b)} \leadsto N(0,1).
\]
Equivalently,
\[
\sup_{t\in\mathbb R}
\left|
\mathbb P\!\left(
S_n^{(b)}\le t \mid Z_1,\dots,Z_n
\right)-\Phi(t)
\right|
\to 0.
\]
Since \(\widehat M^{(b)}=|S_n^{(b)}|\), the continuous mapping theorem yields
\[
\sup_{t\in\mathbb R}
\left|
\mathbb P\!\left(
\widehat M^{(b)}\le t \mid Z_1,\dots,Z_n
\right)
-
\mathbb P\{|N(0,1)|\le t\}
\right|
\to 0.
\]
Therefore, it follows that the exact conditional bootstrap quantile $\widetilde q_{1-\alpha}$ satisfies
\[
\widetilde q_{1-\alpha}\stackrel{p}{\to} z_{1-\alpha/2}.
\]
Subtracting the lower endpoint of $\widehat{CI}_\psi$ from \eqref{eq:lower-collapse}, conditional on $E_n$,
\begin{align}
    \underline C_n-\underline C_{\psi,n}
    &=
    \left\{
    \widehat\psi-\widehat q_{1-\alpha}\frac{\widehat\sigma_\psi}{\sqrt n}
    \right\}
    -
    \left\{
    \widehat\psi-z_{1-\alpha/2}\frac{\widehat\sigma_\psi}{\sqrt n}
    \right\}
    \notag\\
    &=
    \left(
    z_{1-\alpha/2}-\widehat q_{1-\alpha}
    \right)
    \frac{\widehat\sigma_\psi}{\sqrt n}.
\end{align}
Thus
\begin{align}
    \sqrt n\,(\underline C_n-\underline C_{\psi,n})
    =
    \left(
    z_{1-\alpha/2}-\widehat q_{1-\alpha}
    \right)\widehat\sigma_\psi
    \stackrel{p}{\to} 0.
\end{align}
Since $\mathbb P(E_n)\to 1$, the same convergence holds unconditionally. A similar argument shows the result for the upper endpoint of the confidence interval.

Next, we consider Part 2 of the theorem. Divide the threshold-smoothness pairs $\{ (c_k, \gamma_k ) \}_{k=1}^K$ into two disjoint sets: $\mathcal{K}_{\mathsf{in}} = \{ (c_k, \gamma_k) : c_k + \gamma_k \leq \delta - \epsilon \}$ and $\mathcal{K}_{\mathsf{out}} = \{ (c_k, \gamma_k) : c_k + \gamma_k > \delta - \epsilon \}$. By assumption, $\mathcal{K}_{\mathsf{in}}$ is non-empty; that is, there is at least one threshold-smoothness pair that falls below the strong overlap threshold $\delta$. 

Let $\Delta_k = \psi - L_s(c_k, \gamma_k)$. For $k \in \mathcal{K}_{\mathsf{in}}$, $\Delta_k = 0$, and for  $k \in \mathcal{K}_{\mathsf{out}}$, $\Delta_k > 0$. Let $\Delta_{\min} = \min_{k \in \mathcal{K}_{\mathsf{out}}} \Delta_k > 0$. Conditioning on the event $E_n$, for large enough $n$ and with $k \in \mathcal{K}_{\mathsf{out}}$,
\begin{align}
    \hat{L}_k - \hat{q}_{1-\alpha} \frac{\hat{\sigma}_{L,k}}{\sqrt{n}} = L_s(c_k, \gamma_k) + O_P(n^{-1/2}) = \psi - \Delta_k + O_P(n^{-1/2}). 
\end{align}
For $k \in \mathcal{K}_{\mathsf{in}}$, on the other hand,
\begin{align}
    \hat{L}_k - \hat{q}_{1-\alpha} \frac{\hat{\sigma}_{L,k}}{\sqrt{n}} = L_s(c_k, \gamma_k) + O_P(n^{-1/2}) = \psi + O_P(n^{-1/2}). 
\end{align}
For all $k \in \mathcal{K}_{\mathsf{out}}$, $\Delta_k > 0$ and does not decrease with sample size. Therefore, the $k$ that maximizes the lower bound will fall in $\mathcal{K}_{\mathsf{in}}$. An analogous argument holds for the upper bound.

Next, see that the bootstrap max--max statistic satisfies
\begin{align} 
    \widehat M^{(b)} &= \max_{1\le k\le K}\max\left(T_{L,k}^{(b)},-T_{U,k}^{(b)}\right) \\
    &\geq \max_{k \in \mathcal{K}_{\mathsf{in}}} \max\left(T_{L,k}^{(b)},-T_{U,k}^{(b)}\right) \\
    &= |S_n^{(b)}|.
\end{align}
Following the argument of Part 1, this implies that $\hat{q}_{1-\alpha} \geq z_{1-\alpha/2} + o_\bbP(1)$.

These two results taken together imply that the narrowest CI will be formed using point estimates from $\mathcal{K}_{\mathsf{in}}$, which have no identification gap, but the width of the CIs may be slightly larger than the Wald CI because $\hat{q}_{1-\alpha} \geq z_{1-\alpha/2} + o_{\bbP}(1)$. This establishes Part 2 of the theorem.

\end{proof}

\subsection{Proof of Theorem~\ref{thm:somewhat-weak-overlap}}
\begin{proof}
    First, we analyze the width of the point estimator confidence interval in this setting. The ATE efficiency bound is given by
    \begin{align}
        \Var\lcb \eif(Z, \eta) \rcb &= \Var(\mu_1 - \mu_0) + \E\lcb \frac{\Var(Y \mid A, X)}{\pi (1 - \pi)} \rcb.
    \end{align}
    By the boundedness of the outcome, the variance terms are upper bounded by $\frac{1}{4}$. By assumption, the conditional variance of the outcome is lower bounded by $\sigma^2_{\min}$. Taken together, the ATE efficiency bound is bounded by
    \begin{align}
        \E\lcb \frac{\sigma^2_{\min}}{\pscore(1-\pscore)} \rcb \leq \E\lcb \eif(Z, \eta)^2 \rcb \leq \frac{1}{4} + \frac{1}{4} \E\lcb \frac{1}{\pscore(1 - \pscore)} \rcb.
    \end{align}
    An adaption of the the proof of \citealt[Proposition 1]{dorn2025sensitivity} yields an upper bound on $\E\lcb 1 \slash (\pscore(1 - \pscore)) \rcb$:
    \begin{align}
        \E\lcb \frac{1}{\pi(1 - \pi)} \rcb  &= \int_0^{\infty} \bbP\left(\frac{1}{\pscore (1 - \pscore)} > t\right) dt \\
        &= \int_0^{\infty} \bbP\left(\pscore (1 - \pscore) < \tfrac{1}{t} \right) dt  \\
        &= 1 + \int_1^{\infty} \bbP\left(\pscore (1 - \pscore) < \tfrac{1}{t} \right) dt \\
        &\leq 1 + \int_1^{\infty} \bbP\left(\pscore < \tfrac{1}{t} \right) dt + \int_1^\infty \bbP\left(1 - \pscore < \tfrac{1}{t}\right) dt \\
        &\leq 1 + 2 C \int_1^\infty t^{1-\gamma_0} dt \\
        &= 1 + \frac{2C}{\gamma_0 - 2}.
    \end{align}
    Next, a lower bound on $\E\lcb 1 / (\pscore(1 - \pscore)) \rcb$ is given by
    \begin{align}
        \E\lcb \frac{1}{\pscore(1 - \pscore)} \rcb &= \int_0^\infty \bbP\left(\pscore(1- \pscore) < \tfrac{1}{t} \right)dt \\
        &\geq \int_2^\infty \bbP\left(\pscore(1- \pscore) < \tfrac{1}{t} \right)dt \\
        &\geq \int_2^\infty \bbP\left(\pscore < \tfrac{1}{t}\right) dt \\
        &\geq \int_2^\infty C' t^{1 - \gamma_0} dt \\
        &= \frac{C' 2^{2 - \gamma_0}}{\gamma_0 - 2}.
    \end{align}
    Note that a tighter lower bound could be established by using both directions of the assumed tail bounds, but doing so would not change the rate of the lower bound in $\gamma_0$. 
    Therefore, the ATE efficiency bound is bounded by
    \begin{align}
        \sigma^2_{\min} \lcb \frac{C' 2^{2 - \gamma_0}}{\gamma_0 - 2}\rcb \leq \Var\lcb \eif(Z, \eta) \rcb \leq \frac{1}{4} \lcb 2 + \frac{2C}{\gamma_0 - 2} \rcb. 
    \end{align}
    The width of a EIF-based point estimator confidence interval, up to an estimation error term converging to zero in probability, is therefore bounded by
    \begin{align}
        \frac{2q_{1-\alpha/2} \sigma_{\min}}{\sqrt{n}} \times \sqrt{\frac{C' 2^{2 - \gamma_0}}{\gamma_0 - 2}} + o_{\bbP}(1)  \leq \overline{C}_{\psi,n} - \underline{C}_{\psi, n} \leq \frac{q_{1-\alpha/2}}{\sqrt{n}} \times \sqrt{2 + \frac{2C}{\gamma_0 - 2}} + o_{\bbP}(1),
    \end{align}
    which establishes part 1 of the theorem.

    Next, consider the width of the confidence interval for the non-overlap bounds. By Proposition~\ref{prop:confidence-interval-width-bound},
    \begin{align}
        \widehat{W}_s(c, \gamma) &\leq \bbP(\pscore \leq c + \gamma) + \bbP(\pscore \geq 1 - c - \gamma) + 2\frac{q_{1-\alpha/2}}{\sqrt{n}} \sqrt{1 + \frac{1}{2c} + \frac{9D^2}{4\gamma^2}} + o_{\bbP}(1) \\
        &\leq 2C(c+\gamma)^{\gamma_0 - 1} + 2\frac{q_{1-\alpha/2}}{\sqrt{n}} \sqrt{1 + \frac{1}{2c} + \frac{9D^2}{4\gamma^2}} + o_{\bbP}(1).
    \end{align}

    For fixed (small) $\gamma$, we have
    \begin{align}
        \widehat{W}_s(c, \gamma) \lesssim c^{\gamma_0 - 1} + \frac{1}{\sqrt{nc}} + o_{\bbP}(1).
    \end{align}
    Balancing the two terms leads to
    \begin{align}
        c^{\gamma_0 - 1} \sim \frac{1}{\sqrt{nc}} \implies c^{\gamma_0 - 1/2} \sim \frac{1}{\sqrt{n}},
    \end{align}
    which suggests the optimal choice of $c$ is $c^* \sim n^{-1/(2\gamma_0 - 1)}$. Under this choice, the confidence interval width scales as
    \begin{align}
        \widehat{W}_s(c^*, \gamma) \sim n^{-(\gamma_0 - 1) / (2\gamma_0 - 1)},
    \end{align}
    which establishes part 2 of the theorem.
    
    Finally, for part 3, the non-overlap bounds CI will be shorter than the Wald CI if the lower bound on the Wald width exceeds the upper bound on the non-overlap width:
    \begin{align}
        \frac{2q_{1-\alpha/2} \sigma_{\min}}{\sqrt{n}} \times \sqrt{\frac{C'2^{2 - \gamma_0}}{\gamma_0 - 2}} >  2C(c+\gamma)^{\gamma_0 - 1} + 2\frac{q_{1-\alpha/2}}{\sqrt{n}} \sqrt{1 + \frac{1}{2c} + \frac{9D^2}{4\gamma^2}} + o_{\bbP}(1).
    \end{align}

    As $\gamma_0 \to 2^+$, and holding $n$ fixed, the left hand side diverges, while the right-hand side is bounded (because the optimal $c$ as $\gamma_0 \to 2^+$ is $c^* \to n^{-1/3}$, which is bounded from zero when $n$ is fixed, and the $o_\bbP(1)$ terms on both sides are bounded in $\gamma_0$ for fixed $n$). This implies that, for any $n$, there exists a $\gamma_0(n) > 2$ such that the above inequality is satisfied. 
\end{proof}

\section{Additional Simulation Results}
\label{section:additional-results}

\subsection{Simulation Study 1}
Tables~\ref{tab:additional-simulation-study-1-results} and \ref{tab:additional-simulation-study-1-results-power} mirror Tables~\ref{tab:simulation-study-1-results} and \ref{tab:simulation-study-1-results-power} from the main text, but with the simulation DGP overlap parameter $\alpha = 1$ rather than $\alpha = 5$. These results therefore illustrate the performance of the non-overlap bounds in a setting without extreme positivity violations. In this setting, both the non-overlap bounds and the doubly-robust one-step estimator exhibit somewhat better performance at small sample sizes, but as sample size increases the two methods have similar performance in terms of uncertainty interval width, empirical coverage, and power as sample size increases.

\begin{table}
    \centering
    \begin{tabular}{|llrrrrrrrr|}
    \hline
    & & \multicolumn{8}{c|}{95\% UI Width} \\
    & & \multicolumn{2}{c}{Mean} & \multicolumn{2}{c}{Median} & \multicolumn{2}{c}{90\% Quantile} & \multicolumn{2}{c|}{Std. Dev.} \\
    $N$ & $\gamma$ & Bounds & DR & Bounds & DR & Bounds & DR & Bounds & DR \\
    \hline
    100 & $10^{-3}$ & \textbf{0.47} & 0.80 & \textbf{0.45} & 0.47 & \textbf{0.56} & 2.00 & \textbf{0.07} & 0.64\\
     & $10^{-2}$ & \textbf{0.47} & 0.80 & \textbf{0.45} & 0.47 & \textbf{0.56} & 2.00 & \textbf{0.07} & 0.64\\
     & $10^{-1}$ & \textbf{0.46} & 0.80 & \textbf{0.45} & 0.47 & \textbf{0.54} & 2.00 & \textbf{0.06} & 0.64\\
    250 & $10^{-3}$ & \textbf{0.28} & 0.29 & 0.27 & \textbf{0.26} & 0.33 & \textbf{0.31} & \textbf{0.03} & 0.19\\
     & $10^{-2}$ & \textbf{0.28} & 0.29 & 0.27 & \textbf{0.26} & 0.33 & \textbf{0.31} & \textbf{0.04} & 0.19\\
     & $10^{-1}$ & \textbf{0.28} & 0.29 & 0.27 & \textbf{0.26} & 0.32 & \textbf{0.31} & \textbf{0.03} & 0.19\\
    500 & $10^{-3}$ & 0.19 & \textbf{0.18} & 0.19 & \textbf{0.18} & 0.21 & \textbf{0.19} & 0.02 & \textbf{0.01}\\
     & $10^{-2}$ & 0.19 & \textbf{0.18} & 0.19 & \textbf{0.18} & 0.21 & \textbf{0.19} & 0.02 & \textbf{0.01}\\
     & $10^{-1}$ & 0.19 & \textbf{0.18} & 0.19 & \textbf{0.18} & 0.21 & \textbf{0.19} & 0.02 & \textbf{0.01}\\
    1000 & $10^{-3}$ & 0.13 & 0.13 & 0.13 & 0.13 & 0.14 & \textbf{0.13} & 0.01 & \textbf{0.00}\\
     & $10^{-2}$ & 0.13 & 0.13 & 0.13 & 0.13 & 0.14 & \textbf{0.13} & 0.01 & \textbf{0.00}\\
     & $10^{-1}$ & 0.13 & 0.13 & 0.13 & 0.13 & 0.14 & \textbf{0.13} & 0.01 & \textbf{0.00}\\
    \hline
    \end{tabular}
    \caption{Uncertainty width results from Simulation Study 1 (Section~\ref{sec:simulation-study-1}) for simulations generated unconditionally from the simulation DGP with overlap parameter $\alpha = 1$. Results compare the 95\% uncertainty intervals of a doubly-robust one-step estimator (DR) and the tightest uniform 95\% non-overlap bounds (Bounds) across a grid of propensity score thresholds and smoothness parameters in terms of the mean, median, 90\%, quantile, and standard deviation of the  uncertainty interval width.}
    \label{tab:additional-simulation-study-1-results} 
\end{table}

\begin{table}
    \centering
    \begin{tabular}{|llrrrr|}
    \hline
    & & \multicolumn{2}{c}{95\% Coverage} & \multicolumn{2}{c|}{Power} \\
    $N$ & $\gamma$ & Bounds & DR & Bounds & DR \\
    \hline
    100 & $10^{-3}$ & 96.4\% & 96.4\% & \textbf{0.47} & 0.38\\
     & $10^{-2}$ & 96.4\% & 96.4\% & \textbf{0.47} & 0.38\\
     & $10^{-1}$ & 96.7\% & \textbf{96.4\%} & \textbf{0.50} & 0.38\\
    250 & $10^{-3}$ & 95.7\% & \textbf{95.0\%} & \textbf{0.90} & 0.89\\
     & $10^{-2}$ & 95.7\% & \textbf{95.0\%} & \textbf{0.90} & 0.89\\
     & $10^{-1}$ & 95.8\% & \textbf{95.0\%} & \textbf{0.93} & 0.89\\
    500 & $10^{-3}$ & 96.5\% & \textbf{95.6\%} & 1.00 & 1.00\\
     & $10^{-2}$ & 96.5\% & \textbf{95.6\%} & 1.00 & 1.00\\
     & $10^{-1}$ & 96.5\% & \textbf{95.6\%} & 1.00 & 1.00\\
    1000 & $10^{-3}$ & 95.9\% & \textbf{95.0\%} & 1.00 & 1.00\\
     & $10^{-2}$ & 95.9\% & \textbf{95.0\%} & 1.00 & 1.00\\
     & $10^{-1}$ & 95.6\% & \textbf{95.0\%} & 1.00 & 1.00\\
    \hline
    \end{tabular}
    \caption{Empirical coverage and power results from Simulation Study 1 (Section~\ref{sec:simulation-study-1}) for simulations generated unconditionally from the simulation DGP with overlap parameter $\alpha = 1$. Results compare the 95\% uncertainty intervals of a doubly-robust one-step estimator (DR) and the tightest uniform 95\% non-overlap bounds (Bounds) across a grid of propensity score thresholds and smoothness parameters in terms of empirical 95\% coverage and power.}
    \label{tab:additional-simulation-study-1-results-power} 
\end{table}

\subsection{Simulation Study 2}

Figures~\ref{fig:simulation-study-2-results-power} and \ref{fig:simulation-study-2-results-coverage} display the power and empirical 95\% coverage results from Simulation Study 2, mirroring the format of Figure~\ref{fig:simulation-study-2-results} in the main text, which focuses on mean confidence interval width. 

\begin{figure}[ht]
    \centering
    \includegraphics[width=1\linewidth]{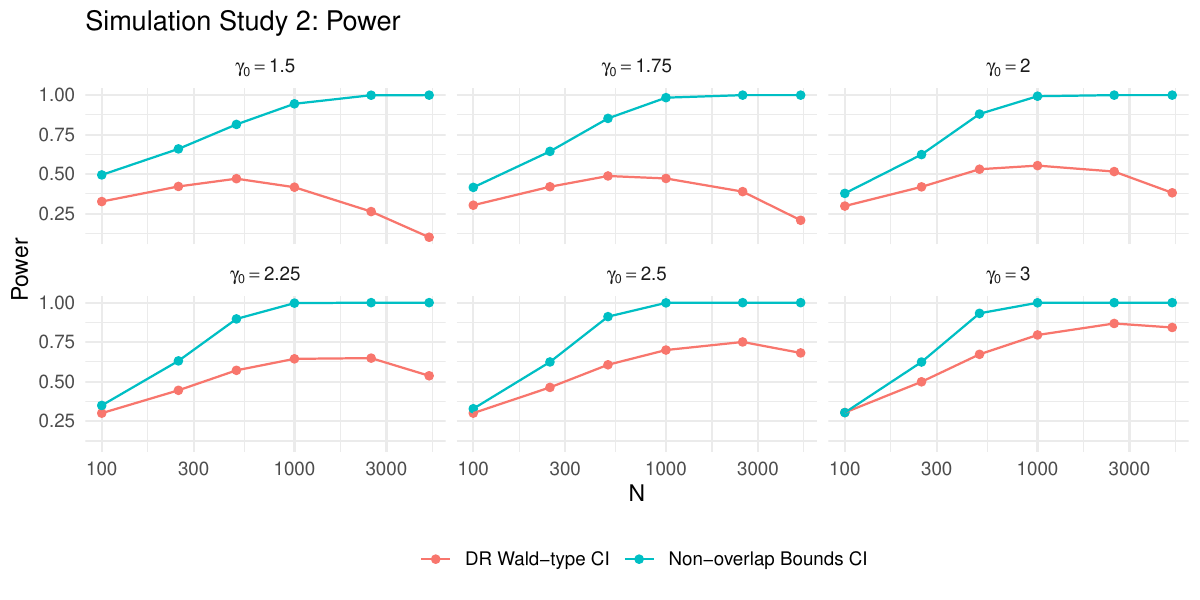}
    \caption{Simulation Study 2. Comparison of uncertainty intervals from a 95\% point estimator confidence interval (CI) centered on a doubly-robust one-step point estimate (red) and the union 95\% confidence interval formed via estimated non-overlap bounds (blue). The $x$-axis is the sample size and the $y$-axis the empirical power (probability of rejecting the null hypothesis that the population ATE is zero). Each panel shows a different setting of $\gamma_0$, the smoothly-varying tails parameter (Assumption~\ref{assumption:slowly-varying-tails}) controlling the overlap regime, with $\gamma_0 = 2$ separating very weak overlap ($\gamma_0 < 2$) from somewhat weak overlap ($\gamma_0 > 2$). }
    \label{fig:simulation-study-2-results-power}
\end{figure}

\begin{figure}[ht]
    \centering
    \includegraphics[width=1\linewidth]{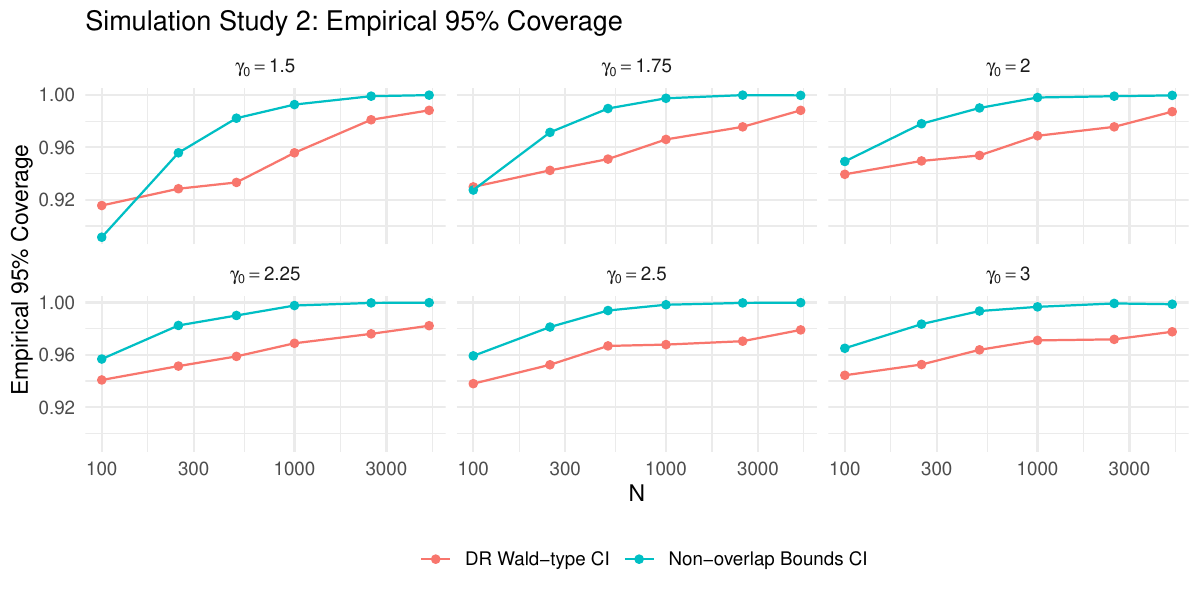}
    \caption{Simulation Study 2. Comparison of uncertainty intervals from a 95\% point estimator confidence interval (CI) centered on a doubly-robust one-step point estimate (red) and the union 95\% confidence interval formed via estimated non-overlap bounds (blue). The $x$-axis is the sample size and the $y$-axis the empirical 95\% coverage. Each panel shows a different setting of $\gamma_0$, the smoothly-varying tails parameter (Assumption~\ref{assumption:slowly-varying-tails}) controlling the overlap regime, with $\gamma_0 = 2$ separating very weak overlap ($\gamma_0 < 2$) from somewhat weak overlap ($\gamma_0 > 2$). }
    \label{fig:simulation-study-2-results-coverage}
\end{figure}

\clearpage

\section{Replication \texttt{R} code for Section~\ref{section:application}}
\label{section:application-code}
The code listing below reproduces the results from the application (Section~\ref{section:application}). Figure~\ref{fig:application-results-alternative-smoothness} shows the non-overlap bounds from the data application with the alternative smoothness parameter $\gamma = 0.001$.

\lstset{frame=tb,
    language=R,
    alsoletter={.},
    basicstyle=\footnotesize\ttfamily,
    breaklines=true,
    showspaces=false,
    showstringspaces=false,
    commentstyle=\color{teal},
    stringstyle=\color{violet}
}

\begin{lstlisting}[language=R]
# Install ATbounds package for data:
# install.packages("ATbounds")
# 
# Install effectbounds package:
# remotes::install_github("herbps10/effectbounds")

library(effectbounds)

set.seed(10016)

data("RHC", package = "ATbounds")

X <- setdiff(colnames(RHC), c("survival", "RHC"))
A <- "RHC"
Y <- "survival"

thresholds <- 10^seq(-5, log10(0.05), 0.2)
smoothness <- c(0.001, 0.01)

bounds <- ate_bounds(
  RHC, X, A, Y, 
  smoothness = smoothness, 
  thresholds = thresholds
)

summary(bounds)

title <- "Population Average Treatment Effect of \nRight Heart Catheterization on 30-day Mortality"
ylim = c(-0.2, 0.10)

pdf("application/rhc_plot.pdf", width = 8, height = 6)
plot(bounds, 
     main = title,
     ylim = ylim,
     bounds_color = "darkred", smoothness = 0.01)
dev.off()

pdf("application/rhc_plot_sensitivity.pdf", width = 8, height = 6)
plot(bounds, 
     main = title,
     ylim = ylim,
     bounds_color = "darkred", smoothness = 0.001)
dev.off()

\end{lstlisting}

\clearpage

\begin{figure}
    \centering
    \includegraphics[width=0.9\linewidth]{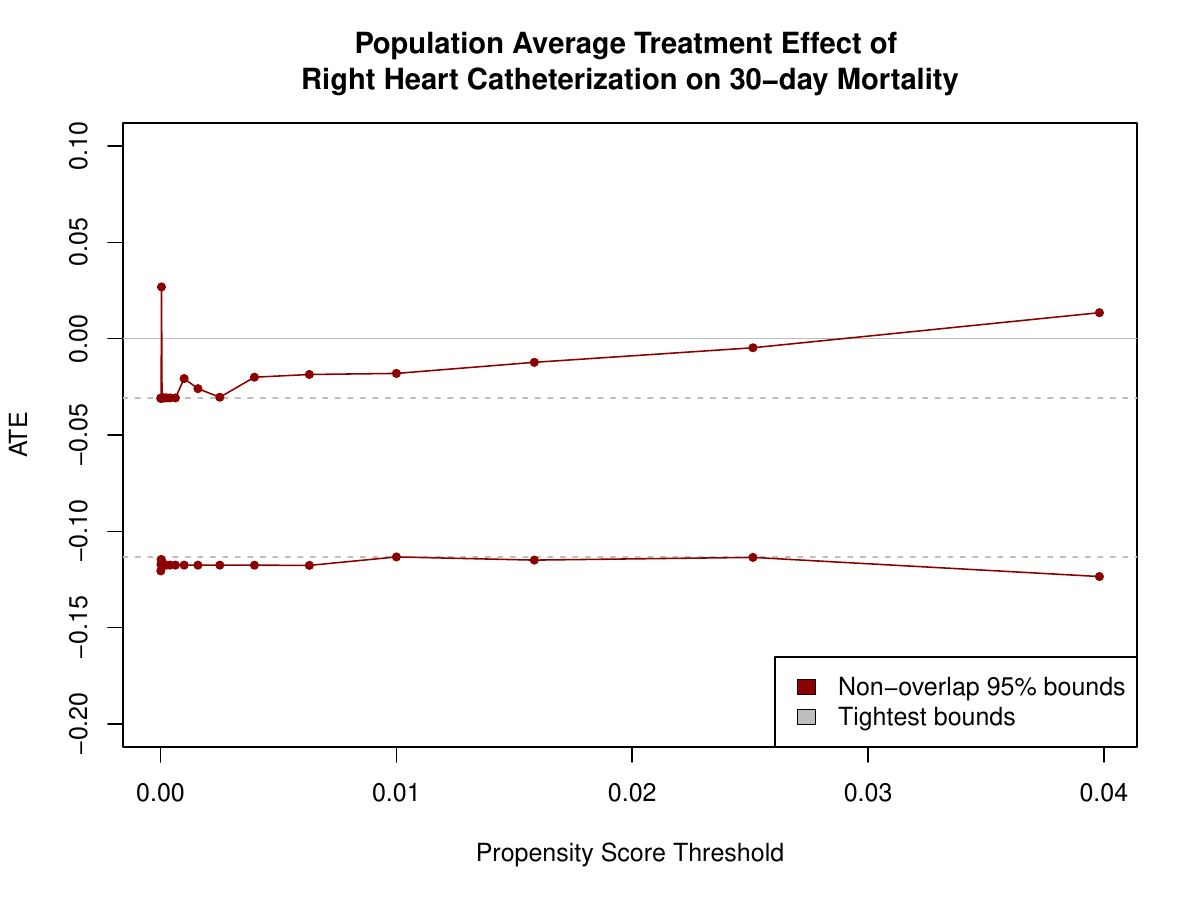}
    \caption{Uniform 95\% non-overlap bounds (for $\gamma = 0.001$) on the average treatment effect (ATE) of right heart catheterization on survival. The points illustrate the lower and upper bounds with respect to a logarithmic grid of propensity score thresholds. The lines between points are solely to guide the eye. The horizontal dotted lines indicate the tightest valid 95\% uncertainty interval that may be formed from the non-overlap bounds.}
    \label{fig:application-results-alternative-smoothness}
\end{figure}

\section{Additional Applications}
\label{appendix:additional-applications}
We applied the proposed non-overlap bounds to six publicly available reference datasets used in causal inference studies. We include the Right Heart Catheterization case study, described in detail in the main text, for comparison. For each dataset, we applied the non-overlap bounds using main-terms generalized linear models for nuisance estimation and with the smoothing parameter $\gamma = 0.001$. As a benchmark, we applied a doubly-robust one-step estimator, again using main-terms generalized linear models for nuisance estimation. Each dataset is described briefly below. All continuous outcomes were scaled to fall in the range $[0, 1]$.
\begin{itemize}
    \item \textbf{LaLonde (Dehejia-Wahba sample).} Treatment: participation in the National Supported Work Demonstration Job Training Program. Outcome: 1978 real earnings (continuous). All other variables in the dataset were treated as covariates. Source: \texttt{nsw\_mixtape} dataset from the \texttt{causaldata} package \citep{causaldata2026}. 
    \item \textbf{National Health and Nutrition Examination Survey: Epidemiologic Follow-up Study (NHEFS)}. Treatment: smoking cessation. Outcome: weight (kg). Covariates: \texttt{sex}, \texttt{age}, \texttt{race}, \texttt{education}, \texttt{ht}, \texttt{smokeintensity}, \texttt{smokeyrs}, \texttt{exercise}, \texttt{active}. Source: \texttt{nhefs} dataset from the \texttt{causaldata} package \citep{causaldata2026}.
    \item \textbf{Student-teacher Achievement Ratio (STAR).} Treatment: small vs. regular class size. Outcome: math examination scores. Covariates: \texttt{gender}, \texttt{ethnicity}, \texttt{birth}, \texttt{lunchk}. Dataset filtered to complete cases. Source: \texttt{STAR} dataset from the \texttt{AER} package \citep{kleiber2008AER}. 
    \item \textbf{Black Politicians.} Treatment: legislator race (black/non-black, \texttt{leg\_black}). Outcome: legislator responded to email. All other variables in the dataset were treated as covariates. Source: \texttt{black\_politicians} dataset from the \texttt{causaldata} package \citep{causaldata2026}.
    \item \textbf{Right Heart Catheterization (RHC).} See description in main text.
    \item \textbf{Crown Court Sentencing.} Treatment: sex (male vs. non-male). Outcome: binary indicator of whether individual was taken into custody. All other variables in the dataset were treated as covariates. Source: \texttt{ccdrug} dataset from the \texttt{causaldata} package \citep{causaldata2026}. 
\end{itemize}
Results of the analyses are shown in Table~\ref{tab:additional-example-applications}, comparing the uncertainty interval width of the non-overlap bounds and point estimator confidence intervals from the one-step estimator. Figure~\ref{fig:application-non-overlap-size} compares the non-overlap subpopulation sizes with respect to the propensity score threshold $c$ in each dataset. The results illustrate the overlap regimes analyzed in Section~\ref{sec:comparison-wald}. In two datasets where overlap is well-satisfied (LaLonde, NHEFS), the non-overlap bounds and point estimator confidence intervals have similar widths, consistent with the asymptotic equivalence result established in Theorem~\ref{thm:equiv-tmle}. In three datasets with severe overlap violations (STAR, RHC, Crown Court Sentencing), the point estimator confidence intervals span the entire parameter space while the non-overlap bounds remain informative, illustrating the regime of Theorem~\ref{thm:very-weak-overlap}. In one dataset (Black Politicians), the non-overlap bound confidence intervals are wider than the point estimator confidence interval, consistent with the finite-sample cost identified in Theorem~\ref{thm:equiv-tmle}, when the non-overlap population is relatively large. 

\begin{table}[ht]
    \centering
    \begin{tabular}{lrrrrrrrr}
             &             &            &               & \multicolumn{2}{c}{Non-overlap CI} & \multicolumn{2}{c}{Doubly-robust CI} \\
        Name & $n$ & $p$ & $\bbP(A = 1)$ & CI & Width & CI & Width \\
        \hline
        LaLonde                        & 445   & 8   & 42\% & $(0.01,0.05)$    & 0.04 & $(0.01, 0.05)$   & 0.05  \\
        NHEFS                          & 1566  & 10  & 26\% & $(0.02,0.05)$    & 0.02 & $(0.03,0.05)$    & 0.02 \\
        STAR       & 3781  & 4   & 46\% & $(0.01,0.03)$    & 0.02 & $(-1.00,1.00)$   & 2.00 \\
        Black Politicians              & 5593  & 12  & 7\%  & $(-0.34,0.40)$   & 0.74 & $(-0.39, 0.27)$ & 0.64 \\
        RHC    & 5735  & 72  & 38\% & $(-0.11, -0.03)$ & 0.08 & $(-1.00,1.00)$   & 2.00 \\
        Crown Court Sentencing         & 16973 & 43  & 93\% & $(0.09-0.15)$    & 0.06 & $(-1.00,1.00)$   & 2.00 \\
        \hline
    \end{tabular}
    \caption{Comparison of confidence interval widths for the ATE from non-overlap bounds and a doubly-robust estimator on six observational datasets. Sample sizes ($n$), number of covariates ($p$), and marginal probability of treatment ($\bbP(A = 1)$) are reported for each dataset. }
    \label{tab:additional-example-applications}
\end{table}

\begin{figure}[ht]
    \centering
    \includegraphics[width=0.9\linewidth]{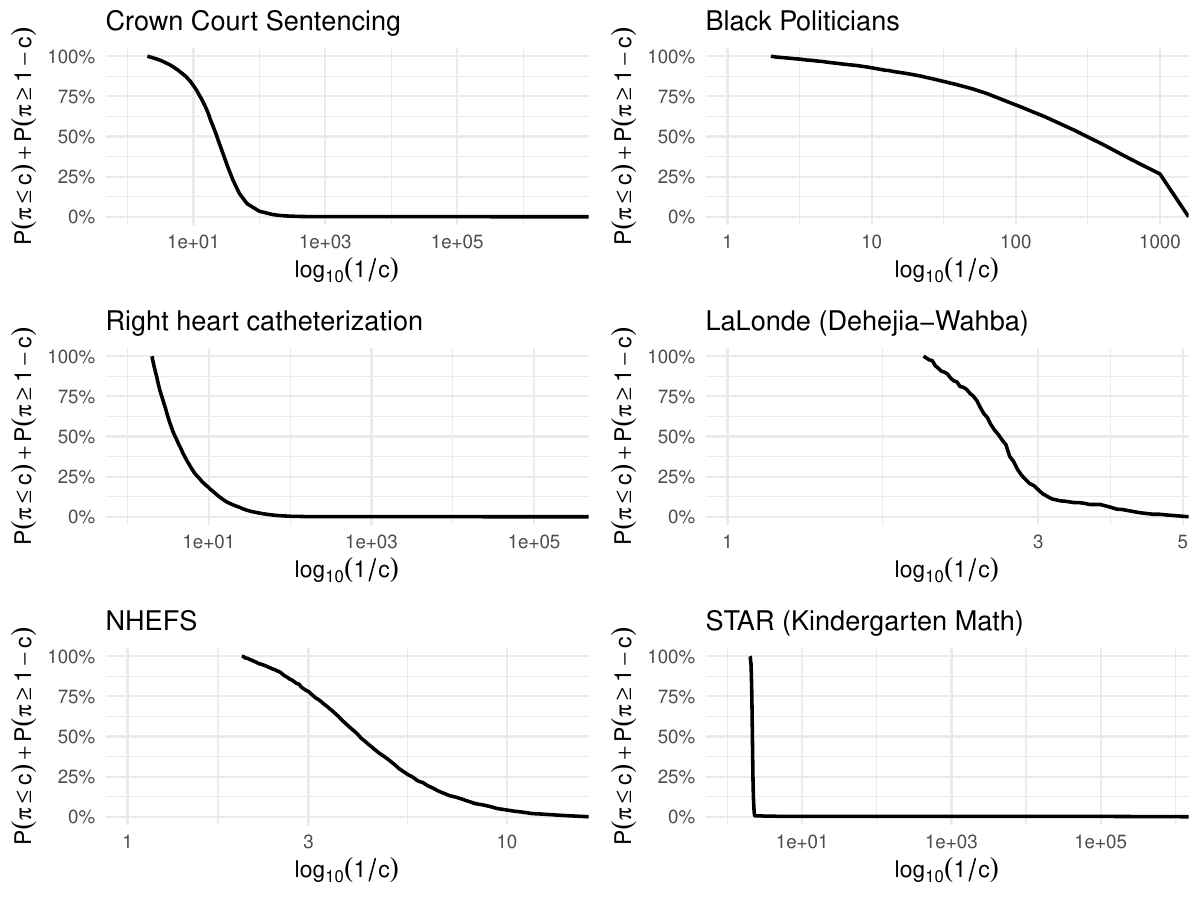}
    \caption{Non-overlap subpopulation size ($\bbP(\pscore \leq c) + \bbP(\pscore \geq 1 - c)$} versus $\log_{10}(1/c)$ for a range of propensity score thresholds $c$ in each application dataset.
    \label{fig:application-non-overlap-size}
\end{figure}

\section{Sensitivity Analysis Extension}
\label{appendix:sensitivity-analysis}

A natural extension to the proposed worst-case bounds (Proposition~\ref{prop:bounds}) is to conduct a sensitivity analysis that considers bounds if the expected counterfactual outcomes in the non-overlap bounds do not take their worst-case values. Specifically, we assume the following sensitivity model:

\medskip

\begin{assumption}[Sensitivity model]
\label{assumption:sensitivity}
Suppose Assumption~\ref{assumption:bounded} holds.
Fix sensitivity parameters $\delta_0 = [\delta_{0,L}, \delta_{0,U}] \subset [0, 1]$ and $\delta_1 = [\delta_{1,L}, \delta_{1,U}] \subset [0, 1]$, and assume that
\begin{align}
    \E ( Y^0 \mid \pi \geq 1 - c )  \in \delta_0 \quad \text { and } \quad \E ( Y^1 \mid \pscore \leq c ) \in \delta_1.
\end{align}
\end{assumption}

Based on the sensitivity model, a similar analysis as in Proposition~\ref{prop:bounds} yields lower and upper ATE bounds given by
\begin{align}
    L(c, \delta_0, \delta_1) &= \psi(c) + \delta_{1,L} \times  \bbP(\pscore \leq c) - \delta_{0,U} \times \bbP(\pscore \geq 1 - c), \\
    U(c, \delta_0, \delta_1) &= \psi(c) + \delta_{1,U} \times  \bbP(\pscore \leq c) - \delta_{0,L} \times \bbP(\pscore \geq 1 - c).
\end{align}
These bounds straightforwardly generalize the worst-case analysis of Proposition~\ref{prop:bounds}, the results of which can be recovered by setting $\delta_0 = \delta_1 = [0, 1]$.
For the lower sensitivity bound to be positive (indicating that the population ATE is positive), the sensitivity parameters must satisfy either
\begin{align}
    \delta_{1,L} > \frac{\delta_{0,U} \times \bbP(\pscore \geq 1 - c) -\psi(c)}{\bbP(\pscore \leq c)} \quad \text{or} \quad \delta_{0,U} < \frac{\delta_{1,L} \times  \bbP(\pscore \leq c) + \psi(c)}{\bbP(\pscore \geq 1 - c)}.
\end{align}
Similarly, for the upper sensitivity bound to be negative (indicating the population ATE is negative), the sensitivity parameters must satisfy either
\begin{align}
    \delta_{1,U} < \frac{\delta_{0,L} \times \bbP(\pscore \geq 1 - c) -\psi(c)}{ \bbP(\pscore \leq c)} \quad \text{or} \quad  \delta_{0,L} > \frac{\delta_{1,U} \times  \bbP(\pscore \leq c) + \psi(c)}{\bbP(\pscore \geq 1 - c)}.
\end{align}
This allows for a fine-grained sensitivity analysis of which conditions on the expected potential outcomes in the non-overlap regions yield non-zero population treatment effects. This approach is in the spirit of \cite{cornfield1959cancer}, who asked how strong an unmeasured confounder would need to be to explain an estimated average treatment effect adjusted for known confounders.

\medskip 

The above conditions can be easily reparameterized as conditions on the \textit{treatment effect} in a non-overlap region. For example, note that $\E\left\{ Y^1 - Y^0 \mid \pscore \leq c \right\} \geq \delta_{1,L} - \E\left\{ Y^0 \mid \pscore \leq c \right\}$ (where the latter expectation satisfies overlap by definition). On this scale, the sensitivity parameter is the size of the treatment effect in the non-overlap subpopulation (that is, the subpopulation satisfying $\pscore \leq c$).

\medskip

Estimators for the sensitivity bounds can be constructed using smooth approximations in a similar manner as for the worst-case bounds. Here, more care is required in constructing the smooth approximations of the bounds such that they maintain the property of containing the true bounds regardless the choice of smoothing parameter $\gamma$. The derivations for the smooth approximations and the TMLE estimators for the sensitivity parameters are presented below.

\subsubsection*{Smooth sensitivity bounds}
The construction of smooth approximations for the sensitivity bounds requires smooth approximations that satisfy a complementary property to Property~\ref{req:smooth-approach}:
\begin{property}
    \label{req:smooth-approach-prime} 
    For smooth approximations $\smooth_\ell$ and $\smooth_g$, it holds that $\smooth_{\ell}(x, c, \gamma) \geq \I(x < c)$ and $\smooth_g(x, c, \gamma) \geq \I(x > c)$.
\end{property}
Intuitively, Property~\ref{req:smooth-approach} requires that the smooth approximations approach the indicator functions ``from below'' (see Figure~\ref{fig:smooth-approximation}), while Property~\ref{req:smooth-approach-prime} requires that the approximations approach ``from above''. 

Pick $s_\ell$ and $s_g$ to satisfy Property~\ref{req:smooth-approach}, and $s_\ell^*$ and $s_g^*$ satisfy Property~\ref{req:smooth-approach-prime}. The lower and upper smooth sensitivity bounds are given by
\begin{align}
    \label{eq:smooth-sensitivity-bound-lower}
    L_s(c, \gamma, \delta_0, \delta_1) &=  \psi_s(c, \gamma) + \delta_{1,L}\times\left[ 1 - \E\left\{ s_g^*(\pscore, c, \gamma) \right\} \right]  - \delta_{0, U} \times \textcolor{purple}{\left[ 1 - \E\left\{ s_\ell(\pscore, 1 - c, \gamma) \right\} \right]} \\
    \label{eq:smooth-sensitivity-bound-upper}
    U_s(c, \gamma, \delta_0, \delta_1) &= \psi_s(c, \gamma) + \delta_{1, U} \times \textcolor{purple}{\left[ 1 - \E\left\{ s_g(\pscore, c, \gamma) \right\} \right]} - \delta_{0,L} \times \left[1 - \E\left\{ s_\ell^*(\pscore, 1 - c, \gamma) \right\} \right],
\end{align}
where $\psi_s(c, \gamma)$ is defined in \eqref{eq:smooth-ate}. Note that if $\delta_0 = \delta_1 = [0, 1]$, the smooth sensitivity bounds reduce to the worst-case smooth non-overlap bounds \eqref{eq:smooth-lower-bound} and \eqref{eq:smooth-upper-bound}. In the above display, the components of the smooth sensitivity bounds that arise in the worst-case smooth non-overlap bounds are highlighted.

The next proposition generalizes Proposition~\ref{prop:smooth-bounds} to the smooth sensitivity bounds.
\begin{proposition}[Smooth sensitivity bounds] \label{prop:smooth-bounds-prime}
    Under the sensitivity model (Assumption~\ref{assumption:sensitivity}) and the conditions of Proposition~\ref{prop:bounds}, suppose $s_{\ell}(x, c, \gamma)$ and $s_g(x, c, \gamma)$ satisfy Property~\ref{req:smooth-approach} and $s_{\ell}^*(x, c, \gamma)$ and $s_g^*(x, c, \gamma)$ satisfy Property~\ref{req:smooth-approach-prime}. Then,
    \begin{align}
        \E \left( Y^1 - Y^0 \right) \in \Big[ \smoothL(c, \gamma, \delta_0, \delta_1), \smoothU(c, \gamma, \delta_0, \delta_1) \Big],
    \end{align}
    where $\smoothL(c, \gamma, \delta_0, \delta_1)$ and $\smoothU(c, \gamma, \delta_0, \delta_1)$ are defined in \eqref{eq:smooth-sensitivity-bound-lower} and \eqref{eq:smooth-sensitivity-bound-upper}, respectively.
\end{proposition}
\begin{proof}
    We show that $L(c, \delta_0, \delta_1) \geq L_s(c, \gamma, \delta_0, \delta_1)$; similar steps can be taken to show that $U(c) \leq U_s(c, \gamma, \delta_0, \delta_1)$. If $\delta_{1,L} = 0$, the analysis in the proof of Proposition~\ref{prop:smooth-bounds} can be applied to show that 
    \begin{align}
        L(c, \delta_0, \delta_1) &= \psi(c) - \delta_{0,U} \times \bbP(\pscore \geq 1 - c) \\
        &\geq \psi_s(c, \gamma) - \delta_{0, U} \times \left[ 1 - \E\left\{ s_\ell(\pscore, 1 - c, \gamma) \right\} \right]. 
    \end{align}
    By Property~\ref{req:smooth-approach-prime}, $\E\lcb s_{g}^*(\pscore, c, \gamma) \rcb \geq \bbP(x > c)$, implying that 
    \begin{align}
        \delta_{1,L} \times \bbP(\pscore \leq c) \geq \delta_{1,L} \times \left[1 - \E\lcb \smooth_{g}^*(\pscore, c, \gamma) \rcb \right].
    \end{align}
    Therefore,
    \begin{align}
        L(c, \delta_0, \delta_1) &= \psi(c) + \delta_{1,L} \times  \bbP(\pscore \leq c) - \delta_{0,U} \times \bbP(\pscore \geq 1 - c) \\
        &\geq \psi_s(c, \gamma) + \delta_{1,L} \times \left[1 - \E\lcb s_{g}^*(\pscore, c, \gamma) \rcb \right] - \delta_{0, U} \times \left[ 1 - \E\left\{ s_\ell(\pscore, 1 - c, \gamma) \right\} \right] \equiv L_s(c, \gamma, \delta_0, \delta_1).
    \end{align}
\end{proof}
For the lower smooth sensitivity bound to be positive (and, by Proposition~\ref{prop:smooth-bounds-prime}, for the lower sensitivity bound to be likewise positive), it must hold that either:
\begin{align}
    \label{conditions:smooth-lower-positive}
    \delta_{1,L} > \frac{\delta_{0,U} \times \left[ 1 - \E\lcb \smooth_\ell(\pscore, 1 - c, \gamma) \rcb \right] -\psi_s(c, \gamma)}{1 - \E\lcb \smooth_g^*(\pscore, c, \gamma) \rcb} \quad \text{or} \quad \delta_{0,U} < \frac{\delta_{1,L} \times \left[ 1 - \E\lcb \smooth_g^*(\pscore, c, \gamma) \rcb \right] + \psi_s(c, \gamma)}{1 - \E\lcb \smooth_\ell(\pscore, 1 - c, \gamma) \rcb}. 
\end{align}
For the upper smooth sensitivity bound to be negative, it must hold that either:
\begin{align}
    \label{conditions:smooth-upper-negative}
    \delta_{1,U} < \frac{\delta_{0,L} \times \left[1 - \E\left\{ s_\ell^*(\pscore, 1 - c, \gamma) \right\} \right] -\psi_s(c, \gamma)}{ 1 - \E\left\{ s_g(\pscore, c, \gamma) \right\} } \quad \text{or} \quad  \delta_{0,L} > \frac{\delta_{1,U} \times  \left[ 1 - \E\left\{ s_g(\pscore, c, \gamma) \right\} \right] + \psi_s(c, \gamma)}{1 - \E\left\{ s_\ell^*(\pscore, 1 - c, \gamma) \right\}}. 
\end{align}

\subsubsection*{Efficient influence functions}
The following lemma establishes the EIFs for the lower and upper smooth sensitivity bounds.
\begin{lemma}[Efficient influence functions smooth sensitivity bounds.]
    \label{lemma:eif-sensitivity-bounds}
    Fix $c \in \left[0, \tfrac{1}{2} \right]$. The parameters $\smoothL(c, \gamma, \delta_0, \delta_1)$ and $\smoothU(c, \gamma, \delta_0, \delta_1)$ are pathwise differentiable with uncentered efficient influence function given by
    \begin{align}
        \eif_{\smoothL}(Z, \delta_0, \delta_1) &= \eif_{\smoothATE}(Z) +\delta_{1,L}\times\left[ 1 - \E\left\{ s_g^*(\pscore, c, \gamma) \right\} \right] - \delta_{0, U} \times \left[1 - \smooth_{\ell}(\pscore, 1 - c, \gamma)\right] \\
        & \quad + \lcb \delta_{0,U} \times \dot{\smooth}_\ell(\pscore, 1 - c, \gamma) - \delta_{1,L} \times \dot{s}_g(\pscore, c, \gamma) \rcb \lcb A - \pscore \rcb \\
        \eif_{\smoothU}(Z, \delta_0, \delta_1) &= \eif_{\smoothATE}(Z) + \delta_{1,U} \times \left[ 1 - \smooth_g(\pscore, c, \gamma) \right] - \delta_{0,L} \times \left[1 - \E\left\{ s_\ell^*(\pscore, 1 - c, \gamma) \right\} \right] \\
        & \quad + \lcb \delta_{0,L} \times \dot{\smooth}_{\ell}(\pscore, 1 - c, \gamma) - \delta_{1,U} \times \dot{\smooth}_g(\pscore, c, \gamma) \rcb \lcb A - \pscore \rcb.
    \end{align}
\end{lemma}
\begin{proof}
    The proof follows using similar techniques as in the proof of Lemma~\ref{lemma:eif-upper}.
\end{proof}

\subsubsection*{TMLE}
The TMLE for the sensitivity bounds is a minor modification of the TMLE for the worst-case non-overlap bounds. The only difference is in the fluctuation model, where we now set
\begin{align}
    M &= \delta_{0,U} \times \dot{\smooth}_\ell(\pscore, 1 - c, \gamma) - \delta_{1,L} \times \dot{s}_g(\pscore, c, \gamma) \\
    \text{ or } M &= \delta_{0,L} \times \dot{\smooth}_{\ell}(\pscore, 1 - c, \gamma) - \delta_{1,U} \times \dot{\smooth}_g(\pscore, c, \gamma) 
\end{align}
when targeting the lower and upper smooth sensitivity bounds, respectively. 

Point estimates can by plugging in estimates to \eqref{conditions:smooth-lower-positive} and  \eqref{conditions:smooth-upper-negative}, with variance estimates derived via the Delta method. 

\end{appendix}

\end{document}